\documentclass[11pt,reqno]{amsart}
\usepackage{amsfonts,amsmath,amssymb}
\pagestyle{plain}

\newtheorem{thm}{Theorem}[section]
\newtheorem{proposition}[thm]{Proposition}
\newtheorem{corollary}[thm]{Corollary}
\newtheorem{lemma}[thm]{Lemma}

\newtheorem{remark}[thm]{Remark}

\usepackage{color}
\usepackage{graphicx}
\usepackage{epstopdf}

\title[Jack--Laurent symmetric functions]{Jack--Laurent symmetric functions}
\author{ A.N. Sergeev}\address{Department of Mathematics, Saratov State University, Astrakhanskaya 83, Saratov, 410012, Russia and National Research University Higher School of Economics, Russia}
\email{SergeevAN@info.sgu.ru}

\author{A.P. Veselov}
\address{Department of Mathematical Sciences,
Loughborough University, Loughborough LE11 3TU, UK  and Moscow State University, Moscow 119899, Russia}
\email{A.P.Veselov@lboro.ac.uk}

\begin{document}

\maketitle

\begin{abstract} We develop the general theory of Jack--Laurent symmetric functions, which are certain generalisations of the Jack symmetric functions, depending on an additional parameter $p_0$. 
\end{abstract}
\tableofcontents

\section{Introduction} 

In the late 1960s Henry Jack \cite{J1, J2} introduced certain symmetric polynomials $Z(\lambda, \alpha)$ depending on a partition $\lambda$ and an additional parameter $\alpha,$ which are now known as {\it Jack polynomials}.
When $\alpha=1$ they reduce to the classical {\it Schur polynomials}, so the Jack polynomials can be considered as a one-parameter generalisation of Schur polynomials, whose theory goes back to Jacobi and Frobenius.  When $\alpha=2$ they are naturally related to zonal spherical functions on the symmetric spaces $U(n)/O(n)$, which was the main initial motivation for Jack. The theory of Jack polynomials was further developed by Stanley \cite{Stanley} and by Macdonald, who also extended them to the symmetric polynomials depending on two parameters, nowadays named after him \cite{Ma}.

Approximately at the same time Calogero \cite{C} and Sutherland \cite{S} initiated the theory of quantum integrable models, describing the interaction particles on the line, which in the classical case were studied by Moser \cite{Moser}. 
Although it was not recognised at the time Jack polynomials can be defined as symmetric polynomial eigenfunctions of (properly gauged) version  ${\mathcal L}_{k, N}$ of the Calogero--Moser--Sutherland (CMS) operator
$$L_k^{(N)} = -\sum_{i=1}^N
\frac{\partial^2}{\partial
z_{i}^2}+\sum_{i<j}^N \frac{2k(k+1)}{\sinh
^2(z_{i}-z_{j})},$$
which in the exponential coordinates 
$x_i = e^{2z_i}$ has the form
\begin{equation}
\label{CM}
 {\mathcal L}_{k, N}=\sum_{i=1}^N
\left(x_{i}\frac{\partial}{\partial
x_{i}}\right)^2-k\sum_{ i < j}^N
\frac{x_{i}+x_{j}}{x_{i}-x_{j}}\left(
x_{i}\frac{\partial}{\partial x_{i}}-
x_{j}\frac{\partial}{\partial
x_{j}}\right)
\end{equation}
where the parameter $k$ is related to Jack's $\alpha$ by $k=-1/\alpha$.

A remarkable property of Jack polynomials is the {\it stability}, which corresponds to the fact that the dependence  ${\mathcal L}_{k, N}$ of on the dimension $N$ can be eliminated by adding a multiple of the momentum (which is an integral of the system): the operators
$$
\tilde {\mathcal L}_{k, N}={\mathcal L}_{k, N}+k(N-1)\sum_{i=1}^N
x_{i}\frac{\partial}{\partial x_{i}} = \sum_{i=1}^N
\left(x_{i}\frac{\partial}{\partial x_{i}}\right)^2-2k\sum_{i=1}^N\left(\sum_{
j \ne i} \frac{x_{i}x_{j}}{x_{i}-x_{j}}\right) \frac{\partial}{\partial
x_{i}}
$$
are stable in the sense that they commute with the natural homomorphisms
$\phi_{M,N}: \Lambda_M \rightarrow \Lambda_N,$ sending $x_i$ with $i>N$ to zero, where 
$\Lambda_{N} = \mathbb C[x_{1},\dots, x_{N}]^{S_N}$ is the algebra of symmetric polynomials.

This allows one to define the Jack symmetric functions $P^{(k)}_{\lambda}$ as elements of $\Lambda$ defined as the inverse limit of $\Lambda_N$ in the category of graded algebras (see \cite{Ma}). The corresponding infinite-dimensional version of the CMS operator has the following explicit form in power sums $p_{a}=x_1^a+x_2^a+\dots, \, a \in \mathbb N$ (see \cite{Stanley, Awata}):
 \begin{equation}\label{infinity}
\tilde{ \mathcal L}_{k}=\sum_{a,b>0}p_{a+b}\partial_{a}\partial_{b}-k\sum_{a,b>0}p_{a}p_b \partial_{a+b}
 +\sum_{a>0}(a+ak-k)p_a\partial_a,
\end{equation}
where $\partial_a = a\frac{\partial}{\partial p_a}.$ Some new explicit formulas for the higher order CMS integrals at infinity were recently found by Nazarov and Sklyanin in \cite{NS1, NS2}.

In the present paper we define and study a Laurent version of Jack symmetric functions - {\it Jack--Laurent symmetric functions} and the corresponding infinite-dimensional Laurent analogue of the CMS operator acting on the algebra $ \Lambda^{\pm}$ freely generated by $p_a$ with $a \in \mathbb Z \setminus \{0\}$ being both positive and negative.
The variable $p_0$ plays a special role and will be considered as an {\it additional parameter}.

The idea to consider the Laurent polynomial eigenfunctions of CMS operator (\ref{CM}) is quite natural and was proposed already by Sutherland in \cite{Suth}. The corresponding Laurent polynomials were later discussed in more details by Sogo \cite{Sogo1, Sogo2, Sogo3}. However, as it was pointed out by Forrester in his MathSciNet review of the paper \cite{Sogo1}, in finite dimension it does not have much sense since the corresponding Laurent polynomials always can be reduced to the usual Jack polynomials simply by multiplication by a suitable power of the determinant $\Delta=x_1\dots x_N.$

In the infinite-dimensional case one can not do this since the infinite product $x_1x_2\dots$ does not belong to $\Lambda.$
Moreover, in the Laurent case {\it there is no stability} (at least in the same sense as above, since one can not set $x_i$ to zero), so the corresponding Jack--Laurent symmetric functions essentially depend on both $k$ and additional parameter $p_0$, which can be viewed as "dimension". Such a parameter appeared already in Jack's paper \cite{J2} as $S_0$ (see page 9 there) and Sogo's papers, but its importance was probably first became clear after the work of Rains \cite{Rains}, who considered $BC$-case (see also \cite{SV3} and \cite{SV5}).

Our main motivation for studying the Jack--Laurent symmetric functions came from the representation theory of Lie superalgebra $\mathfrak{gl}(m,n)$ and related spherical functions, where these functions play an important role. We will discuss this in a separate publication.

The structure of the paper is as follows. In section 2 we introduce the infinite-dimensional Laurent version of the CMS operator 
$$
{ \mathcal L}_{k,p_0}=\sum_{a,b\in \mathbb Z}p_{a+b}\partial_{a}\partial_{b}-k[\sum_{a,b>0}p_{a}p_b \partial_{a+b} -\sum_{a,b<0}p_{a}p_b \partial_{a+b}]
$$
\begin{equation}\label{infinityL}
 - kp_0 [\sum_{a>0} p_{a} \partial_{a}-\sum_{a<0} p_{a} \partial_{a}] +(1+k)\sum_{a\in\mathbb Z}a p_a\partial_a,
\end{equation}
depending on an additional parameter $p_0$, as well as its quantum integrals, acting on $\Lambda^{\pm}.$ Our approach is based on an infinite-dimensional version of Dunkl operator \cite{SV7} and is different from that of \cite{NS1, NS2} (see although the discussion of possible relations in \cite{SV7}). 

 In section 3 we consider the Jack--Laurent polynomials $P^{(k)}_{\chi}(x_1,\dots,x_N)$, which are elements of $\Lambda^{\pm}_{N}=\Bbb C[x_{1}^{\pm1},\dots, x_{N}^{\pm1}]^{S_N}$ parametrized by non-increasing sequences of integers $\chi=(\chi_1, \dots, \chi_N).$ We study their properties, which essentially follow from the usual case. 
 
 In section 4 we define our main object - Jack--Laurent symmetric functions
 $P^{(k,p_0)}_{\alpha}\in \Lambda^{\pm}$, rationally depending on the parameters $k$ and $p_0$ and labelled by bipartitions $\alpha=(\lambda,\mu)$, which are pairs of the usual partitions $\lambda$ and $\mu.$ The defining property is that their images under natural homomorphisms $\varphi_N:  \Lambda^{\pm} \rightarrow \Lambda^{\pm}_{N}$ give the corresponding Jack--Laurent polynomials. 
 An alternative construction of Jack--Laurent symmetric functions, using the monomial symmetric functions, was proposed in \cite{SV5}. 
 We prove the existence of $P^{(k,p_0)}_{\alpha}$ for all $k \notin \mathbb Q$ and $kp_0\neq n+km, \, m,n \in \mathbb Z_{>0}.$ The usual Jack symmetric functions are particular cases corresponding to empty second partition $\mu$:  $P^{(k,p_0)}_{\lambda,\emptyset}=P^{(k)}_{\lambda}\in \Lambda \subset \Lambda^{\pm}.$
 The simplest Laurent example corresponding to two one-box Young diagrams is given by
 $$P^{(k,p_0)}_{1,1}= p_1 p_{-1} - \frac{p_0}{1+k-kp_0}.$$

In sections 5-8 we study the Laurent analogues of Harish-Chandra homomorphism, Pieri and evaluation formulas and compute the square norms of $P^{(k,p_0)}_{\alpha}$ for the corresponding symmetric bilinear form on $\Lambda^{\pm}.$

Section 9 is devoted to an important special case $k=-1,$ corresponding to {\it Schur--Laurent symmetric functions.}
We show that the limit $S_{\lambda,\mu}$ of Jack--Laurent symmetric functions $P^{(k,p_0)}_{\lambda,\mu}$ when $k\rightarrow -1$ for generic $p_0$ does exist, does not depend on  $p_0$ and can be given by an analogue of Jacobi-Trudy formula. The related symmetric Laurent polynomials (called sometimes symmetric Schur polynomials indexed by a composite partition $s_{\bar \mu, \lambda}(x)$) and their supersymmetric versions play an important role in representation theory of Lie superalgebra $\mathfrak{gl}(m,n)$ (see \cite{CK, Moens, CW}).
 
 In the last section we discuss some conjectures and open problems.

\section{Laurent version of CMS operators in infinite dimension}

The finite dimensional CMS operators (\ref{CM}) preserve the algebra of symmetric Laurent polynomials 
$$\Lambda^{\pm}_{N}=\Bbb C[x_{1}^{\pm1},\dots, x_{N}^{\pm1}]^{S_N},$$
generated (not freely) by $p_j(x) = x_1^j + \dots + x_N^j, \, j \in \mathbb Z.$ 

Let us define its infinite-dimensional version - {\it the algebra of Laurent symmetric functions} $\Lambda^{\pm}$ as the commutative algebra with the free generators $p_{i},\, i\in \mathbb Z\setminus \{0\}.$ The dimension $p_0=1+1+\dots +1=N$ does not make sense in infinite-dimensional case, so we will add it as an {\it additional formal parameter,} which will play a very essential role in what will follow.

$\Lambda^{\pm}$ has a natural $\mathbb Z$-grading, where the degree of $p_i$ is $i.$ There is a natural involution $*:\Lambda^{\pm} \rightarrow \Lambda^{\pm}$ defined by 
\begin{equation}\label{inv*}
p_i^* = p_{-i}, \quad i\in \mathbb Z \setminus \{0\}.
\end{equation} 
This algebra can be also represented as
$ \Lambda^{\pm} = \Lambda^+ \otimes \Lambda^{-},$
where $\Lambda^+$ is generated by $p_i$ with positive $i$ and $\Lambda^-$ by $p_i$ with negative $i.$ Note that the involution $*$ swaps $\Lambda^+$ and $\Lambda^-$.

For every natural $N$ there is a homomorphism $\varphi_N:  \Lambda^{\pm} \rightarrow \Lambda^{\pm}_{N}:$
\begin{equation}\label{phin}
\varphi_N (p_j)= x_1^j + \dots + x_N^j, \, j \in \mathbb Z.
\end{equation}
The involution $*$ under this homomorphism goes to the natural involution on $\Lambda^{\pm}_{N}$ mapping $x_i$ to $x_i^{-1}.$

Define also the following algebra homomorphism $\theta: \Lambda^{\pm}\rightarrow \Lambda^{\pm}$ by 
\begin{equation}\label{thet}
\theta(p_a)=k  p_a, \quad a \in \mathbb Z\setminus \{0\}
\end{equation}
(cf. \cite{Ma}, formula (10.6)). If we change also $$k \rightarrow k^{-1}, p_0 \rightarrow kp_0$$ then this map becomes an involution.

Now we are going to construct  explicitly the infinite dimensional version of CMS operator and  higher integrals. Our main tool is an infinite dimensional version of the Dunkl-Heckman operator \cite{H}.

 Let us remind that the {\it Dunkl-Heckman operator} for the root system of the type  $A_n$ has the form
 \begin{equation}
  \label{heckdun}
 D_{i,N}=\partial_i-\frac{k}{2}\sum_{j\ne i}\frac{x_i+x_j}{x_i-x_j}(1-s_{ij}), \quad \partial_i=x_i\frac{\partial}{\partial x_i},
 \end{equation}
 where $s_{ij}$ is a transposition, acting on the functions by permuting the coordinates $x_i$ and $x_j.$
 Heckman proved \cite{H} that the differential operators 
  \begin{equation}
  \label{heckcm}
 \mathcal L^{(r)}_{k,N}=Res \,(D_{1,N}^r+\dots+D^r_{N,N}),
\end{equation}
where $Res$ means the operation of restriction on the space of symmetric polynomials, commute and give 
the integrals for the quantum CMS system with the Hamiltonian $\mathcal H_N= \mathcal L^{(2)}_{k,N}:$
\begin{equation}
\label{CMtrig}
 {\mathcal H}_{N}=\sum_{i=1}^N
\left(x_{i}\frac{\partial}{\partial
x_{i}}\right)^2-k\sum_{ i < j}^N
\frac{x_{i}+x_{j}}{x_{i}-x_{j}}\left(
x_{i}\frac{\partial}{\partial x_{i}}-
x_{j}\frac{\partial}{\partial
x_{j}}\right).
\end{equation}

 We have the following simple, but important Lemma.

 \begin{lemma} The operator  $D_{i,N}$ maps the algebra $\Lambda^{\pm}_{N}[x^{\pm1}_i]$ into itself.
 \end{lemma}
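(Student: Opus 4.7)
The plan is to identify $\Lambda^{\pm}_N[x_i^{\pm 1}]$ with a ring of partial invariants and then check each summand of $D_{i,N}$ separately. First I would verify that $\Lambda^{\pm}_N[x_i^{\pm 1}]$ equals the subalgebra of $\mathbb{C}[x_1^{\pm 1},\dots,x_N^{\pm 1}]$ consisting of Laurent polynomials invariant under the stabilizer $S_{N-1}\subset S_N$ of the index $i$. The inclusion $\subseteq$ is clear since each generator is $S_{N-1}$-invariant. For the reverse inclusion, any $S_{N-1}$-invariant Laurent polynomial can be written (after multiplying by a high enough power of $\prod_{l\ne i} x_l$ to reduce to the polynomial setting and applying the fundamental theorem on symmetric functions in $N-1$ variables) as a polynomial in $x_i^{\pm 1}$ and the power sums $p_j(x_1,\dots,\widehat{x_i},\dots,x_N)=p_j(x_1,\dots,x_N)-x_i^j$, and hence lies in $\Lambda^{\pm}_N[x_i^{\pm 1}]$.

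Given this characterization, it suffices to show that $D_{i,N}$ preserves $S_{N-1}$-invariance. The derivation $\partial_i=x_i\partial/\partial x_i$ commutes with every transposition $s_{jl}$ with $j,l\ne i$, so it certainly preserves $S_{N-1}$-invariants. For the second term I would use two observations. First, for any Laurent polynomial $g$ the element $(1-s_{ij})g$ vanishes on the hyperplane $x_i=x_j$ and is therefore divisible by $x_i-x_j$ in $\mathbb{C}[x_1^{\pm 1},\dots,x_N^{\pm 1}]$; multiplying by a sufficiently large power of $x_ix_j$ reduces this to the standard divisibility in the polynomial ring and then one divides back. Consequently each expression $\frac{x_i+x_j}{x_i-x_j}(1-s_{ij})f$ is a genuine Laurent polynomial, and so is the full sum.

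Second, to see that this sum is $S_{N-1}$-invariant, pick $\sigma\in S_{N-1}$: then $\sigma s_{ij}\sigma^{-1}=s_{i\,\sigma(j)}$ because $\sigma(i)=i$, while $\sigma$ carries $\frac{x_i+x_j}{x_i-x_j}$ to $\frac{x_i+x_{\sigma(j)}}{x_i-x_{\sigma(j)}}$ by the same rule. Since $\sigma f=f$, applying $\sigma$ to $\sum_{j\ne i}\frac{x_i+x_j}{x_i-x_j}(1-s_{ij})f$ merely reindexes the summation via $l=\sigma(j)$ and recovers the original expression. Combined with the preservation by $\partial_i$, this gives $D_{i,N}f\in\Lambda^{\pm}_N[x_i^{\pm 1}]$ and proves the lemma. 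I do not anticipate a serious obstacle here; the only subtlety is that the divisibility in the second step must be carried out in the Laurent ring rather than the polynomial ring, which is handled by the clearing-denominators trick noted above.
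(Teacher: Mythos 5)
Your proof is correct, but it takes a genuinely different route from the paper's. You identify $\Lambda^{\pm}_{N}[x_i^{\pm1}]$ with the ring of $S_{N-1}$-invariant Laurent polynomials (for the stabilizer of $i$) and then check that $D_{i,N}$ preserves this invariance, using the divisibility of $(1-s_{ij})g$ by $x_i-x_j$ to see that each summand is a genuine Laurent polynomial and a conjugation/reindexing argument for the equivariance of the sum; all of these steps are sound, including the clearing-of-denominators needed to pass between the Laurent and polynomial settings. The paper instead exploits the $\Lambda^{\pm}_N$-linearity of $\Delta_{i,N}=\sum_{j\ne i}\frac{x_i+x_j}{x_i-x_j}(1-s_{ij})$ (it annihilates symmetric functions) together with the antisymmetry $\Delta_{i,N}(f^*)=-(\Delta_{i,N}(f))^*$ to reduce everything to the single monomial $x_i^l$ with $l>0$, for which it computes the explicit telescoping identity $\Delta_{i,N}(x_i^l)=Nx_i^l+2x_i^{l-1}p_1+\dots+2x_ip_{l-1}+p_l-2lx_i^l$. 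Your argument is more conceptual and would apply to any operator of this Dunkl type without computation; what the paper's computation buys, however, is precisely the formula that is reused in the next step — the infinite-dimensional operator $\Delta_{p_0}$ is \emph{defined} by replacing $N$ with $p_0$ in that identity — so your proof establishes the lemma but would not by itself supply the data the rest of the section is built on.
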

\begin{proof} For the operator $\partial_i=x_i\frac{\partial}{\partial x_i}$ this is obvious since 
$\partial_i(p_l)=lx_i^l$. The operator
 \begin{equation}\label{heckdel}
\Delta_{i,N}=\sum_{j\ne i}\frac{x_i+x_j}{x_i-x_j}(1-s_{ij})
\end{equation}
acts trivially on the algebra $\Lambda^{\pm}_{N}$ and has the property
 \begin{equation}\label{heck*}
\Delta_{i,N}(f^*)=-(\Delta_{i,N}(f))^*.
\end{equation}
Therefore it is enough to prove that  $\Delta_{i,N}(x_i^l)\in \Lambda^{\pm}_{N}[x^{\pm1}_i]$ for $l>0$, which follows from the identity
$$
\Delta_{i,N}(x_i^l)=\sum_{j\ne i}\frac{x_i+x_j}{x_i-x_j}(1-s_{ij})(x_i^l)=\sum_{j\ne i}\frac{x_i+x_j}{x_i-x_j}(x_i^l-x^l_j)
$$
\begin{equation}
 \label{identity}
=x_i^lN+2x_i^{l-1}p_1+\dots+ 2x_ip_{l-1}+p_l-2lx_i^l.
\end{equation}
\end{proof} 

Let $\Lambda^{\pm}[x,x^{-1}]$ be the algebra of Laurent polynomials in  $x$  with coefficients from  $\Lambda^{\pm}$.  Define the differentiation  $\partial$ in  $\Lambda^{\pm}[x,x^{-1}]$ by the formulae
$$
\partial(x)=x, \,\, \partial (p_l)=lx^{l}, 
$$
and the operator $\Delta_{p_0}: \Lambda^{\pm}[x,x^{-1}] \rightarrow \Lambda^{\pm}[x,x^{-1}]$ by
$$
\Delta_{p_0}(x^lf)=\Delta_{p_0}(x^l)f,\,  \Delta_{p_0}(1)=0, \,\,\,f\in \Lambda^{\pm} ,\,l\in \Bbb Z
$$
and
$$
\Delta_{p_0}(x^l)=x^lp_0+2x^{l-1}p_1+\dots+ 2xp_{l-1}+p_l-2lx^l,\,\,  l>0,
$$
$$
 \Delta_{p_0}(x^l)=-(\Delta_{p_0} (x^{-l}))^*, \,\,\,l<0,
$$
where we set $x^*=x^{-1}.$ 
 
Define the {\it infinite dimensional analogue of the Dunkl-Heckman operator} $D_{k,p_0}: \Lambda^{\pm}[x,x^{-1}]\longrightarrow \Lambda^{\pm}[x,x^{-1}]$  by 
\begin{equation}\label{HECK}
D_{k,p_0}=\partial-\frac12k\Delta_{p_0}.
\end{equation}

Let $\varphi_{i,N} : \Lambda^{\pm}[x,x^{-1}]\longrightarrow \Lambda^{\pm}_{N}$ be the homomorphism such that
$$
\varphi_{i,N}(x)=x_i,\,\varphi_{i,N}(p_l)=x_1^l+\dots+x_N^l, \, l \in \mathbb Z
$$ 
and set $p_0=N$. We claim  that  the following diagram
\begin{equation}
\label{comdia9}
\begin{array}{ccc}
\Lambda^{\pm}[x,x^{-1}]&\stackrel{D_{k,p_0}}{\longrightarrow}&\Lambda^{\pm}[x,x^{-1}]\\ \downarrow
\lefteqn{\varphi_{i,N}}& &\downarrow \lefteqn{\varphi_{i,N}}\\
\Lambda^{\pm}_{N}[x_i,x_i^{-1}]&\stackrel{D_{i,N}}{\longrightarrow}& 
\Lambda^{\pm}_{N}[x_i,x_i^{-1}],\\
\end{array}
\end{equation}
 where $D_{i,N}$ are Dunkl-Heckman operators (\ref{heckdun}), is commutative.
This follows from the relations $$
\varphi_{i,N}\circ\partial(x)=\partial_i\circ\varphi_{i,N}(x)=x_i,\,\,\, \varphi_{i,N}\circ\partial(p_l)=\partial_i\circ\varphi_{i,N}(p_l)=l x_i^l,
$$
and
$$
\varphi_{i,N}\circ\Delta_{p_0}(x^lf)=\Delta_{i,N}\circ\varphi_{i,N}(x^lf)
$$
for $p_0=N$ and any $f\in\Lambda^{\pm}$.

 Introduce now a linear operator $E_{p_0} : \Lambda^{\pm}[x,x^{-1}]\longrightarrow \Lambda^{\pm}$  by the formula 
 \begin{equation}\label{E}
  E_{p_0}(x^lf)=p_lf,\, f\in\Lambda,\,\,l\in \Bbb Z
 \end{equation}
and the operators $\mathcal {L}^{(r)}_{k,p_0}: \Lambda^{\pm}\longrightarrow \Lambda^{\pm},\,\, r \in \mathbb Z_+$  by
 \begin{equation}\label{Lr}
 \mathcal{L}^{(r)}_{k,p_0}=E_{p_0}\circ D_{k,p_0}^r,
 \end{equation}
 where the action of the right hand side is restricted to $\Lambda^{\pm}.$

We claim that these operators give a {\it Laurent version of quantum CMS integrals at infinity}.
More precisely, we have the following result.

\begin{thm}\label{Heck} 
The operator $\mathcal{L}^{(r)}_{k,p_0}$ is a differential  operator of order $r$ with polynomial dependence  on $p_0$ and the following properties:
\begin{equation}\label{sym12}
\theta^{-1} \circ \mathcal L^{(r)}_{k,p_0} \circ \theta = k^{r-1} \mathcal L^{(r)}_{k^{-1}, kp_0},
 \end{equation}
\begin{equation}\label{sym13}
(\mathcal L^{(r)}_{k,p_0})^* = (-1)^r\mathcal L^{(r)}_{k,p_0}
\end{equation}
where $\theta$ is defined by (\ref{thet}).
The operator $\mathcal L^{(2)}_{k,p_0}$  is the Laurent version of the CMS operator at infinity given by formula (\ref{infinityL}).

The operators $\mathcal{L}^{(r)}_{k,p_0}$ commute with each other:
$[\mathcal{L}^{(r)}_{k,p_0}, \mathcal{L}^{(s)}_{k,p_0}]=0.$
\end{thm}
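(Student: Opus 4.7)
The proof rests on the commutative diagram (\ref{comdia9}) and its iteration. My first move would be to establish, by induction on $r$, the bridge identity
$$
\varphi_N\circ\mathcal L^{(r)}_{k,p_0}\big|_{p_0=N}=\mathcal L^{(r)}_{k,N}\circ\varphi_N,\qquad N\ge 1,
$$
using (\ref{comdia9}) together with the compatibility $\varphi_N\circ E_{p_0}\big|_{p_0=N}=\sum_{i=1}^N\varphi_{i,N}$ built into the definition (\ref{E}) of $E_{p_0}$. This identity is the bridge that transports Heckman's finite-dimensional result to the infinite-dimensional setting.

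The structural claims and the two symmetries follow by inspection of $D_{k,p_0}=\partial-\tfrac12 k\Delta_{p_0}$. Since $\partial$ is a derivation in the generators $p_a$ while $\Delta_{p_0}$ is $\Lambda^{\pm}$-linear and at most linear in $p_0$, the operator $D_{k,p_0}$ is first order in the $p_a$-derivatives with linear $p_0$-dependence; composition with $E_{p_0}$ then makes $\mathcal L^{(r)}_{k,p_0}$ a differential operator of order $r$ with polynomial dependence on $p_0$. For (\ref{sym13}) one checks on generators that $*\,\partial\,*=-\partial$ (using $\partial(x)=x$, $\partial(p_l)=lx^l$ together with the derivation property applied to $x^*=x^{-1}$), while $*\,\Delta_{p_0}\,*=-\Delta_{p_0}$ is built into the definition of $\Delta_{p_0}$; since $E_{p_0}$ commutes with $*$, the sign $(-1)^r$ comes out from the $r$-fold composition. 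For (\ref{sym12}) an analogous direct check gives $\theta^{-1}D_{k,p_0}\theta=kD_{k^{-1},kp_0}$ and $\theta^{-1}E_{p_0}\theta=k^{-1}E_{kp_0}$, whose composition yields the prefactor $k^{r-1}$. The identification of $\mathcal L^{(2)}_{k,p_0}$ with (\ref{infinityL}) is then a direct expansion: compute $D_{k,p_0}^2(p_m)$, sort the result into its $\partial\partial$, $\partial\Delta$ and $\Delta\Delta$ contributions, apply $E_{p_0}$, and match coefficients against (\ref{infinityL}) acting on $p_m$.

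The main obstacle is the commutativity $[\mathcal L^{(r)}_{k,p_0},\mathcal L^{(s)}_{k,p_0}]=0$, which is not visible from the definition. My plan is to bootstrap from Heckman's theorem via the bridge identity:
$$
\varphi_N\circ[\mathcal L^{(r)}_{k,p_0},\mathcal L^{(s)}_{k,p_0}]\big|_{p_0=N}=[\mathcal L^{(r)}_{k,N},\mathcal L^{(s)}_{k,N}]\circ\varphi_N=0
$$
for every $N\ge 1$. The delicate step is concluding that the commutator vanishes as an operator on $\Lambda^{\pm}$, not merely after restriction to the image of each $\varphi_N$. For any fixed $f\in\Lambda^{\pm}$, the element $g:=[\mathcal L^{(r)}_{k,p_0},\mathcal L^{(s)}_{k,p_0}](f)$ lies in a finitely generated subalgebra $\mathbb C[p_{\pm 1},\dots,p_{\pm M}][p_0]$ and, by the display above, $\varphi_N(g\big|_{p_0=N})=0$ for every $N\ge 1$. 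I would conclude via a separation lemma: for $N\ge 2M$ a generic-Jacobian argument shows that $\varphi_N(p_{\pm 1}),\dots,\varphi_N(p_{\pm M})$ are algebraically independent in $\Lambda^{\pm}_N$, so $g\big|_{p_0=N}=0$ as a polynomial for all large $N$; since $g$ is itself polynomial in $p_0$ and vanishes at infinitely many values of $p_0$, each of its coefficients must be zero, hence $g=0$. This separation step is where the Laurent setting is genuinely subtler than the classical polynomial one, and verifying the algebraic independence above cleanly is the main technical hurdle I foresee.
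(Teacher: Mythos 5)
Your proposal is correct and follows essentially the same route as the paper: the symmetries are checked on the building blocks $\partial$, $\Delta_{p_0}$, $E_{p_0}$ exactly as in the text, and the commutativity is bootstrapped from Heckman's finite-dimensional theorem via the commutative diagram with $\varphi_N$ together with the same separation lemma (the paper's Lemma \ref{edin}, proved by the identical algebraic-independence argument for $N>2M$ and vanishing at infinitely many values of $p_0$). The only cosmetic difference is that the paper packages the order-$r$ claim as the identity $ad(f)^{r+1}(E\circ D_{k,p_0}^r)=E\circ ad(f)^{r+1}(\partial^r)=0$, which is just the formal version of your ``composition of first-order operators'' observation.
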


\begin{proof} Consider $f\in \Lambda^{\pm}.$ Since $E_{p_0}$ and $\Delta_{p_0}$ commute with multiplication by $f$, we have 
$$
ad(f)^{r+1}(E\circ D_{k,p_0}^r)=E\circ ad(f)^{r+1}(D_{k,p_0}^r)
$$
and therefore
$$
 ad(f)^{r+1}(D_{k,p_0}^r)= ad(f)^{r+1}(\partial^r)=0,
$$
which shows that $\mathcal{L}^{(r)}_{k,p_0}$ is a differential  operator of order $r$.
The formulae (\ref{sym12}), (\ref{sym13}) follow from the symmetries
$$
D_{k,p_0}^*=-D_{k,p_0},\,\, E_{p_0}^*=E_{p_0},\, 
$$
$$
\theta^{-1}E_{p_0}\theta=k^{-1}E_{kp_0},\,\,\,\, \, \theta^{-1}D_{k,p_0}\theta=kD_{k^{-1}, kp_0}
$$
where the action of  $\theta$ is  extended to $\Lambda^{\pm}[x,x^{-1}]$ by $\theta(x)=x.$
The explicit form (\ref{infinityL}) easily follows from a direct calculation.

To prove the commutativity of the integrals note that from (\ref{comdia9}) it follows that
 the  diagram 
 \begin{equation}
 \label{commdi1}
\begin{array}{ccc}
\Lambda^{\pm}&\stackrel{E\circ D_{k, p_0}^r}{\longrightarrow}&\Lambda^{\pm}\\ \downarrow
\lefteqn{\varphi_{N}}& &\downarrow \lefteqn{\varphi_{N}}\\
\Lambda^{\pm}_N&\stackrel{\mathcal L^{(r)}_{k,N}}{\longrightarrow}& 
\Lambda^{\pm}_N, \\
\end{array}
\end{equation}
is commutative, where $\mathcal L^{(r)}_{k,N}$ are the CMS integrals given by Heckman's construction (\ref{heckcm}) and the homomorphism $\varphi_{N}:\Lambda^{\pm} \rightarrow \Lambda^{\pm}_N$ is defined by 
\begin{equation}\label{varphin}
\varphi_N(p_l)=x_1^l+\dots+x_N^l, \, l \in \mathbb Z.
 \end{equation}
 Indeed, for any $f\in \Lambda^{\pm}$ we have 
 $
 D_{k, p_0}^r(f)=\sum_{l}x^lg_l,\,\,\, g_l\in\Lambda^{\pm},
 $
 where the sum is finite. We have
 $
 D_{i,N}^r\circ\varphi_N(f)=\varphi_{i,N}\circ D_{k, p_0}^r(f)=\sum_{l}x_i^l\varphi_N(g_l),
 $
 so
 $$
 \sum_{i=1}^ND_{i,N}^r\circ\varphi_N(f)=\sum_{i=1}^N\sum_{l}x_i^l\varphi_N(g_l)=\sum_{l}\varphi_{N}(p_l)\varphi_N(g_l)=\varphi_N(E(D_{k, p_0}^r(f))),
 $$
 which proves the commutativity of the diagram.
This implies that 
$$
\varphi_N([\mathcal L_{k, p_0}^{(r)},\mathcal L_{k, p_0}^{(s)}](f))=[\mathcal L_{k,N}^{(r)},\mathcal L_{k,N}^{(s)}](\varphi_N(f))=0
$$
since the integrals (\ref{heckcm}) commute \cite{H}. Now the commutativity of the operators $\mathcal{L}^{(r)}_{k,p_0}$
follows from the following useful lemma.

\begin{lemma} 
\label{edin} Let $g$ be an element of $\Lambda^{\pm}$ polynomially depending on $p_0.$ If $\varphi_N(g)=0$ for all $N,$ then $g =0.$
 \end{lemma}

{\it Proof of lemma.} By definition $g$ is a polynomial in a finite number of generators $p_r,\, 0<|r| \le M$ for some $M$ with coefficients polynomially depending on $p_0$. Take $N$ larger than $2M$. Since the corresponding $\varphi_N(p_{r})$ with $0< |r| \le M$ are algebraically independent and $\varphi_N(g)=0$, all the coefficients of $g$ are zero at $p_0=N$. Since this is true for all $N>2M$ the coefficients must be identically zero, and therefore $g=0.$
\end{proof}

\section{Jack--Laurent symmetric polynomials}

As we have already mentioned above the Laurent polynomial eigenfunctions for CMS operators were considered already by Sutherland in \cite{Suth} and later  in more details by Sogo \cite{Sogo1, Sogo2, Sogo3}, who  parametrized these eigenfunctions by the so-called extended Young diagrams, when the negative entries are also allowed. 
Alternatively, one can use two Young diagrams, corresponding to positive and negative parts.  
However, in finite dimension one can always reduce them to the usual Jack polynomials simply by multiplication by a suitable power of the determinant $\Delta=x_1\dots x_N$
(see e.g. Forrester's comment in his MathSciNet review of the paper \cite{Sogo1}).

Let $\chi=(\chi_1, \dots, \chi_N)$ be non-increasing sequence of integers $\chi_1 \ge \chi_2 \ge \dots \ge \chi_N.$
Let $a \in \mathbb Z$ be such that $\nu=\chi +a$ is a partition, which means that $\nu_i=\chi_i+a \ge 0$ for all $i=1,\dots, N.$  
Define the corresponding {\it Jack--Laurent symmetric polynomial} $P^{(k)}_{\chi} \in \Lambda^{\pm}_N$ by 
 \begin{equation}\label{jackchi}
P^{(k)}_{\chi}(x_1,\dots, x_N):=(x_1\dots x_N)^{-a} P^{(k)}_{\nu}(x_1,\dots, x_N),
\end{equation}
where $P^{(k)}_{\nu}(x_1,\dots, x_N)$ are the usual Jack polynomials \cite{Ma}.
It is well-defined because of the well-known property of Jack polynomials
 \begin{equation}\label{jackst}
P^{(k)}_{\nu+b}(x_1,\dots, x_N)=(x_1\dots x_N)^{b} P^{(k)}_{\nu}(x_1,\dots, x_N)
 \end{equation}
 for all $b\ge 0$
 (see e.g. \cite{Stanley}).

There exists a natural involution $*$  on the algebra $\Lambda_N^{\pm}$
$$
x_i^*=x_i^{-1},\,i=1,\dots, N.
$$
The following lemma shows how this involution acts on the Jack--Laurent symmetric polynomials.
 \begin{lemma} For any non-increasing sequence of integers $\chi$
 \begin{equation}\label{w1}
P^{(k)}_{\chi}(x_1,\dots,x_N)^*=P^{(k)}_{w(\chi)}(x_1,\dots,x_N),\,\,\, 
 \end{equation}
 where $w$ is the following involution
  \begin{equation}\label{w2}
w(\chi)=(-\chi_N,\dots,-\chi_1).
 \end{equation}
\end{lemma}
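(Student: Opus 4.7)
The plan is to reduce (\ref{w1}) to a classical duality property of ordinary Jack polynomials and then verify that property using their characterization as triangular eigenfunctions of the CMS operator. Unpacking both sides via the definition (\ref{jackchi}), I would choose $a = -\chi_N$ and $b = \chi_1$, so that $\nu = \chi + (a,\ldots,a)$ and $\mu = w(\chi) + (b,\ldots,b)$ are both partitions with smallest part $0$, and a direct index computation gives $\mu_i = \nu_1 - \nu_{N+1-i}$. Since $*$ sends $(x_1\cdots x_N)^{-a}$ to $(x_1\cdots x_N)^a$, the desired identity reduces to
$$
(x_1 \cdots x_N)^{\nu_1}\, P^{(k)}_{\nu}(x_1^{-1}, \ldots, x_N^{-1}) \;=\; P^{(k)}_{\mu}(x_1, \ldots, x_N),
$$
whose left-hand side contains no negative powers precisely because $\nu_1$ is the largest part of $\nu$.

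To prove this, I would invoke the standard characterization of $P^{(k)}_\mu$ as the unique symmetric polynomial of the form $m_\mu + \sum_{\rho<\mu} c_{\rho}\, m_\rho$ (dominance order on partitions of $|\mu|$) which is an eigenfunction of the CMS operator $\mathcal L_{k,N}$ from (\ref{CM}). Two observations then give the eigenfunction property of the left-hand side: (i) $\mathcal L_{k,N}$ is invariant under $*\colon x_i\mapsto x_i^{-1}$, since in (\ref{CM}) both $(x_i+x_j)/(x_i-x_j)$ and $x_i\partial_i - x_j\partial_j$ change sign under $*$ and the two signs cancel, while $(x_i\partial_i)^2$ is manifestly invariant; (ii) multiplication by $(x_1\cdots x_N)^c$ preserves $\mathcal L_{k,N}$-eigenfunctions of fixed degree, because conjugation by this factor shifts $\mathcal L_{k,N}$ only by a multiple of $\sum_i x_i\partial_i$ plus a scalar, both of which act by scalars on homogeneous polynomials.

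It remains to match the leading monomial of the displayed left-hand side with $m_\mu$. Expanding $P^{(k)}_\nu = \sum_{\rho\le\nu} c_{\nu\rho}\, m_\rho$ and using the elementary identity $(x_1\cdots x_N)^{\nu_1}\, m_\rho(x^{-1}) = m_{\rho^*}(x)$ with $\rho^*_i = \nu_1 - \rho_{N+1-i}$, one verifies that $\rho \le \nu$ in dominance (with $|\rho|=|\nu|$) implies $\rho^* \le \mu$, with equality iff $\rho = \nu$. Uniqueness of the triangular eigenfunction then forces the claimed equality. I expect the main technical obstacle to be this dominance-order comparison; it is routine but requires careful bookkeeping of the reversed indexing, and it relies essentially on the homogeneity of the Jack expansion to keep the total weight invariant between $\nu$ and the subordinate $\rho$'s.
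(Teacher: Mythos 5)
Your proposal is correct and follows essentially the same route as the paper: both reduce the claim to a duality for ordinary Jack polynomials and establish it via the characterization of $P^{(k)}_\nu$ as the unique $\mathcal L_{k,N}$-eigenfunction triangular in the monomial basis, using the $*$-invariance of the CMS operator, the identity $(x_1\cdots x_N)^a m_\rho^* = m_{(a-\rho_N,\dots,a-\rho_1)}$, and the reversal of dominance order under complementation of partial sums. Your extra remarks (the conjugation of $\mathcal L_{k,N}$ by $(x_1\cdots x_N)^c$ and the explicit normalization $a=-\chi_N$, $b=\chi_1$) only make explicit what the paper uses implicitly.
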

\begin{proof}
It is enough to consider the case when $\chi=\lambda$ is a partition with $ l(\lambda)\le N.$ In that case we have to show that
$$
P^{(k)}_{\lambda}(x_1^{-1},\dots, x^{-1}_n)=(x_1\dots x_N)^{-a}P^{(k)}_{\nu}(x_1,\dots,x_N),
$$
where $\nu=(a-\lambda_N,\dots, a-\lambda_1)$ and $a\ge \lambda_1$. 
Recall that the Jack polynomial $P^{(k)}_{\lambda}(x_1,\dots,x_N)$ can be uniquely  characterised by the following properties: it is an eigenfunction of the CMS operator $\mathcal L_{k, N}$ given by (\ref{CM}) and has an expansion
$$
P^{(k)}_{\lambda}=m_{\lambda}+\sum_{\mu<\lambda}c_{\mu,\lambda}(k)m_{\mu},
$$
where $m_{\mu}$ are the standard monomial polynomials \cite{Ma} and $\mu\le\lambda$ means dominance order: $\mu_1+\dots+\mu_i \le \lambda_1+\dots+\lambda_i,\,\, i=1, \dots, N$ and $ |\lambda|=|\mu|.$
The operator $\mathcal L_{k, N}$ is invariant with respect to involution $x_i\rightarrow x_i^{-1}$, so $P^{(k)}_{\lambda}(x_1^{-1},\dots, x^{-1}_n)$ as well as $(x_1\dots x_N)^aP^{(k)}_{\lambda}(x_1^{-1},\dots, x^{-1}_N)$ are the eigenfunctions of this operator. Since 
$
 (x_1\dots x_N)^am_{\mu}^*=m_{a-\mu_N,\dots,a-\mu_1}
$
we have
$$
(x_1\dots x_N)^aP^{(k)}_{\lambda}(x_1^{-1},\dots, z^{-1}_N)=m_{\nu}+\sum_{\mu<\lambda}c_{\mu,\lambda}m_{a-\mu},
$$
so we only need to show that $a-\mu<a-\lambda$. But the inequalities
$$
 \mu_1+\dots+\mu_i \le \lambda_1+\dots+\lambda_i, \, 1\le i\le n, \quad |\lambda|=|\mu|
$$
imply
$$
\lambda_{N-i+1}+\dots+\lambda_N\le \mu_{N-i+1}+\dots+\mu_N,
$$
and thus
$$
a-\lambda_{N}+\dots+a-\lambda_{N-i+1}\ge a- \mu_{N}+\dots+a-\mu_{N-i+1}.
$$
This proves the lemma. 
\end{proof}  

Now we are going to present the Laurent version of the Harish-Chandra homomorphism.
 Let $\mathcal{D}_N(k)$ be the algebra of quantum integrals of the CMS  generated by the integrals $\mathcal {L}^{(r)}_{k,N}$. The usual {\it Harish-Chandra homomorphism} 
 $$
 \psi_N: \mathcal{D}_N(k) \longrightarrow \Lambda_N(k)
 $$
maps this algebra onto the algebra of {\it shifted symmetric polynomials} $\Lambda_N (k)\subset \mathbb C[t_1, \dots, t_N],$ consisting of polynomials, which are symmetric in shifted variables $t_i+k(i-1), \, i=1,\dots,N.$ It can be defined by
$$ \mathcal{L}P^{(k)}_{\nu}= \psi_N(\mathcal{L})(\nu) P^{(k)}_{\nu}, \, \,  \mathcal{L} \in \mathcal{D}_N(k),$$ where $\nu$ is a partition and $P^{(k)}_{\nu}$ is the usual Jack polynomial.

The $*$-involution on $\Lambda_{N}^{\pm}$ gives rise to the involution on the algebra $\mathcal{D}_N(k),$ which we will be denote by the same symbol: $x_i^*=x_i^{-1}, \,\, \partial_i^*=- \partial_i,$ where as before $ \partial_i=x_i\frac{\partial}{\partial x_i}.$

 \begin{thm}\label{finiteL} 
  For any integral $\mathcal L \in  \mathcal{D}_N(k)$ and any non-increasing sequence of integers $\chi=(\chi_1, \dots, \chi_N)$ we have
  \begin{equation}\label{xi1}
 \mathcal{L} P^{(k)}_{\chi}(x_1,\dots, x_N)=\psi_N(\mathcal{L})(\chi) P^{(k)}_{\chi}(x_1,\dots, x_N)
\end{equation}
and
 \begin{equation}\label{xi2}
\psi_N(\mathcal{L}^*)(\chi)=\psi_N(\mathcal{L})(w(\chi)).
\end{equation}
 \end{thm}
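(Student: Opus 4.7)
The plan is to reduce the theorem to a conjugation computation with the determinant $\Delta=x_1\cdots x_N$. Writing the non-increasing integer sequence as $\chi=\nu-a$ with $\nu$ a partition and $a\ge -\chi_N$ a non-negative integer, the defining relation \eqref{jackchi} gives $P^{(k)}_\chi=\Delta^{-a}P^{(k)}_\nu$. To evaluate $\mathcal{L} P^{(k)}_\chi$ I cannot simply commute $\mathcal{L}$ past $\Delta^{-a}$, so the key step is to understand how conjugation by $\Delta^a$ acts on the algebra $\mathcal{D}_N(k)$. Once \eqref{xi1} is in place, \eqref{xi2} will follow formally by applying the $*$-involution together with the preceding lemma.

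For \eqref{xi1}, I would first compute the conjugate of a Dunkl--Heckman operator \eqref{heckdun} by $\Delta^a$: since $\Delta$ is symmetric it is annihilated by every $(1-s_{ij})$ and hence commutes with each term of \eqref{heckdel}, while $\partial_i\Delta^a=\Delta^a(\partial_i+a)$; this yields the clean shift $\Delta^a D_{i,N}\Delta^{-a}=D_{i,N}-a$. Consequently $\Delta^a\bigl(\sum_i D_{i,N}^r\bigr)\Delta^{-a}=\sum_i(D_{i,N}-a)^r$, so $\mathcal{L}':=\Delta^a\mathcal{L}\Delta^{-a}$ again belongs to $\mathcal{D}_N(k)$ for any $\mathcal{L}\in\mathcal{D}_N(k)$. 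Second, I would identify the Harish-Chandra image of $\mathcal{L}'$: for any partition $\nu$ with $\nu_N\ge a$ one has $\Delta^{-a}P^{(k)}_\nu=P^{(k)}_{\nu-a}$ by \eqref{jackst}, hence
$$
\mathcal{L}'P^{(k)}_\nu=\Delta^a\mathcal{L}\,\Delta^{-a}P^{(k)}_\nu=\psi_N(\mathcal{L})(\nu-a)P^{(k)}_\nu,
$$
and since $\psi_N(\mathcal{L}')$ is a polynomial in $\nu$ this forces the identity $\psi_N(\mathcal{L}')(\nu)=\psi_N(\mathcal{L})(\nu-a)$ for all $\nu$. Combining both steps,
$$
\mathcal{L} P^{(k)}_\chi=\mathcal{L}\Delta^{-a}P^{(k)}_\nu=\Delta^{-a}\mathcal{L}'P^{(k)}_\nu=\psi_N(\mathcal{L})(\nu-a)P^{(k)}_\chi=\psi_N(\mathcal{L})(\chi)P^{(k)}_\chi.
$$

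For \eqref{xi2}, I would apply $*$ to the eigenvalue equation $\mathcal{L} P^{(k)}_{w(\chi)}=\psi_N(\mathcal{L})(w(\chi))P^{(k)}_{w(\chi)}$ furnished by \eqref{xi1} with $\chi$ replaced by $w(\chi)$. The preceding lemma gives $(P^{(k)}_{w(\chi)})^*=P^{(k)}_{w(w(\chi))}=P^{(k)}_\chi$, and the definition of the $*$-involution on $\mathcal{D}_N(k)$ yields $(\mathcal{L}f)^*=\mathcal{L}^*f^*$, so $\mathcal{L}^*P^{(k)}_\chi=\psi_N(\mathcal{L})(w(\chi))P^{(k)}_\chi$. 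Comparing this with \eqref{xi1} applied to $\mathcal{L}^*\in\mathcal{D}_N(k)$ then produces the required identity \eqref{xi2}. The main obstacle is the shift identity $\psi_N(\Delta^a\mathcal{L}\Delta^{-a})(\nu)=\psi_N(\mathcal{L})(\nu-a)$: verifying it uniformly for the whole algebra $\mathcal{D}_N(k)$ requires the Dunkl--Heckman description to reduce matters to the one-operator shift, and care is needed both with the direction of the conjugation and with the fact that the identity is established a priori only on partitions with $\nu_N\ge a$ and is then extended by polynomiality.
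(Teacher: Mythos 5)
Your proof is correct and follows essentially the same route as the paper: both rest on the conjugation identity $\Delta^{a}D_{i,N}\Delta^{-a}=D_{i,N}-a$ for the Dunkl--Heckman operators together with a polynomial-interpolation step to extend the eigenvalue relation from partitions to arbitrary non-increasing integer sequences, and your derivation of \eqref{xi2} via the $*$-involution is the same formal chain as in the paper. The only cosmetic difference is that the paper interpolates in the shift parameter $a$ for fixed $\lambda$, whereas you interpolate in the partition $\nu$ for a fixed $a$.
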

\begin{proof}
 Let us prove first (\ref{xi1}). It is enough to prove this for  the integrals $\mathcal{L}^{(r)}_{k,N}$.  Let  $f_r= \psi_N(\mathcal{L}^{(r)}_{k,N})$.
Since 
$$
D_{i,N}^r (x_1\dots x_N)^aP^{(k)}_{\lambda}(x_1,\dots, x_N)=(x_1\dots x_N)^a(D_{i,N}+a)^r P^{(k)}_{\nu}(x_1,\dots, x_N)
$$
we have 
$$\mathcal{L}^{(r)}_{k,N} ((x_1\dots x_N)^aP_{\lambda})=\left(\sum_{i=0}^{r}{r \choose i}a^{r-i}f_i(\lambda)\right)(x_1\dots x_N)^aP_{\lambda}$$
for all $a.$ For positive $a$ from formula (\ref{jackst}) it follows that 
$$\sum_{i=0}^{r}{r \choose i}a^{r-i}f_i(\lambda)=f_r(  \lambda_1+a,\dots,\lambda_N+a).$$ Since both sides are polynomial this is true for negative $a$ as well, which implies the claim.
 
 The second statement follows from the relations
$$
\psi_N(\mathcal{L}^*)(\chi)P^{(k)}_{\chi}=\mathcal{L}^*(P^{(k)}_{\chi})=(\mathcal{L}(P^{(k)*}_{\chi}))^*=(\mathcal{L}P^{(k)}_{w(\chi)})^*=\psi_N(\mathcal{L})(w(\chi))P^{(k)}_{\chi}.
$$
\end{proof}

Thus we see that the involution $*$ on  the integrals goes  under the Harish-Chandra homomorphism to the involution
$w : \Lambda_{N}(k)\longrightarrow  \Lambda_{N}(k)$ defined by the relation
$$ w(f)(\chi)=f(w(\chi)),\,\,f\in \Lambda_{N}(k).
$$
This involution can be described also in the following way. 
\begin{lemma}  
  Let $p_{r,a,N}\in \Lambda_{N}(k)$ be the shifted power sum defined by
\begin{equation}\label{shiftedp}
p_{r,a,N}(\chi)=\sum_{i=1}^N(\chi_i+k(i-1)+a)^r-\sum_{i=1}^N(k(i-1)+a)^r, \,\, r \in \mathbb N,
\end{equation}
then
\begin{equation}\label{w3}
w(p_{r,a,N})=(-1)^rp_{r,k-kN-a,N}.
\end{equation}
\end{lemma}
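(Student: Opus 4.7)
The plan is essentially to unfold the definition and perform a single index reversal. By the definition of $w$ at the level of functions on non-increasing integer sequences,
$$w(p_{r,a,N})(\chi) = p_{r,a,N}(-\chi_N,\dots,-\chi_1) = \sum_{i=1}^N(-\chi_{N-i+1}+k(i-1)+a)^r - \sum_{i=1}^N(k(i-1)+a)^r.$$
The natural move is to reindex $j=N-i+1$ in both sums, so that the reversed coordinate $-\chi_{N-i+1}$ becomes $-\chi_j$ and the shift $k(i-1)$ becomes $k(N-j)$.

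The core of the proof is then the affine identity
$$-\chi_j + k(N-j) + a \;=\; -\bigl(\chi_j + k(j-1) + (k-kN-a)\bigr),$$
which one checks by expanding both sides. Raising to the $r$-th power pulls out a factor $(-1)^r$ and converts the first sum into the "leading" part
$$(-1)^r\sum_{j=1}^N\bigl(\chi_j + k(j-1)+(k-kN-a)\bigr)^r$$
of $(-1)^r p_{r,\,k-kN-a,\,N}(\chi)$, as desired.

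For the subtracted "normalizing" term, the same reindexing $j=N-i+1$ turns $\sum_i(k(i-1)+a)^r$ into $\sum_j(k(N-j)+a)^r$; the analogous algebraic identity $-k(N-j)-a = k(j-1)+(k-kN-a)$, together with the trivial $u^r=(-1)^r(-u)^r$, shows that this equals $(-1)^r\sum_j(k(j-1)+(k-kN-a))^r$, which is precisely $(-1)^r$ times the normalizing term of $p_{r,\,k-kN-a,\,N}$. Combining the two halves gives (\ref{w3}). There is no real obstacle: the whole statement reduces to the single affine identity $k(i-1)+a \leftrightarrow -\bigl(k(j-1)+(k-kN-a)\bigr)$ under $j=N-i+1$, and once this is verified the rest is bookkeeping.
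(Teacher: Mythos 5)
Your proof is correct and follows essentially the same route as the paper's: unfold the definition of $w$, reindex $j=N-i+1$ in both sums, and pull out the factor $(-1)^r$ via the affine identity $-\chi_j+k(N-j)+a=-\bigl(\chi_j+k(j-1)+(k-kN-a)\bigr)$. The key identity checks out, so nothing further is needed.
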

\begin{proof} We have 
$$
w(p_{r,a,N})(\chi)=p_{r,a,N}(w(\chi))=
\sum_{i=1}^N(k(i-1)+a-\chi_{N-i+1})^r-\sum_{i=1}^N(k(i-1)+a)^r
$$
$$
=\sum_{i=1}^N(k(N-i)+a-\chi_i)^r-\sum_{i=1}^N(k(N-i)+a)^r
$$
$$
=(-1)^r\sum_{i=1}^N(\chi_i+ki-kN-a)^r-
(-1)^r\sum_{i=1}^N(ki-kN-a)^r=
(-1)^rp_{r,k-kN-a,N)}(\chi).
$$
\end{proof}

Let us present now a Laurent version of Pieri formula.  Define the following functions for positive integers $r,i$ and any $b:$
\begin{equation}\label{cchi}
c_{\chi}(ri,b)=\chi_r-\chi_i-1+k(r+1-i)+b
\end{equation}
and
\begin{equation}\label{V}
V_i(\chi)=\prod_{r=1}^{i-1}\frac{c_{\chi}(ri,1)c_{\chi}(ri,-2k)}{c_{\chi}(ri,1-k)c_{\chi}(ri,-k)}.
\end{equation}
Let $\varepsilon_i$ be the sequence of length $N$ with all zeroes except 1 at $i$-th place.

\begin{thm}\label{JLP} The Jack--Laurent  polynomials satisfy the following Pieri formula:
\begin{equation}\label{pieri}
p_1P^{(k)}_{\chi}(x_1,\dots,x_N)=\sum_{i}V_i(\chi)P^{(k)}_{\chi+\varepsilon_i}(x_1,\dots,x_N),
\end{equation}
where the sum is taken over $1\le i\le N$ such that $\chi+\varepsilon_i$ is a  non increasing sequence  of integers.
\end{thm}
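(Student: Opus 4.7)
The strategy is to reduce the statement to the classical Pieri formula for Jack polynomials indexed by partitions, using the defining relation (\ref{jackchi}). For any integer $a$ with $\nu := \chi + (a,\dots,a)$ a partition (which exists for $a$ large enough), the definition gives $P^{(k)}_{\chi} = (x_1\cdots x_N)^{-a} P^{(k)}_{\nu}$ and $P^{(k)}_{\chi+\varepsilon_i} = (x_1\cdots x_N)^{-a} P^{(k)}_{\nu+\varepsilon_i}$ for each relevant $i$. Since multiplication by $p_1 = x_1+\dots+x_N$ commutes with multiplication by the symmetric function $(x_1\cdots x_N)^{-a}$, the identity (\ref{pieri}) for $\chi$ is equivalent to the same identity with $\chi$ replaced by $\nu$. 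This reduces the claim to the case when $\chi$ is itself a partition.

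Next I would dispose of two bookkeeping items. First, the indexing set matches: $\chi + \varepsilon_i$ is non-increasing iff $\chi_{i-1} > \chi_i$ (or $i=1$), which is exactly the condition that $\nu + \varepsilon_i$ be a partition. Second, the coefficient $V_i(\chi)$ defined by (\ref{cchi})--(\ref{V}) depends on $\chi$ only through the differences $\chi_r - \chi_i$, hence is translation invariant: $V_i(\chi) = V_i(\nu)$. Consequently, it suffices to verify
\begin{equation*}
p_1 P^{(k)}_{\nu}(x_1,\dots,x_N) = \sum_{i} V_i(\nu)\, P^{(k)}_{\nu+\varepsilon_i}(x_1,\dots,x_N)
\end{equation*}
when $\nu$ is a partition with $l(\nu) \le N$.

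For partitions, the classical Pieri rule for Jack polynomials (Macdonald \cite{Ma}, Stanley \cite{Stanley}) gives $p_1 P^{(k)}_{\nu} = \sum_i U_i(\nu)\, P^{(k)}_{\nu+\varepsilon_i}$, where $U_i(\nu)$ is a product over rows $r < i$ of ratios of arm- and leg-type quantities in the Young diagrams of $\nu$ and $\nu+\varepsilon_i$. The core of the proof is then the identification $V_i(\nu) = U_i(\nu)$. Writing out each factor of $c_\nu(ri,b)$ explicitly, one has
\begin{align*}
c_\nu(ri,1) &= \nu_r-\nu_i+k(r+1-i), & c_\nu(ri,1-k) &= \nu_r-\nu_i+k(r-i),\\
c_\nu(ri,-k) &= \nu_r-\nu_i-1+k(r-i), & c_\nu(ri,-2k) &= \nu_r-\nu_i-1+k(r-1-i),
\end{align*}
which are precisely the four hook-length-type combinations that appear in the standard Pieri coefficient for the row-operation $\nu \mapsto \nu+\varepsilon_i$ (accounting for our convention $k=-1/\alpha$). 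Matching factor by factor gives $V_i(\nu) = U_i(\nu)$.

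The main obstacle is this last matching: one must translate the standard arm/leg expression for the Pieri coefficient into the four-factor ratio in (\ref{V}) with exactly the prescribed shifts $1, 1-k, -k, -2k$. This is a routine but conventions-heavy calculation, and the place where all sign and normalisation choices need to be tracked carefully; everything else in the proof is formal manipulation using (\ref{jackchi}).
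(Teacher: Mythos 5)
Your proposal is correct and takes essentially the same route as the paper: reduce to the partition case via the defining relation (\ref{jackchi}) together with the translation invariance $V_i(\chi)=V_i(\chi+a)$, and then invoke the classical Pieri rule for Jack polynomials. The only difference is that you spell out the factor-by-factor identification of $V_i(\nu)$ with the standard Pieri coefficient, which the paper simply cites as well known.
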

\begin{proof} If $\chi$ is a partition the result is well known \cite{Ma}. In the general case choose integer $a$ such that  $\nu=\chi+a$ is a partition, multiply  both sides of the Pieri formula for $\nu$ by 
$(x_1\dots x_N)^{-a}$ and take into account that $V_i(\chi)=V_i(\chi+a)$.
\end{proof}

We will need also the following corollary of the Pieri formula.

Let $\chi$ be a non-increasing sequence of $N$ integers and set
$$
e_N(\chi)=\sum_{i=1}^N\chi_i^2-k\sum_{i=1}^N(N-2i+1){\chi}_i,
$$
which is the eigenvalue of the CMS operator.
Define the following polynomial in variable $t$ depending on a complex  number $s$
\begin{equation}\label{R}
R_{\chi}(t,s):=\prod_{j}\frac{t-e_N(\chi+\varepsilon_j)}{s-e_N(\chi+\varepsilon_j)},
\end{equation}
where the product is taken over all $j$ such that $\chi +\varepsilon_j$ is a non-increasing sequence of integers and 
$e_N(\chi+\varepsilon_j)\ne s.$

\begin{proposition}\label{JL} Let $k$ be not a positive rational or zero. If $s=e_N(\chi+\varepsilon_i)$ for some $i$ such that  $\chi+\varepsilon_i$ is  a non-increasing sequence of integers, then
$$
R_{\chi}(\mathcal{L}_{k,N},s)(p_1P^{(k)}_{\chi}(x_1,\dots,x_N))=V_i(\chi)P^{(k)}_{\chi+\varepsilon_i}(x_1,\dots,x_N).
$$
 If $s\ne e_N(\chi+\varepsilon_i)$ then 
$$
R_{\chi}(\mathcal{L}_{k,N},s)(p_1P^{(k)}_{\chi}(x_1,\dots,x_N))=0.
$$
\end{proposition}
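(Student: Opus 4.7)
The plan is to expand $p_1P^{(k)}_{\chi}$ by the Pieri formula of Theorem \ref{JLP} and exploit that, by Theorem \ref{finiteL}, each $P^{(k)}_{\chi+\varepsilon_j}$ is an eigenfunction of $\mathcal{L}_{k,N}$ with eigenvalue $e_N(\chi+\varepsilon_j)$. The rational polynomial $R_{\chi}(t,s)$ is engineered precisely so that, under spectral evaluation, it annihilates every eigenvalue appearing in the Pieri expansion except those equal to $s$, where it takes the value $1$.

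Concretely, Theorem \ref{JLP} gives
$$p_1P^{(k)}_{\chi}=\sum_{j}V_j(\chi)P^{(k)}_{\chi+\varepsilon_j},$$
summed over valid $j$, namely those for which $\chi+\varepsilon_j$ is non-increasing. Applying $R_{\chi}(\mathcal{L}_{k,N},s)$ term by term and using the eigenvalue property yields
$$R_{\chi}(\mathcal{L}_{k,N},s)(p_1P^{(k)}_{\chi})=\sum_{j}V_j(\chi)R_{\chi}(e_N(\chi+\varepsilon_j),s)P^{(k)}_{\chi+\varepsilon_j}.$$
For valid $j$ with $e_N(\chi+\varepsilon_j)\neq s$, the factor $(t-e_N(\chi+\varepsilon_j))$ appears in the numerator of $R_{\chi}(t,s)$, so the corresponding scalar vanishes. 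For valid $j$ with $e_N(\chi+\varepsilon_j)=s$ (if such $j$ exists), every factor of $R_{\chi}(s,s)$ reduces to $1$, and that term survives with coefficient $V_j(\chi)$.

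The main nontrivial step is to show that at most one valid $j$ satisfies $e_N(\chi+\varepsilon_j)=s$, so that the index $i$ in the first case is unambiguous and the expansion collapses to a single term. A direct calculation from the definition of $e_N$ gives
$$e_N(\chi+\varepsilon_i)-e_N(\chi+\varepsilon_j)=2(\chi_i-\chi_j)+2k(i-j),$$
so coincidence of the two eigenvalues with $i<j$ would force $\chi_i-\chi_j=k(j-i)$. The left-hand side is a nonnegative integer (since $\chi$ is non-increasing), whereas the hypothesis that $k$ is neither a positive rational nor zero makes the right-hand side either strictly negative (when $k<0$) or irrational (when $k$ is positive irrational). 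Hence the values $e_N(\chi+\varepsilon_j)$ are pairwise distinct over valid $j$, and the spectral sum above collapses as asserted: to $V_i(\chi)P^{(k)}_{\chi+\varepsilon_i}$ when $s$ equals the eigenvalue at the unique index $i$, and to zero otherwise. Aside from this Diophantine distinctness argument, every step is a formal consequence of functional calculus on an eigenbasis.
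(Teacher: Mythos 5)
Your proposal is correct and follows essentially the same route as the paper: expand $p_1P^{(k)}_{\chi}$ by the Pieri formula, use that each $P^{(k)}_{\chi+\varepsilon_j}$ is an eigenfunction with eigenvalue $e_N(\chi+\varepsilon_j)$, and observe that these eigenvalues are pairwise distinct because $2(\chi_i-\chi_j)+2k(i-j)\neq 0$ when $k$ is not a positive rational or zero. Your write-up in fact spells out the spectral-evaluation and distinctness steps more explicitly than the paper does; the only point the paper adds is the remark that the coefficients $V_i(\chi)$ are well defined and non-zero for such $k$.
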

\begin{proof}
 Since $k$ is not positive rational or zero, then for  $i\ne j$ 
$$
e_N(\chi+\varepsilon_i)-e_N(\chi+\varepsilon_j)=2(\chi_i-\chi_j)+2k(i-j)\ne0.
$$
 In other words all these quantities are pairwise distinct. One can check also that for these $k$  the quantities  $V_i(\chi)$ are well-defined and non-zero. Now the result directly follows from the Pieri formula.
\end{proof}

\section{Jack--Laurent symmetric functions}

Let $\mathcal P$ be the set of all partitions (or, Young diagrams). By {\it bipartition} we will mean any pair of partitions 
$\alpha=(\lambda, \mu) \in \mathcal P \times \mathcal P.$ Define the {\it length of bipartition} $\alpha=(\lambda, \mu)$ by 
$l(\alpha):=l(\lambda)+l(\mu).$ Let $l(\alpha)\le N$ then we set
\begin{equation}\label{chiN}
\chi_N(\alpha)=(\underbrace{\lambda_1,\dots,\lambda_r,0,\dots,0,-\mu_s,\dots,-\mu_1}_{N}).
\end{equation}

Let 
$\varphi_N: \Lambda^{\pm} \rightarrow \Lambda_N^{\pm}$ be the homomorphism defined by 
$\varphi_N(p_i)=x_1^i + \dots + x_N^i, \, i \in \mathbb Z\setminus\{0\}$ and specialization $p_0=N.$

We define now the {\it Jack--Laurent symmetric functions} $P^{(k,p_0)}_{\alpha}\in \Lambda^{\pm}$  
by the following theorem-definition.

\begin{thm} 
\label{th41}
If $k \notin \mathbb Q$, $p_0\neq n+k^{-1}m$ for any $m,n \in \mathbb Z_{>0}$ then for any bipartition $\alpha$ there exists  a unique  element $P^{(k,p_0)}_{\alpha} \in \Lambda^{\pm}$ (called Jack--Laurent symmetric function) such that for every $N \in \mathbb N$
\begin{equation}\label{jlsf}
\varphi_N(P^{(k,p_0)}_{\alpha})=\begin{cases}P^{(k)}_{\chi_N(\alpha)}(x_1,\dots,x_N)\,\textit{if} \,\,l(\alpha)\le N\\
\quad 0\quad\,\textit{if}\,\, l(\alpha)> N.
\end{cases}
\end{equation}
\end{thm}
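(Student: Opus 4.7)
The plan has two parts: uniqueness via Lemma~\ref{edin}, and existence via a triangular spectral construction on a finite-dimensional filtration piece of $\Lambda^{\pm}$ using the operator $\mathcal L^{(2)}_{k,p_{0}}$.

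For uniqueness, suppose $P,P'\in\Lambda^{\pm}$ both satisfy \eqref{jlsf} and set $g=P-P'$. Writing $g$ as a polynomial in finitely many generators $p_{j}$ with coefficients in $\mathbb C(k,p_{0})$ and clearing the common denominator $D(p_{0})$, one obtains $\tilde g=D\cdot g\in\Lambda^{\pm}$ with polynomial dependence on $p_{0}$; the hypothesis $\varphi_{N}(g)=0$ then yields $\varphi_{N}(\tilde g)=0$ for every $N$ outside the finite zero set of $D$, so Lemma~\ref{edin} forces $\tilde g=0$ and hence $g=0$.

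For existence, equip $\Lambda^{\pm}$ with the bidegree declared by $\deg p_{i}=(i,0)$ for $i>0$ and $\deg p_{i}=(0,|i|)$ for $i<0$. A term-by-term inspection of \eqref{infinityL} shows that each summand of $\mathcal L^{(r)}_{k,p_{0}}$ either preserves this bidegree or strictly decreases both components (the mixed-sign terms $p_{a+b}\partial_{a}\partial_{b}$ with $ab<0$ being the only ones that lower), so the operators preserve the finite-dimensional filtration piece
\[
\Lambda^{\pm}_{\le(A,B)}\;=\;\bigoplus_{a\le A,\,b\le B}\Lambda^{\pm}_{(a,b)},
\]
whose monomial basis $\{m_{\beta}\}$ is indexed by bipartitions $\beta=(\lambda',\mu')$ with $|\lambda'|\le A$ and $|\mu'|\le B$. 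With respect to an appropriate dominance order $\prec$ on bipartitions, $\mathcal L^{(2)}_{k,p_{0}}$ acts upper-triangularly, with diagonal entry at $\beta$ equal to the polynomial $c_{\beta}(k,p_{0})=e_{N}(\chi_{N}(\beta))|_{N=p_{0}}$ interpolating the finite-dimensional eigenvalue of Theorem~\ref{finiteL}. The technical heart of the argument, and what I expect to be the main obstacle, is to show that under the hypotheses $k\notin\mathbb Q$ and $kp_{0}\ne n+km$ these polynomials $c_{\beta}(k,p_{0})$ are pairwise distinct in $\mathbb C(k,p_{0})$ for distinct $\beta$ in any fixed filtration piece. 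Granting this Diophantine separation, triangularity yields a unique eigenvector
\[
P^{(k,p_{0})}_{\alpha}\;=\;m_{\alpha}+\sum_{\beta\prec\alpha}c_{\beta,\alpha}(k,p_{0})\,m_{\beta}\ \in\ \Lambda^{\pm}_{\le(|\lambda|,|\mu|)}
\]
with coefficients in $\mathbb C(k,p_{0})$.

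To verify \eqref{jlsf}, apply $\varphi_{N}$ and invoke the commuting diagram \eqref{commdi1}: the image is a joint eigenfunction of the operators $\mathcal L^{(r)}_{k,N}$ with eigenvalues $c_{\alpha}(k,N)$. If $l(\alpha)\le N$ this eigenvalue equals $e_{N}(\chi_{N}(\alpha))$, and matching leading monomials forces $\varphi_{N}(P^{(k,p_{0})}_{\alpha})=P^{(k)}_{\chi_{N}(\alpha)}$; if $l(\alpha)>N$, the same Diophantine condition ensures $c_{\alpha}(k,N)$ is not an eigenvalue of $\mathcal L^{(2)}_{k,N}$ on $\Lambda^{\pm}_{N}$, so the image must vanish.
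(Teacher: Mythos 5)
Your uniqueness argument is fine and is essentially the paper's (the authors also reduce it to Lemma~\ref{edin} after clearing denominators in $p_0$). The existence half, however, has a genuine gap, and it sits exactly where the paper warns it would. Your triangular spectral construction needs a basis of $\Lambda^{\pm}_{\le(A,B)}$ indexed by bipartitions in which $\mathcal L^{(2)}_{k,p_0}$ is upper-triangular for a dominance-type order. The honest monomial basis of the free polynomial algebra $\mathbb C[p_i]$ is the power-sum basis $p_{\lambda}p_{-\mu}$, and in that basis $\mathcal L^{(2)}_{k,p_0}$ is \emph{not} triangular: the terms $p_{a+b}\partial_a\partial_b$ with $a,b$ of the same sign merge parts (raising in dominance order) while the terms $p_ap_b\partial_{a+b}$ split parts (lowering), so the operator moves in both directions. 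If instead you mean genuine monomial symmetric functions $m_{\lambda,\mu}\in\Lambda^{\pm}$, these are not a priori defined: there is no stability in the Laurent setting, they depend on $p_0$, and the remark following Theorem~\ref{th41} points out that their existence is exactly as nonobvious as that of the $P^{(k,p_0)}_{\alpha}$ themselves (and is proved by the same inductive mechanism). So the object your argument is triangular with respect to is the thing being constructed.

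Moreover, the ``Diophantine separation'' you flag as the main obstacle is false as stated: the eigenvalues $e_{p_0}(\beta)$ of the single operator $\mathcal L^{(2)}_{k,p_0}$ are not pairwise distinct on a fixed filtration piece. Already for ordinary partitions, $\lambda=(3,1,1,1)$ and $\mu=(2,2,2)$ give $\sum\lambda_i^2=\sum\mu_i^2=12$ and $\sum(2i-1)\lambda_i=\sum(2i-1)\mu_i=18$, hence $e_{p_0}(\lambda,\emptyset)=e_{p_0}(\mu,\emptyset)$ identically in $k$ and $p_0$. Only $\prec$-comparable pairs can be separated by $\mathcal L^{(2)}$; genuine simplicity of the spectrum requires the full commuting algebra, which the paper only establishes later (Theorem~\ref{simpl}, via the Bernoulli shifted symmetric functions) and cannot be invoked here without circularity. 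The paper's actual proof avoids all of this: it inducts on $|\lambda|$, starting from $P^{(k,p_0)}_{\emptyset,\mu}=P^{(k)*}_{\mu}$, and produces $P^{(k,p_0)}_{\beta}$ by applying the resolvent-type polynomial $R_{\alpha}(p_0,\mathcal L_2,s)$ to $p_1P^{(k,p_0)}_{\alpha}$; the hypotheses $k\notin\mathbb Q$, $kp_0\ne m+nk$ are used only to show $e_{p_0}(\beta)-e_{p_0}(\gamma)\ne0$ for the finitely many $\gamma\in Z(\alpha)$ occurring in the Pieri expansion, which is a much weaker separation statement and is checked explicitly. I would recommend either adopting that route or, at minimum, replacing your single-operator spectral argument by one that constructs the Laurent monomial functions first and proves separation only along comparable pairs.
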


\begin{proof} Let us prove the existence  first. We will prove it by induction in $|\lambda|$. If 
$|\lambda|=0$ then we set 
$$
P^{(k,p_0)}_{\emptyset,\mu}=P^{(k)*}_{\mu}
$$
where  $P^{(k)}_{\mu}$ is the usual  Jack symmetric  function \cite{Ma}. $P^{(k)}_{\mu}$ does not depend on $p_{0}$ and  for $l(\mu)\le N$ 
$$
\varphi_{N}(P^{(k)*}_{\mu})=(\varphi_{N}(P^{(k)}_{\mu}))^*=P^{(k)}_{\mu}(x_1^{-1},\dots,x_N^{-1})=P^{(k)}_{w(\mu)}(x_1,\dots,x_N).
$$
But $w(\mu)=(-\mu_N-\dots,-\mu_1)=\chi_{N}(0,\mu)$.  If $l(\mu)> N$, then 
$$
\varphi_{N}(P^{(k)*}_{\mu})=(\varphi_{N}(P^{(k)}_{\mu}))^*=0,
$$
so we proved the theorem when $|\lambda|=0$.

Let $\alpha$ be a bipartition. Denote by $X(\alpha)$  and $Y(\alpha)$ the sets of bipartitions, which  can be obtained from $\alpha$ by adding one box to $\lambda$ and by deleting one box from $\mu$ respectively and define
 $$
 Z(\alpha)=X(\alpha)\cup Y(\alpha).
 $$
Similarly to the previous section define for any bipartition $\alpha$
\begin{equation}\label{ep0}
e_{p_0}(\alpha)= \sum \lambda_i^2 + \sum \mu_j^2 + k\sum (2i-1)\lambda_i + k\sum (2j-1)\mu_j - kp_0(|\lambda|+|\mu|)
\end{equation}
and consider the following polynomial in $t,$ depending rationally on $p_0$ and on an additional parameter $s$
$$
R_{\alpha}(p_0,t,s):=\prod_{}\frac{t-e_{p_0}(\gamma)}{s-e_{p_0}(\gamma)},
$$
where the product is over all bipartitions $\gamma\in  Z(\alpha)$ such that $e_{p_0}(\gamma)\ne s.$

 Now suppose that theorem is true for all  $\alpha=(\lambda,\mu)$  with $|\lambda|\le M$. Let $\beta=(\nu,\tau)$ be a bipartition such that  $|\nu|=M+1$. Let $\alpha$ be a bipartition obtained from $\beta$ by removing one box $(i, \nu_i)$ from $\nu$.
Set  $V(\alpha,\beta)=V_i(\lambda)$, where $V_i$ is defined by formula (\ref{V}) with $i$ corresponding to the removed box.
One can check that if $k$ is not rational and $p_0\neq n+k^{-1}m$ with natural $m,n$ the coefficient $V(\alpha,\beta)$ is well defined and nonzero.
Therefore we can define
$$
P^{(k,p_0)}_{\beta}=V(\alpha,\beta)^{-1}R_{\alpha}(p_0,\mathcal L_2,s)(p_1P^{(k,p_0)}_{\alpha})
$$
with $s=e_{p_0}(\beta)$.
Let  $\gamma\in X(\alpha)$ with added box $(j, \lambda_j+1)$, then
$$
e_{p_0}(\beta)-e_{p_0}(\gamma)=2(\lambda_i-\lambda_{j})+2k(i-j).$$
Similarly for $\gamma\in Y(\alpha)$ with deleted box $(j, \mu_j)$
$$
e_{p_0}(\beta)-e_{p_0}(\gamma)=2(\lambda_i+\mu_j)+2k(i+j-1)-2kp_0.
$$
From the previous formula we see that $P_{\beta}$ is well defined if $p_0\neq n+k^{-1}m,\, m,n \in \mathbb Z_{>0}$ and 
$$
\varphi_{N}(P^{(k,p_0)}_{\beta})=V(\alpha,\beta)^{-1}R_{\alpha}(N, \mathcal{L}_{k,N},s)(\varphi_{N}(p_1)\varphi_{N}(P^{(k,p_0)}_{\alpha})).
$$
Now we are going to compare two polynomials  $R_{\alpha}(N, t,s)$ and $R_{\chi_N(\alpha)}(t,s)$.
If $l(\alpha)<N$ then $l(\beta)\le N$ and 
$$
R_{\alpha}(N,t,s)=R_{\chi_N(\alpha)}(t,s),\, V_i(\chi_N(\alpha))=V(\alpha,\beta).
$$
By induction assumption and proposition \ref{JL} we have 
$$
\varphi_{N}(P^{(k,p_0)}_{\beta})=P^{(k)}_{\chi_N(\beta)}(x_1,\dots,x_N).
$$
If $l(\alpha)=l(\beta)=N$ then there exists $\gamma\in X(\alpha)$ with the added box $(l(\lambda)+1,1)$ and it is easy to check that
$$
R_{\alpha}(N,t,s)=R_{\chi(\alpha)}(t,s)\frac{t-e_N(\gamma)}{s-e_N(\gamma)}, \quad
\frac{\mathcal{L}_{k,N}-c_N(\gamma)}{s-c_N(\gamma)}P^{(k)}_{\chi(\beta)}=P^{(k)}_{\chi(\beta)},
$$
which by proposition \ref{JL}  imply  that 
$
\varphi_{N}(P^{(k,p_0)}_{\beta})=P^{(k)}_{\chi(\beta)}(x_1,\dots,x_{N}).
$

Suppose now  that $l(\beta)> N$ and consider two cases: $l(\alpha)>N$ and $l(\alpha)=N$. In the first case by induction $\varphi_{N}(P_{\alpha})=0,$ therefore $\varphi_{N}(P^{(k,p_0)}_{\beta})=0$. In the second case we have again equality 
$
R_{\alpha}(N,t,s)=R_{\chi_N(\alpha)}(t,s),
$
but this time  
$$
s=e_N(\beta)\ne  e_{N}(\chi(\alpha)+\varepsilon_j),\, 1\le j\le N
$$   
and according  to proposition \ref{JL} 
$$
R_{\chi_N(\alpha)}(N,\mathcal{L}_{k,N},s)(\varphi_N(p_1)\varphi_{N}(P^{(k,p_0)}_{\alpha}))=0.
$$
This proves the existence.  The uniqueness follows from the same arguments as in the proof of lemma  \ref{edin}.
\end{proof}

We will show in the next section that the Jack--Laurent symmetric functions $P^{(k,p_0)}_{\alpha}$ are the eigenfunctions of the CMS operators 
$\mathcal{L}^{(r)}_{k,p_0}.$ 

\begin{remark}
The usual definition of Jack symmetric functions uses the basis of monomial symmetric functions. The problem in the Laurent case 
is that the monomial symmetric functions (corresponding to $k=0$) also depend on the additional parameter $p_0$ and the very existence of them (which can be proven in a similar way) is not quite obvious. Having this basis one can define the Jack--Laurent symmetric functions using the CMS operator and show that they have a triangular decomposition in the monomial basis with coefficients polynomially depending on $p_0$.
This implies also that the Jack--Laurent symmetric functions $P^{(k,p_0)}_{\alpha}$ form a basis in $\Lambda^{\pm}$ for all parameters $k, p_0$ satisfying the conditions of Theorem \ref{th41}.
\end{remark}

Here is the explicit form of the Jack--Laurent symmetric functions in the simplest cases:
 \begin{equation}\label{p11}
P^{(k,p_0)}_{1,1}= p_1 p_{-1} - \frac{p_0}{1+k-kp_0},
\end{equation}
 \begin{equation}\label{p111}
P^{(k,p_0)}_{1^2,1}=\frac{1}{2}(p_1^2-p_2)p_{-1} - \frac{2(p_0-1)}{2+4k-2kp_0}p_1,
\end{equation}
where $1^2$ denotes partition $\lambda=(1,1).$ 

Define the involution $w$ on the bipartitions by
$$
w(\alpha)=w(\lambda,\mu):=(\mu,\lambda).
$$

\begin{corollary} For any bipartition $\alpha$ the corresponding Jack--Laurent symmetric functions satisfy the property
$$
P^{(k,p_0)*}_{\alpha}=P^{(k,p_0)}_{w(\alpha)}.
$$
\end{corollary}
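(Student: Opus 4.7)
The plan is to invoke the uniqueness part of Theorem \ref{th41} after verifying that both $P^{(k,p_0)*}_{\alpha}$ and $P^{(k,p_0)}_{w(\alpha)}$ satisfy the same defining family of specialization conditions (\ref{jlsf}). The two necessary ingredients are already available: first, that the involution $*$ on $\Lambda^{\pm}$ is intertwined by $\varphi_N$ with the involution $x_i \mapsto x_i^{-1}$ on $\Lambda^{\pm}_N$ (this is the compatibility observed right after the definition (\ref{phin})); second, that the finite-dimensional statement holds, which is precisely the content of the lemma giving $P^{(k)}_{\chi}(x_1,\dots,x_N)^* = P^{(k)}_{w(\chi)}(x_1,\dots,x_N)$ with $w(\chi) = (-\chi_N,\dots,-\chi_1)$.

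First I would write $\alpha = (\lambda, \mu)$ with $l(\alpha) \le N$, so that
\begin{equation*}
\chi_N(\alpha)=(\lambda_1,\dots,\lambda_r,0,\dots,0,-\mu_s,\dots,-\mu_1).
\end{equation*}
A direct unpacking of the definitions then gives
\begin{equation*}
w(\chi_N(\alpha)) = (\mu_1,\dots,\mu_s,0,\dots,0,-\lambda_r,\dots,-\lambda_1) = \chi_N(w(\alpha)),
\end{equation*}
which is the combinatorial heart of the identity and is essentially immediate from reading the sequence backwards and negating.

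Next I would apply $\varphi_N$ to $P^{(k,p_0)*}_{\alpha}$ and chase: using $\varphi_N \circ * = * \circ \varphi_N$ together with (\ref{jlsf}) and the finite-dimensional lemma,
\begin{equation*}
\varphi_N\bigl(P^{(k,p_0)*}_{\alpha}\bigr) = \varphi_N\bigl(P^{(k,p_0)}_{\alpha}\bigr)^{*} = P^{(k)}_{\chi_N(\alpha)}(x_1,\dots,x_N)^{*} = P^{(k)}_{w(\chi_N(\alpha))}(x_1,\dots,x_N) = P^{(k)}_{\chi_N(w(\alpha))}(x_1,\dots,x_N),
\end{equation*}
so $P^{(k,p_0)*}_{\alpha}$ has precisely the same image under $\varphi_N$ as $P^{(k,p_0)}_{w(\alpha)}$. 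For $l(\alpha) > N$ note $l(w(\alpha)) = l(\alpha) > N$ so both images vanish, and we conclude $\varphi_N(P^{(k,p_0)*}_{\alpha} - P^{(k,p_0)}_{w(\alpha)}) = 0$ for every $N$.

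The only remaining point is to apply the uniqueness statement in Theorem \ref{th41}; I expect this to be the main (though minor) obstacle, since Jack--Laurent symmetric functions depend rationally rather than polynomially on $p_0$, whereas Lemma \ref{edin} requires polynomial dependence. This is handled by clearing a common denominator in $p_0$ for the difference $P^{(k,p_0)*}_{\alpha} - P^{(k,p_0)}_{w(\alpha)}$: multiplying by a suitable polynomial in $p_0$ yields an element of $\Lambda^{\pm}$ with polynomial dependence on $p_0$ whose image under every $\varphi_N$ vanishes, so Lemma \ref{edin} forces it to be zero, and hence the difference itself is zero. This completes the proof of the corollary.
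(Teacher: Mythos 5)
Your proof is correct and follows essentially the same route as the paper: apply $\varphi_N$, use the finite-dimensional lemma $P^{(k)*}_{\chi}=P^{(k)}_{w(\chi)}$ together with the identity $w(\chi_N(\alpha))=\chi_N(w(\alpha))$, and conclude by the uniqueness in Theorem \ref{th41}. The paper's version is just terser, omitting the explicit case $l(\alpha)>N$ and the denominator-clearing remark needed to invoke Lemma \ref{edin}, both of which you supply correctly.
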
 
\begin{proof}
Choose $N\ge l(\alpha)$, then we have
$$
\varphi_{N}(P^{(k,p_0)*}_{\alpha})=P^{(k)*}_{\chi_{N}(\alpha)}=P^{(k)}_{w(\chi_{N}(\alpha))}=P^{(k)}_{\chi_{N}(w(\alpha))}=\varphi_{N}(P^{(k,p_0)}_{w(\alpha)}),
$$
which implies the claim.
\end{proof}

\section{Harish-Chandra homomorphism and Polychronakos operator}


In this section it will be convenient for us to think of $p_0$ as a variable, while $k$ should be still considered as a fixed parameter. The difference between variable and parameter is only in the point of view, which we will continue to change, hopefully without much problems for the reader. 

Recall that the usual Harish-Chandra homomorphism 
$$
\psi_N : \mathcal{D}_N(k)\longrightarrow \Lambda_N(k)
$$
maps the algebra generated by the quantum CMS integrals (\ref{heckcm}) onto the algebra of the shifted symmetric polynomials $\Lambda_N(k).$  The algebra $\mathcal{D}_N(k)$ acts on the algebra of symmetric polynomials $\Lambda_N$ and there is a natural homomorphism 
$$
\phi_{N,N-1} : \mathcal{D}_N(k)\longrightarrow \mathcal{D}_{N-1}(k)
$$ 
induced by the homomorphism $\Lambda_N \rightarrow \Lambda_{N-1}$ sending $x_N$ to zero. 
Consider the inverse limit
$$
\mathcal{D}(k)=\lim_{\leftarrow}\mathcal{D}_N(k),
$$
which we will call the algebra of {\it stable CMS integrals},
and the inverse limit of Harish-Chandra homomorphisms $\psi_N$
$$
\psi : \mathcal{D}(k)\longrightarrow \Lambda(k),
$$
where $\Lambda(k)=\lim_{\leftarrow}\Lambda_N(k)$ is the algebra of the {\it shifted symmetric functions} \cite{OO}.

The algebra $\mathcal{D}(k)$ can be naturally  considered as a subalgebra of the algebra of the differential operators acting on $\Lambda$, which depends on the parameter $k.$ 
It has a natural extension $\mathcal {D}(k)[p_0]=\mathcal {D}(k)\otimes \mathbb C[p_0]$, which also acts on $\Lambda$ if we specialise $p_0.$ 

Let $\mathcal {D}_L(k,p_0)$ be the algebra of differential operators on $\Lambda$ generated over $\mathbb C[p_0]$ by the CMS integrals (\ref{Lr}).  

We claim that any operator from this algebra can be represented as a polynomial in $p_0$ with coefficients, which are stable CMS integrals:
$$
\mathcal {D}_L(k,p_0)=\mathcal {D}(k)[p_0].
$$

To construct the stable CMS integrals we will use the following version of the Dunkl operator, which was introduced by Polychronakos \cite{Poly}:
\begin{equation}\label{poly}
\pi_{i,N}:= x_i\frac{\partial}{\partial x_i}-k\sum_{j\ne i}\frac{x_i}{x_i-x_j}(1-s_{ij})=\partial_i-k\tilde\Delta_{i,N},
\end{equation}
where 
$$
\tilde\Delta_{i,N}:=\sum_{j\ne i}\frac{x_i}{x_i-x_j}(1-s_{ij}).
$$
Note the difference with Dunkl-Heckman operator:
\begin{equation}\label{polydif}
D_{i,N}=\pi_{i,N}+\frac12k\sum_{j\ne i}(1-s_{ij}),
\end{equation}
which implies in particular that $\tilde\Delta_{i,N}$ (and hence $\pi_{i,N}$) do not have symmetry (\ref{heck*}) 
with respect to $*$-involution.

This makes the extension of the operators from $\Lambda$ to $\Lambda^{\pm}$ 
a bit more tricky, so we will consider the Laurent version of the operators. 
The reduction to the usual polynomial case is obvious.

The action of the operator $\pi_{i,N}$ on  $\Lambda_N^{\pm}[x_i^{\pm}]$ is described by
$$
\partial_i(x_i^l)=lx_i^l,\quad \partial_i(p_l)=lx_i^l,
$$
\begin{equation}\label{polyN}
\tilde\Delta_{i,N}(x_i^l)=\begin{cases} (N-l)x_i^l+p_1x_i^{l-1}+\dots+p_{l-1}x_i,\,\,l>0\\
0,\,\,l=0\\ -(lx_i^l+p_{-1}x_i^{l+1}+\dots+p_l),\,l<0.
\end{cases}
\end{equation}
Polychronakos used these operators to give an alternative way to construct the CMS integrals.

\begin{thm} \cite{Poly}
The operators $I^{(r)}_{k,N}=\sum_{i=1}^N\pi^r_{i,N}$
 in $\mathbb C[x_1^{\pm},\dots, x_n^{\pm}]$
commute with each other. Their restrictions to $\Lambda_N^{\pm}$
\begin{equation}\label{polyc}
\mathcal I^{(r)}_{k,N}=Res \sum_{i=1}^N\pi^r_{i,N},\,\, r \in \mathbb Z_{\geq 0}
\end{equation}
 are the commuting quantum integrals  of the CMS system (\ref{CMtrig}). 
\end{thm}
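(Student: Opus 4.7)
The theorem has two parts: commutativity $[I^{(r)}_{k,N}, I^{(s)}_{k,N}] = 0$ on $\mathbb C[x_1^{\pm},\dots,x_N^{\pm}]$, and the statement that the restrictions $\mathcal I^{(r)}_{k,N}$ to $\Lambda_N^{\pm}$ are commuting quantum integrals of the Hamiltonian $\mathcal H_N = \mathcal L^{(2)}_{k,N}$. My strategy is to prove the stronger pairwise commutativity $[\pi_{i,N}, \pi_{j,N}] = 0$, from which the commutativity of $I^{(r)}_{k,N} = \sum_i \pi_{i,N}^r$ is immediate as a power-sum of commuting operators; then identify $\mathcal I^{(2)}_{k,N}$ with $\mathcal H_N$, so that every $\mathcal I^{(r)}_{k,N}$ automatically commutes with $\mathcal H_N$ and is therefore a quantum integral.

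For $[\pi_{i,N}, \pi_{j,N}] = 0$, I would expand via $\pi_i = \partial_i - k\tilde\Delta_i$ and collect the three pieces $[\partial_i,\partial_j]$, $-k([\partial_i,\tilde\Delta_j]-[\partial_j,\tilde\Delta_i])$, and $k^2[\tilde\Delta_i,\tilde\Delta_j]$. The first vanishes; the second is computed using the intertwining rule $\partial_i s_{kl} = s_{kl}\partial_{s_{kl}(i)}$ together with differentiation of the rational coefficients $x_a/(x_a-x_b)$; the mutual cancellation of the second and third ultimately rests on the classical partial-fraction identity
\[
\frac{x_i}{x_i-x_j}\cdot\frac{x_i}{x_i-x_l}+\frac{x_j}{x_j-x_i}\cdot\frac{x_j}{x_j-x_l}+\frac{x_l}{x_l-x_i}\cdot\frac{x_l}{x_l-x_j}=1
\]
that governs triple interactions in Cherednik-type computations. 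Alternatively one may identify the $\pi_{i,N}$ with the Cherednik--Dunkl generators of a commutative subalgebra of the degenerate affine Hecke algebra of type $A$, where pairwise commutativity is structural. That each $I^{(r)}_{k,N}$ preserves $\Lambda_N^{\pm}$ follows from the covariance $s_{kl}\pi_{i,N}s_{kl}^{-1} = \pi_{s_{kl}(i),N}$, readily checked from the definition of $\pi_{i,N}$; consequently $\sum_i\pi_{i,N}^r$ is $S_N$-invariant, commutes with every $s_{kl}$, and restricts to a well-defined operator $\mathcal I^{(r)}_{k,N}$ on $\Lambda_N^{\pm}$.

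To identify $\mathcal I^{(2)}_{k,N}$ with $\mathcal H_N$, apply $\sum_i\pi_i^2$ to a symmetric $f$. Since $\tilde\Delta_i(f) = 0$, only $\partial_i^2 f - k\tilde\Delta_i(\partial_i f)$ survives, and using $s_{ij}(\partial_i f) = \partial_j f$ (valid because $f$ is symmetric) gives $\tilde\Delta_i(\partial_i f) = \sum_{j\ne i}\tfrac{x_i}{x_i-x_j}(\partial_i - \partial_j)f$. Pairing indices $(i,j)$ with $(j,i)$,
\[
\frac{x_i}{x_i-x_j}(\partial_i-\partial_j)+\frac{x_j}{x_j-x_i}(\partial_j-\partial_i) = \frac{x_i+x_j}{x_i-x_j}(\partial_i-\partial_j),
\]
one recovers $\mathcal I^{(2)}_{k,N} = \mathcal H_N$ on the nose. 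Combined with the pairwise commutativity of all $\mathcal I^{(r)}_{k,N}$, this shows each of them commutes with $\mathcal H_N$ and is thus a quantum integral of the CMS system.

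The main obstacle is the verification of $[\pi_{i,N}, \pi_{j,N}] = 0$: conceptually transparent from the Cherednik viewpoint but combinatorially delicate if performed by bare hands, requiring careful management of the rational-coefficient identities and transposition relations. Once this is in hand, commutativity of the $I^{(r)}_{k,N}$, restriction to $\Lambda_N^{\pm}$, and the identification $\mathcal I^{(2)}_{k,N}=\mathcal H_N$ are all short direct computations.
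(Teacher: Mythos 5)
Your argument breaks down at its very first and most essential step: the operators $\pi_{i,N}$ do \emph{not} pairwise commute. The paper itself records the correct relation, namely
$$[\pi_{i,N}, \pi_{j,N}]=k(\pi_{i,N}-\pi_{j,N})s_{ij},$$
which is nonzero. You can see this already for $N=2$ on the monomial $x_1$: a direct computation gives $\pi_1\pi_2(x_1)=k^2x_1$ while $\pi_2\pi_1(x_1)=k(1-k)x_2$, so $[\pi_1,\pi_2](x_1)=k^2x_1-kx_2+k^2x_2\neq 0$, in agreement with $k(\pi_1-\pi_2)s_{12}(x_1)$. The source of your confusion is the identification of the $\pi_{i,N}$ with the Cherednik--Dunkl generators of the commutative subalgebra of the degenerate affine Hecke algebra: those operators treat the indices $j<i$ and $j>i$ asymmetrically (one uses $x_i/(x_i-x_j)$ for one range and $x_j/(x_i-x_j)$ for the other), and it is precisely that asymmetry that makes them commute. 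The Polychronakos operators use $x_i/(x_i-x_j)$ for \emph{all} $j\neq i$, which restores the covariance $s_{lm}\pi_{i,N}s_{lm}=\pi_{s_{lm}(i),N}$ you need for $S_N$-invariance of the power sums, but destroys pairwise commutativity. Consequently your claim that commutativity of $I^{(r)}_{k,N}=\sum_i\pi_{i,N}^r$ is ``immediate as a power-sum of commuting operators'' has no foundation, and the partial-fraction identity you invoke cannot repair it.

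The actual proof (Polychronakos's exchange-operator formalism, which is what the paper points to) takes the relation $[\pi_i,\pi_j]=k(\pi_i-\pi_j)s_{ij}$ together with $s_{ij}\pi_i=\pi_j s_{ij}$ and $s_{ij}\pi_l=\pi_l s_{ij}$ for $l\neq i,j$ as the starting point, and deduces from these Hecke-type relations that the \emph{totally symmetric} combinations $\sum_i\pi_i^r$ commute with one another, even though the individual $\pi_i$ do not; this is a genuinely nontrivial symmetrization argument, not a triviality. The remaining parts of your proposal are sound and match the standard route: $S_N$-covariance shows each $I^{(r)}_{k,N}$ preserves $\Lambda_N^{\pm}$, and your computation identifying $\mathcal I^{(2)}_{k,N}$ with $\mathcal H_N$ on symmetric functions (using $\tilde\Delta_i(f)=0$ and the pairing of $(i,j)$ with $(j,i)$) is correct. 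But as written the proof of the central commutativity statement is invalid and must be replaced.
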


The proof is simple and based on the commutation relations 
\begin{equation}\label{polycom}
[\pi_{i,N}, \pi_{j,N}]=k(\pi_{i,N}-\pi_{j,N})s_{ij}.
\end{equation}

Now we would like to express Heckman's 
CMS integrals $\mathcal L^{(r)}_{k,N}$ via $\mathcal I^{(s)}_{k,N}$ with $ s \leq r.$

\begin{proposition}  
Two series of quantum CMS integrals (\ref{heckcm}),(\ref{polyc}) are related by
\begin{equation}\label{twoin}
\mathcal L^{(r)}_{k,N}= \sum_{a=0}^{r}\mathcal I^{(r-a)}_{k,N}f_a^{(r)}=\sum_{a=0}^{r-1}\mathcal I^{(r-a)}_{k,N}f_a^{(r-1)}, \, r \in \mathbb N,
\end{equation}
where the operator coefficients $f_a^{(r)}$ are defined by the following recurrent relations
\begin{equation}\label{recurr1}
f_a^{(r+1)}=f_a^{(r)}+\frac12 kN f_{a-1}^{(r)},\,\, a\neq r+1,
\end{equation}
\begin{equation}\label{recurr2}
f_{r+1}^{(r+1)}=\frac12 kN f_{r}^{(r)}-\frac12 k \sum_{a=0}^r\mathcal I^{r-a}_{k,N}f_a^{(r)}
\end{equation}
with initial data $f_0^{(0)}=1$ and $f_{a}^{(r)}=0$ when $a<0$ or $a>r.$
\end{proposition}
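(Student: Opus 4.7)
The plan is to work with the bigraded family of operators on the space of symmetric Laurent polynomials
$$A_{m,s} := \operatorname{Res}\sum_{i=1}^{N}D_{i,N}^{m}\,\pi_{i,N}^{s},\qquad m,s\ge 0,$$
so that $A_{m,0}=\mathcal L^{(m)}_{k,N}$ and $A_{0,s}=\mathcal I^{(s)}_{k,N}$. Writing $\sigma_i=\sum_{j\ne i}(1-s_{ij})$, the identity (\ref{polydif}) gives $D_{i,N}=\pi_{i,N}+\tfrac12 k\sigma_{i}$, so peeling one $D_{i,N}$ off the left of $D_{i,N}^{m}\pi_{i,N}^{s}$ produces a defect term $\tfrac12 k\,D_{i,N}^{m-1}\sigma_i\pi_{i,N}^{s}$. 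The first step is the sharp identity
$$\sigma_{i}\pi_{i,N}^{s}f = N\pi_{i,N}^{s}f-\mathcal I^{(s)}_{k,N}f,\qquad f\in\Lambda_N^{\pm},$$
obtained by expanding $\sigma_i$ and using $s_{ij}\pi_{i,N}^{s}=\pi_{j,N}^{s}s_{ij}$ (immediate from the definition of $\pi_{i,N}$) together with $s_{ij}f=f$, so that $\sum_{j\ne i}s_{ij}\pi_{i,N}^{s}f=\mathcal I^{(s)}_{k,N}f-\pi_{i,N}^{s}f$. Since $\mathcal I^{(s)}_{k,N}f$ is again symmetric, summing over $i$ and taking $\operatorname{Res}$ yields the master recursion
\begin{equation*}
A_{m,s}=A_{m-1,s+1}+\tfrac12 kN\,A_{m-1,s}-\tfrac12 k\,\mathcal L^{(m-1)}_{k,N}\mathcal I^{(s)}_{k,N},\qquad m\ge 1.
\end{equation*}
In particular, setting $s=0$ the last two terms cancel because $\mathcal I^{(0)}_{k,N}=N$, so $\mathcal L^{(m)}_{k,N}=A_{m-1,1}$.

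Next I prove by induction on $m$ the formula $A_{m,s}=\sum_{a=0}^{m}\mathcal I^{(s+m-a)}_{k,N}f_{a}^{(m)}$, valid for all $s\ge 0$. The base $m=0$ is $A_{0,s}=\mathcal I^{(s)}_{k,N}f_0^{(0)}$ with $f_0^{(0)}=1$. For the inductive step, substitute the hypothesis for $A_{m-1,s+1}$ and $A_{m-1,s}$, and also for $\mathcal L^{(m-1)}_{k,N}=A_{m-1,0}$, into the master recursion. The commutativity of the $\mathcal I^{(r)}_{k,N}$ (Polychronakos' theorem) lets the $f_a^{(m-1)}$ be treated as polynomials in the $\mathcal I^{(r)}_{k,N}$ and lets $\mathcal I^{(s)}_{k,N}$ be moved to the front of the third term. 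After reindexing so that both of the first two sums involve the prefactor $\mathcal I^{(s+m-b)}_{k,N}$, one collects
$$A_{m,s}=\sum_{b=0}^{m-1}\mathcal I^{(s+m-b)}_{k,N}\bigl(f_{b}^{(m-1)}+\tfrac12 kN\,f_{b-1}^{(m-1)}\bigr)+\mathcal I^{(s)}_{k,N}\Bigl(\tfrac12 kN\,f_{m-1}^{(m-1)}-\tfrac12 k\sum_{a=0}^{m-1}\mathcal I^{(m-1-a)}_{k,N}f_{a}^{(m-1)}\Bigr),$$
with the convention $f_{-1}^{(m-1)}=0$. By (\ref{recurr1}) the first parenthesis equals $f_b^{(m)}$ for $0\le b\le m-1$, and by (\ref{recurr2}) the second parenthesis equals $f_m^{(m)}$. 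This completes the induction.

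Specializing the proved identity at $s=0$ yields the first form of (\ref{twoin}); specializing at $m=r-1$, $s=1$ and using $\mathcal L^{(r)}_{k,N}=A_{r-1,1}$ yields the second. The main obstacle is obtaining the master recursion in a form where the correction is an operator-polynomial in the commuting $\mathcal I^{(r)}_{k,N}$; this is exactly what the sharp identity for $\sigma_i\pi_{i,N}^{s}f$ provides, and it also dictates why the coefficient $f_{r+1}^{(r+1)}$ in (\ref{recurr2}) has to have the specific shape $-\tfrac12 k\sum_a\mathcal I^{(r-a)}_{k,N}f_a^{(r)}$.
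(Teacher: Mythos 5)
Your proof is correct and follows essentially the same route as the paper's: the same decomposition $D_{i,N}=\pi_{i,N}+\tfrac12 k\sigma_i$ from (\ref{polydif}), the same key identity $\sigma_i\pi_{i,N}^{s}f=N\pi_{i,N}^{s}f-\mathcal I^{(s)}_{k,N}f$ on symmetric $f$, and the same induction that forces the recurrences (\ref{recurr1}), (\ref{recurr2}). The only difference is organizational: your bigraded family $A_{m,s}$ delivers both equalities in (\ref{twoin}) as specializations of one two-parameter identity, whereas the paper proves $\operatorname{Res}D_{i,N}^{r}=\operatorname{Res}\sum_{a}\pi_{i,N}^{r-a}f_a^{(r)}$ and then obtains the second equality by unfolding the recurrence once using $\mathcal I^{(0)}_{k,N}=N$.
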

\begin{proof} 
We claim that 
 \begin{equation}
\label{dinr}
Res \, D_{i,N}^r=Res\, \sum_{a=0}^r\pi_{i,N}^{r-a}f_a^{(r)},
\end{equation}
where $Res$ as before  is the restriction  on symmetric polynomials $\Lambda^{\pm}_{N}.$
Indeed, introduce 
$
S_i=\sum_{j\ne i}(1-s_{ij})
$
and assume (\ref{dinr}).
Then
$$
Res \,D_{i,N}^{r+1}=Res \,\sum_{a=0}^r(\pi_{i,N}+ \frac12kS_i)\pi_{i,N}^{r-a}f_a^{(r)}
=Res \, \sum_{a=0}^r\pi_{i,N}^{r-a+1}f_a^{(r)}$$
$$+ Res \,\frac12k\sum_{a=0}^rS_i\pi_{i,N}^{r-a}f_a^{(r)}
=Res \,\sum_{a=0}^r\pi_{i,N}^{r-a+1}f_a^{(r)}+Res \,\frac12k\sum_{a=0}^r(N\pi_{i,N}^{r-a}-I^{(r-a)}_{k,N})f_a^{(r)}
$$
$$
=Res \,\sum_{a=0}^r\pi_{i,N}^{r-a+1}(f_a^{(r)}+ \,\frac12kNf^{(r)}_{a-1})+Res \,\frac12k(Nf^{(r)}_r-\sum_{a=0}^rI^{(r-a)}_{k,N}f_a^{(r)}),
$$
which leads to the recurrent relations (\ref{recurr1}),(\ref{recurr2}).

Now using this we have from (\ref{heckcm}) for $r \geq 1$
$$
\mathcal L^{(r)}_{k,N}=\sum_{a=0}^r\mathcal I^{(r-a)}_{k,N} f_a^{(r)}=\sum_{a=0}^{r-1}\mathcal I^{(r-a)}_{k,N}(f_a^{(r-1)}+\frac12kNf^{(r-1)}_{a-1})
$$
$$+\frac12 kN^2 f_{r-1}^{(r-1)}-\frac12 kN \sum_{a=0}^{r-1}\mathcal I^{(r-a-1)}_{k,N}f_a^{(r-1)}
=\sum_{a=0}^{r-1}\mathcal I^{(r-a)}_{k,N}f_a^{(r-1)}
$$
since $\mathcal I^{(0)}_{k,N}=N.$ Note also that for $r=1,2$ we have $\mathcal L^{(r)}_{k,N}=\mathcal I^{(r)}_{k,N}.$
\end{proof}

 Let us define now the {\it infinite dimensional analogue of the Polychronakos operator} on $\Lambda^{\pm}[x]$ by the formulas 
 $$
 \pi_{k,p_0}=\partial-k\tilde\Delta_{p_0},\,\,\partial(x^l)=lx^l,\,\,\partial(p_l)=lx^l,\,\,l\in \Bbb Z,
 $$
 $$
 \tilde\Delta_{p_0}(x^l)=\begin{cases} x^l(p_0-l)+x^{l-1}p_1+\dots+xp_{l-1},\,\,l>0\\
0,\,\,l=0\\ -(lx^l+x^{l+1}p_{-1}+\dots+p_l),\,l<0.
\end{cases}
$$

 Let $E_{p_0}: \Lambda^{\pm}[x,x^{-1}]\longrightarrow \Lambda^{\pm}$  be the operator defined by formula (\ref{E}) above
and consider the following set of infinite dimensional CMS integrals 
 \begin{equation}\label{pir}
 \mathcal{I}^{(r)}_{k,p_0}=E_{p_0}\circ \pi_{k,p_0}^r, \,\, r \in \mathbb Z_{\geq 0},
 \end{equation}
 where again the action of the right hand side is restricted to $\Lambda^{\pm}.$
Their commutativity follows from the same arguments as in Theorem \ref{Heck}.

Let $\mathcal {D}_I(k,p_0)$ be the algebra of differential operators on $\Lambda$ generated over $\mathbb C[p_0]$ by the CMS integrals $\mathcal{I}^{(j)}_{k,p_0}, j \in \mathbb N$.

\begin{proposition}    
$$
\mathcal D_I(k,p_0)=\mathcal D_L(k,p_0).
$$ 
\end{proposition}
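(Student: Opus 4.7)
The plan is to lift the finite-dimensional identity (\ref{twoin}) to the infinite-dimensional setting and then invert it triangularly. First I would introduce infinite-dimensional analogues of the coefficients $f^{(r)}_a$ by the same recurrences as (\ref{recurr1})--(\ref{recurr2}) but with $N$ replaced by the formal parameter $p_0$ and $\mathcal{I}^{(r-a)}_{k,N}$ replaced by $\mathcal{I}^{(r-a)}_{k,p_0}$. This produces elements $f^{(r)}_a \in \mathcal{D}_I(k,p_0)$ which by construction are polynomials in $p_0$ with coefficients in the subalgebra generated by the $\mathcal{I}^{(s)}_{k,p_0}$.

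Next I would define
\[
\widetilde{\mathcal L}^{(r)}_{k,p_0} := \sum_{a=0}^{r-1}\mathcal I^{(r-a)}_{k,p_0}\, f^{(r-1)}_a \in \mathcal{D}_I(k,p_0)
\]
and show that $\widetilde{\mathcal L}^{(r)}_{k,p_0}=\mathcal L^{(r)}_{k,p_0}$. Both sides are differential operators on $\Lambda^\pm$ which depend polynomially on $p_0$: for $\mathcal L^{(r)}_{k,p_0}$ this is part of Theorem \ref{Heck}, and for $\widetilde{\mathcal L}^{(r)}_{k,p_0}$ it follows from the corresponding statement for each $\mathcal I^{(s)}_{k,p_0}$ (proved just as in Theorem \ref{Heck}) together with the inductive definition of $f^{(r)}_a$. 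By the same commutative-diagram argument used in (\ref{commdi1}) for both the Heckman and Polychronakos constructions, the specialization at $p_0=N$ intertwines with $\varphi_N$, so for every $f\in\Lambda^\pm$ and every $N$,
\[
\varphi_N\bigl((\widetilde{\mathcal L}^{(r)}_{k,p_0}-\mathcal L^{(r)}_{k,p_0})(f)\bigr)=\bigl(\widetilde{\mathcal L}^{(r)}_{k,N}-\mathcal L^{(r)}_{k,N}\bigr)(\varphi_N(f))=0
\]
by the finite-dimensional identity (\ref{twoin}). Applying Lemma \ref{edin} coefficient by coefficient (in the finitely many $p_a$ on which a given expression depends) gives the desired equality, whence $\mathcal L^{(r)}_{k,p_0}\in \mathcal D_I(k,p_0)$, and so $\mathcal D_L(k,p_0)\subseteq \mathcal D_I(k,p_0)$.

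For the reverse inclusion I would invert the identity triangularly. From the recurrences we have $f^{(r-1)}_0=1$ for all $r\ge 1$, so the relation reads
\[
\mathcal L^{(r)}_{k,p_0}=\mathcal I^{(r)}_{k,p_0}+\sum_{a=1}^{r-1}\mathcal I^{(r-a)}_{k,p_0}\,f^{(r-1)}_a,
\]
where each $f^{(r-1)}_a$ is a polynomial in $p_0$ and in $\mathcal I^{(s)}_{k,p_0}$ with $s<r$. Induction on $r$ then expresses $\mathcal I^{(r)}_{k,p_0}$ as a polynomial (over $\mathbb C[p_0]$) in $\mathcal L^{(1)}_{k,p_0},\dots,\mathcal L^{(r)}_{k,p_0}$, using $\mathcal I^{(0)}_{k,p_0}=p_0$ and $\mathcal I^{(1)}_{k,p_0}=\mathcal L^{(1)}_{k,p_0}$ as base cases. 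Hence $\mathcal I^{(r)}_{k,p_0}\in\mathcal D_L(k,p_0)$ for all $r$, giving $\mathcal D_I(k,p_0)\subseteq\mathcal D_L(k,p_0)$.

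The main obstacle I anticipate is the transfer step: establishing cleanly that the identity (\ref{twoin}), whose finite-dimensional proof uses the combinatorial operator $S_i=\sum_{j\ne i}(1-s_{ij})$ with the crucial property $S_i f=Nf$ on symmetric $f$, continues to hold after passing to the parameter $p_0$. One has to argue that both the Heckman side and the Polychronakos side depend polynomially on $p_0$ and commute with all $\varphi_N$ in the appropriate sense; once this is in place, Lemma \ref{edin} does the rest. Everything else is routine linear-algebraic triangularity.
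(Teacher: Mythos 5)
Your proof is correct, and its skeleton is the same as the paper's: you introduce the $p_0$-deformed recursion for the coefficients (the paper's $\hat f_a^{(r)}$ in (\ref{recurrr1})--(\ref{recurrr2})), you arrive at the relation (\ref{twoint}) between the two families of integrals, and you invert it triangularly using $\hat f_0^{(r)}=1$ together with $\mathcal I^{(0)}_{k,p_0}=p_0$, exactly as the paper does. The one place you genuinely diverge is in how (\ref{twoint}) is justified. The paper asserts the operator identity $D_{k,p_0}^r=\sum_{a=0}^r\pi_{k,p_0}^{r-a}\hat f_a^{(r)}$ directly on $\Lambda^{\pm}[x,x^{-1}]$, i.e.\ it lifts the finite-dimensional computation behind (\ref{dinr}) verbatim with $N$ replaced by $p_0$, and leaves the check to the reader. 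You instead verify the resulting identity of integrals by specializing at $p_0=N$ via $\varphi_N$, using the commutative diagrams for both the Heckman and the Polychronakos constructions and then Lemma \ref{edin}. Your route obliges you to establish polynomial dependence on $p_0$ and the intertwining property for the $\mathcal I^{(r)}_{k,p_0}$ (both hold, by the same arguments as in Theorem \ref{Heck}, since $\varphi_{i,N}\circ\pi_{k,p_0}=\pi_{i,N}\circ\varphi_{i,N}$ at $p_0=N$), but in exchange it avoids redoing the operator algebra in infinite dimension; it is the same reduction-to-finite-$N$ device the paper itself uses to prove commutativity of the integrals, and it makes the transfer step you rightly flag as the main obstacle fully explicit rather than implicit. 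Both routes are sound.
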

\begin{proof}
It is enough to show that  the integrals $\mathcal{L}^{(j)}_{k,p_0}$ can be expressed polynomially through integrals $\mathcal{I}^{(i)}_{k,p_0}, \, i\leq j$  with 
coefficients polynomially depending on $p_0,$ and vice versa.

Define the operators $\hat f_a^{(r)}$ recursively by
\begin{equation}\label{recurrr1}
\hat f_a^{(r+1)}=\hat f_a^{(r)}+\frac12 kp_0 \hat f_{a-1}^{(r)},\,\, a=0, \dots, r,
\end{equation}
\begin{equation}\label{recurrr2}
\hat f_{r+1}^{(r+1)}=\frac12 kp_0 \hat f_{r}^{(r)}-\frac12 k \sum_{a=0}^r\mathcal I^{(r-a)}_{k,p_0}\hat f_a^{(r)}
\end{equation}
with initial data $\hat f_0^{(0)}=1$ and $\hat f_{a}^{(r)}=0$ when $a<0$ or $a>r.$
One can check that for such operators we have the relation
$$
D_{k,p_0}^r= \sum_{a=0}^r \pi_{k,p_0}^r \hat f_a^{(r)}
$$
and hence the relation between the corresponding CMS integrals
\begin{equation}\label{twoint}
\mathcal L^{(r)}_{k,p_0}= \sum_{a=0}^{r}\mathcal I^{(r-a)}_{k,p_0}\hat f_a^{(r)}=\sum_{a=0}^{r-1}\mathcal I^{(r-a)}_{k,p_0}\hat f_a^{(r-1)}, \, r \in \mathbb N.
\end{equation}
Since $\hat f_0^{(r)}=1$ we can reverse these formulas to express polynomially 
$ \mathcal I^{(r)}_{k,p_0}$ via $\mathcal L^{(a)}_{k,p_0}$ with $a\leq r.$ In particular, for $r=1,2$ we have $\mathcal I^{(r)}_{k,p_0}= \mathcal L^{(r)}_{k,p_0}.$
\end{proof}

Now we define the third set of CMS integrals $\mathcal{H}_{k}^{(r)}$ (see formula (\ref{stabint}) below), 
which do not depend on $p_0$ and generate the algebra $\mathcal D(k).$
It turns out that all these three sets of CMS integrals generate the same algebra 
if we allow the coefficients to depend polynomially on $p_0.$

\begin{thm} 
\begin{equation}
\label{thm55}
\mathcal D(k)[p_0]=\mathcal D_L(k,p_0).
\end{equation}
\end{thm}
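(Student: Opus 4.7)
The plan is to combine two inputs: the preceding proposition, which identifies $\mathcal{D}_L(k,p_0)$ with the Polychronakos-generated algebra $\mathcal{D}_I(k,p_0)$, and the still-to-come formula (\ref{stabint}), which constructs the stable integrals $\mathcal{H}^{(r)}_k$. Both $\mathcal{D}(k)[p_0]$ and $\mathcal{D}_L(k,p_0)$ are filtered by differential order (by Theorem \ref{Heck} each $\mathcal{L}^{(r)}_{k,p_0}$, and by the same argument each $\mathcal{I}^{(r)}_{k,p_0}$, has order $r$ with polynomial dependence on $p_0$), so the two inclusions will be established by induction on $r$.

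For $\mathcal{D}(k)[p_0]\subseteq \mathcal{D}_L(k,p_0)$, it suffices to show $\mathcal{H}^{(r)}_k\in \mathcal{D}_L(k,p_0)$ for every $r$, since $\mathcal{D}(k)$ is generated by the $\mathcal{H}^{(r)}_k$ and $\mathbb{C}[p_0]$ sits inside $\mathcal{D}_L(k,p_0)$ trivially. By the construction in (\ref{stabint}), each $\mathcal{H}^{(r)}_k$ is exhibited explicitly as a $\mathbb{C}[p_0]$-linear combination of the Polychronakos integrals $\mathcal{I}^{(s)}_{k,p_0}$ with $s\le r$, with coefficients chosen precisely to cancel the $p_0$-dependence; this gives the inclusion at once. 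For the reverse inclusion $\mathcal{D}_L(k,p_0)\subseteq \mathcal{D}(k)[p_0]$, induct on $r$: since (\ref{stabint}) is triangular in $r$ with leading term $\mathcal{I}^{(r)}_{k,p_0}$, one can invert it to write $\mathcal{I}^{(r)}_{k,p_0}$ as $\mathcal{H}^{(r)}_k$ plus a $p_0$-polynomial combination of the $\mathcal{I}^{(s)}_{k,p_0}$ with $s<r$, all of which lie in $\mathcal{D}(k)[p_0]$ by the inductive hypothesis.

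The main obstacle is verifying that the specific combination in (\ref{stabint}) genuinely produces a $p_0$-independent stable operator on $\Lambda$, i.e.\ an element of the inverse limit $\mathcal{D}(k)=\varprojlim\mathcal{D}_N(k)$. My approach would be to establish stability level-by-level: for each $N$, use the commutative diagram (\ref{commdi1}) (the Polychronakos analogue of which follows from the same argument used in the proof of Theorem \ref{Heck}) to identify the specialization at $p_0=N$ of the candidate combination with the corresponding combination of finite-$N$ integrals, whose stability with respect to $\phi_{N,N-1}:\Lambda_N\to\Lambda_{N-1}$ is classical. Once these identities hold for every $N$, Lemma \ref{edin} lets me promote coincidence at all specializations $p_0=N$ to an equality of differential operators with polynomial dependence on $p_0$, yielding the universal stable operator $\mathcal{H}^{(r)}_k\in\mathcal{D}(k)$ and completing both inclusions.
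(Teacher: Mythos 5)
Your overall architecture is the same as the paper's: reduce to the Polychronakos algebra via the preceding proposition, observe that (\ref{stabint}) is a triangular $\mathbb C[p_0]$-linear relation between the $\mathcal H^{(r)}_{k}$ and the $\mathcal I^{(s)}_{k,p_0}$ that can be inverted, and then concentrate all the real content in the claim that the combination $\sum_{j}\binom{r-1}{j-1}(kp_0)^{r-j}\mathcal I^{(j)}_{k,p_0}$ is a genuine $p_0$-independent element of $\mathcal D(k)=\varprojlim\mathcal D_N(k)$. Your mechanism for promoting the specializations $p_0=N$ to an identity of operators via Lemma \ref{edin} is also consistent with how the paper handles such statements.

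The one place where your argument has a gap is exactly the point you label ``classical'': the stability of the finite-dimensional combinations $\sum_{j}\binom{r-1}{j-1}(kN)^{r-j}\mathcal I^{(j)}_{k,N}=\mathrm{Res}\sum_i(\pi_{i,N}+kN)^{r-1}\pi_{i,N}$ under $\phi_{N,N-1}$. What is classical is the existence of \emph{some} stable higher CMS integrals; that \emph{these particular} $\mathbb C[p_0]$-combinations of the Polychronakos integrals are the stable ones is precisely what has to be proved, and it is where the paper does its actual work. The paper's argument is short but specific: on $\Lambda_N$ one has $\pi_{i,N}=\partial_i$, the operator $\partial_i$ maps $\Lambda_N[x_i]$ into the ideal $J=(x_i)$, and by the explicit action (\ref{polyN}) the operator $\pi_{i,N}+kN$ preserves $J$; hence $(\pi_{i,N}+kN)^{r-1}\partial_i$ maps $\Lambda_N$ into $J$, which is what makes the $i=N$ term die under $x_N\mapsto 0$ and yields compatibility with $\phi_{N,N-1}$. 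Without this (or an equivalent computation showing that the $N$-dependence of $\pi_{i,N}$ under $x_N\mapsto 0$ is cancelled by the shift $+kN$), your induction has nothing to stand on, since an arbitrary $\mathbb C[p_0]$-combination of the $\mathcal I^{(j)}_{k,p_0}$ would of course not be stable. So: same route as the paper, but you need to supply the ideal argument (or a direct verification from (\ref{polyN})) rather than cite stability as known.
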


\begin{proof}  To define the stable quantum CMS integrals we note first that on the algebra $\Lambda[x]$ we have
\begin{equation}\label{stabi}
(\pi_{k,p_0}+kp_0)^{r-1}\partial =\lim_{\longleftarrow}(\pi_{i,N}+kN)^{r-1}\partial_{i}, \, \, r \in \mathbb N. 
\end{equation}
Indeed, the operator $\partial_{i}$ maps algebra  $\Lambda_{N}[x_i]$ into the ideal  $J$ generated by $x_i$, while the operator $\pi_{i,N}+kN$ maps the ideal  $J$ into itself (see formula (\ref{polyN})). Therefore the operator  $(\pi_{i,N}+kN)^{r-1}\partial_{i}$  maps the algebra $\Lambda_{N}[x_i]$ into the ideal $J$ and can be checked to be stable. 

Consider the operators 
$$
H^{(r)}_{i,N}:=\left(\pi_{i,N}+kN\right)^{r-1}\pi_{i,N}:\Lambda_{N}\longrightarrow \Lambda_{N}[x_i].
$$
Note that on $\Lambda_N$ we have $\pi_{i,N}=\partial_{i}.$ From (\ref{stabi}) it follows that these operators are stable and their inverse limit can be naturally identified with 
$$
 H_{k}^{(r)}:=\left(\pi_{k,p_0}+kp_0\right)^{r-1}\pi_{k,p_0} :\Lambda\longrightarrow \Lambda[x].
$$
Thus the integrals 
\begin{equation}
\label{stabint}
\mathcal{H}_{k}^{(r)}=E_{p_0} \circ H_{k}^{(r)}=\sum_{j=1}^{r} {r-1\choose j-1} (kp_0)^{r-j} \mathcal I^{(j)}_{k,p_0}, \, r \in \mathbb N
\end{equation}
do not depend on $p_0$ and belong to $\mathcal D(k)$.  
In particular, for $r=2$ we have $\mathcal{H}_{k}^{(2)}=\mathcal{I}_{k,p_0}^{(2)}+kp_0\mathcal{I}_{k,p_0}^{(1)}=\mathcal{L}_{k,p_0}^{(2)}+kp_0\mathcal{L}_{k,p_0}^{(1)}$, which is the stable version of the CMS operator (\ref{infinityL}) on $\Lambda$ (see \cite{Awata,Stanley}):
\begin{equation}\label{infinityH}
{ \mathcal H}_{k}^{(2)}=\sum_{a,b\in \mathbb N}p_{a+b}\partial_{a}\partial_{b}-k\sum_{a,b \in \mathbb N}p_{a}p_b \partial_{a+b}
 +(1+k)\sum_{a\in\mathbb N}a p_a\partial_a.
\end{equation}
From (\ref{stabint}) one can also express $\mathcal{I}_{k,p_0}^{(r)}$ as a polynomial of $\mathcal{H}_{k}^{(s)}$ with coefficients polynomially depending on $p_0.$
Since 
$\mathcal{H}_{k}^{(r)}, \, r \in \mathbb N$ generate the algebra $\mathcal D(k)$ 
the theorem now follows from the previous proposition.
\end{proof}

This theorem allows us, in particular, to define the Harish-Chandra homomorphism
$
\psi: \mathcal D_L(k,p_0) \rightarrow \Lambda(k)[p_0].
$

Let
$$
p_{r,a}=\lim_\leftarrow p_{r,a,N} \in \Lambda(k)
$$
be the inverse limit of shifted power sums (\ref{shiftedp}). They generate the algebra $\Lambda(k)[p_0]$ over $\mathbb C[p_0].$
\begin{thm} 
The Harish-Chandra homomorphism
$$
\psi : \mathcal {D}(k)[p_0]\longrightarrow \Lambda(k)[p_0]
$$
of algebras over $\mathbb C[p_0]$ transforms  the involution $*$ on the algebra $\mathcal {D}(k)[p_0]$ to the involution 
$w$ on $\Lambda(k)[p_0]$ defined by 
\begin{equation}\label{w12}
w(p_{r,a})=(-1)^r p_{r,k-kp_0-a}.
\end{equation}
More precisely, the following diagram is commutative:
$$
\begin{array}{ccc}
\mathcal {D}(k)[p_0]&\stackrel{*}{\longrightarrow}&\mathcal {D}(k)[p_0]\\ \downarrow
\lefteqn{\psi}& &\downarrow \lefteqn{\psi}\\
\Lambda(k)[p_0]&\stackrel{w}{\longrightarrow}& 
\Lambda(k)[p_0]. \\
\end{array}
$$
\end{thm}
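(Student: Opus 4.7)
The plan is to lift the finite-dimensional identity $\psi_N(\mathcal{L}^*) = w_N\circ\psi_N(\mathcal{L})$ of Theorem \ref{finiteL} to the inverse limit, by specializing $p_0 = N$ and applying a uniqueness argument in the spirit of Lemma \ref{edin}.

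First I would verify that the formula $w(p_{r,a}) = (-1)^r p_{r,k-kp_0-a}$ really defines an involution of $\Lambda(k)[p_0]$. The binomial expansion $p_{r,a} = \sum_{s=0}^{r}\binom{r}{s} a^{r-s} p_{s,0}$ (which comes from expanding $(\chi_i+k(i-1)+a)^r$ in $a$) shows that $\Lambda(k)[p_0]$ is generated over $\mathbb{C}[p_0]$ by the elements $p_{s,0}$, so it is enough to declare $w(p_{s,0}) = (-1)^s p_{s,k-kp_0}$ and $w(p_0)=p_0$, and then check the full prescription together with $w^2 = \mathrm{id}$ by a routine binomial manipulation.

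Next I would establish compatibility of both $\psi$ and $*$ with the specialization $p_0 = N$. The stable Harish--Chandra homomorphism $\psi$ is by construction the inverse limit of the $\psi_N$, so the square
\[
\begin{array}{ccc}
\mathcal{D}(k)[p_0]&\stackrel{\psi}{\longrightarrow}&\Lambda(k)[p_0]\\
\downarrow\lefteqn{p_0=N}& &\downarrow\lefteqn{p_0=N}\\
\mathcal{D}_N(k)&\stackrel{\psi_N}{\longrightarrow}& \Lambda_N(k)
\end{array}
\]
commutes. On the other hand, the identification $\mathcal{D}(k)[p_0]=\mathcal{D}_L(k,p_0)$ from (\ref{thm55}) shows that the $*$-involution on $\mathcal{D}(k)[p_0]$ is the one induced from $p_a^*=p_{-a}$ on $\Lambda^{\pm}$; under $\varphi_N$ with $p_0 = N$ this is precisely the $x_i\mapsto x_i^{-1}$ involution on $\Lambda_N^{\pm}$, hence the $*$ of $\mathcal{D}_N(k)$ used in Theorem \ref{finiteL}. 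Similarly, the identity (\ref{w3}) shows that $w$ specializes at $p_0 = N$ to $w_N$.

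Given these compatibilities, Theorem \ref{finiteL} applied to each $N$ yields $\psi_N(\mathcal{L}^*) = w_N(\psi_N(\mathcal{L}))$, so the element $g := \psi(\mathcal{L}^*) - w(\psi(\mathcal{L}))\in\Lambda(k)[p_0]$ vanishes under every specialization $p_0 = N$ followed by projection to $\Lambda_N(k)$. Writing $g$ as a polynomial in $p_0$ whose coefficients lie in the subalgebra generated by finitely many $p_{s,0}$, for $N$ sufficiently large the images $p_{s,0,N}$ are algebraically independent in $\Lambda_N(k)$, so each coefficient vanishes at every such $N$ and hence identically as a polynomial in $p_0$. Thus $g = 0$, proving the theorem. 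The main subtlety lies in the two compatibility statements above, and in particular in verifying that the $*$-involution genuinely descends through the identification $\mathcal{D}(k)[p_0]=\mathcal{D}_L(k,p_0)$ in a way compatible with specialization; once these are in place the rest is a direct transcription of the finite-dimensional statement.
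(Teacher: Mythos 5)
Your proposal is correct and follows essentially the same route as the paper's own proof: reduce to finite $N$ via the commutative square relating $\psi$ and $\psi_N$, invoke Theorem \ref{finiteL}, check that $w$ specializes to $w_N$ using the identity (\ref{w3}) on shifted power sums, and conclude by a Lemma \ref{edin}-type separation argument over all large $N$. The only additions are your explicit checks that $w$ is a well-defined involution and that $*$ descends through $\mathcal D(k)[p_0]=\mathcal D_L(k,p_0)$, which the paper leaves implicit.
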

\begin{proof} 
Let $\varphi_N$ be defined by (\ref{varphin}) and use the same notation for its action on  $\mathcal D(k)[p_0]$. Define also  $\phi_N: \Lambda(k)[p_0] \rightarrow \Lambda_N(k)$ by setting $p_0=N$ and 
$$\phi_N(p_{r,0})=p_{r,0,N}.$$
We have the following commutative diagram 
$$
\begin{array}{ccc}
\mathcal D(k)[p_0]&\stackrel{\psi}{\longrightarrow}& \Lambda(k)[p_0]\\
\downarrow \lefteqn{\varphi_N}& &\downarrow \lefteqn{\phi_N}\\
\mathcal D_N(k)&\stackrel{\psi_N}{\longrightarrow}&\Lambda_{N}(k),\\
\end{array} 
$$
so
$$
\phi_N(\psi (D^*)=\psi_N(\varphi_N(D^*))=w_N(\psi_N(\varphi_N(D))).
$$
Therefore we only need to prove that 
$
\phi_N(w(p))=w_N(\phi_N(p))
$
for any $p\in \Lambda(k)[p_0]$. It is enough to show this for shifted power sums, when
this reduces to the identity
$$
p_{r,a,N}(w(\chi))=(-1)^r p_{r,k-kN-a,N}(\chi),
$$
which is easy to check.
Theorem is proved.
\end{proof}

We can consider the elements from $\Lambda(k)[p_0]$  as  functions on bipartitions. Namely, for any bipartition $\alpha=(\lambda,\mu)$ define a homomorphism
$f_{\alpha} :\Lambda(k)[p_0]\longrightarrow \mathbb C[p_0]$ by
 \begin{equation}\label{fa}
f_{\alpha}(p_r)=p_r(\lambda)+w(p_r)(\mu),
\end{equation}
where $p_r=p_{r,0},$ and define for any $p\in \Lambda(k)[p_0]$ the value $p(\alpha)$ by $$p(\alpha):=f_{\alpha}(p).$$
 
 \begin{corollary}  Jack--Laurent symmetric functions are the eigenfunctions of the algebra $\mathcal D(k)[p_0]$ and for any $D\in \mathcal D(k)[p_0]$ we have
 \begin{equation}\label{HCh}
 DP^{(k,p_0)}_{\alpha}=\psi(D)(\alpha)P^{(k,p_0)}_{\alpha}.
 \end{equation}
 \end{corollary}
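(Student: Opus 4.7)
The plan is to verify the eigenvalue identity after applying each homomorphism $\varphi_N$ and then appeal to Lemma \ref{edin}. Set
\[
g := DP^{(k,p_0)}_{\alpha} - \psi(D)(\alpha)\,P^{(k,p_0)}_{\alpha} \in \Lambda^{\pm},
\]
which depends rationally on $p_0$ (since $P^{(k,p_0)}_{\alpha}$ does and $\psi(D)(\alpha)\in\mathbb C[p_0]$). By Theorem \ref{thm55} we may regard $D\in\mathcal D_L(k,p_0)$, so the commutative diagram (\ref{commdi1}), extended by linearity and composition, gives $\varphi_N\circ D = D_N\circ\varphi_N$, where $D_N\in\mathcal D_N(k)$ is the operator obtained by specialising $p_0=N$.

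First I would treat $N\ge l(\alpha)$. Combining $\varphi_N(P^{(k,p_0)}_{\alpha})=P^{(k)}_{\chi_N(\alpha)}$ with Theorem \ref{finiteL} yields
\[
\varphi_N(DP^{(k,p_0)}_{\alpha}) = D_N\bigl(P^{(k)}_{\chi_N(\alpha)}\bigr) = \psi_N(D_N)(\chi_N(\alpha))\cdot P^{(k)}_{\chi_N(\alpha)}.
\]
Using the commutative square $\psi_N\circ\varphi_N=\phi_N\circ\psi$ proved in the previous theorem, $\psi_N(D_N)=\phi_N(\psi(D))$, so everything reduces to the scalar identity $\phi_N(\psi(D))(\chi_N(\alpha))=f_{\alpha}(\psi(D))\bigr|_{p_0=N}$. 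Since $\Lambda(k)[p_0]$ is generated over $\mathbb C[p_0]$ by the shifted power sums $p_{r,a}$, it suffices to check
\[
p_{r,a,N}(\chi_N(\alpha)) = p_{r,a}(\lambda) + (-1)^r p_{r,k-kN-a}(\mu),
\]
which matches $f_{\alpha}(p_{r,a})\bigr|_{p_0=N}$ by the formula $w(p_{r,a})=(-1)^r p_{r,k-kp_0-a}$. This is the only genuine calculation: one splits the defining sum for $p_{r,a,N}$ along the blocks of $\chi_N(\alpha)$, observes that the middle zero block cancels, and in the last block substitutes $j=N-i+1$ to extract the sign $(-1)^r$ and the shift $a\mapsto k-kN-a$. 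Thus $\varphi_N(g)=0$ for every $N\ge l(\alpha)$.

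For $N<l(\alpha)$ one has $\varphi_N(P^{(k,p_0)}_{\alpha})=0$, and by the same commutation $\varphi_N(DP^{(k,p_0)}_{\alpha})=D_N(\varphi_N(P^{(k,p_0)}_{\alpha}))=0$, so $\varphi_N(g)=0$ for all $N$. Clearing the denominator of $g$ by a nonzero polynomial in $p_0$ (whose possible integer roots are avoided by the hypothesis $k\notin\mathbb Q$), we obtain an element $\tilde g\in\Lambda^{\pm}$ depending polynomially on $p_0$ with $\varphi_N(\tilde g)=0$ for all $N$, whence $\tilde g=0$ by Lemma \ref{edin}, so $g=0$. The main obstacle is the scalar identity in the second paragraph: matching the two different parametrisations of the eigenvalue --- the finite-dimensional Harish--Chandra image evaluated at $\chi_N(\alpha)$ versus the stable one evaluated at the bipartition $\alpha$ via $f_\alpha$ --- which is exactly where the asymmetric role of $\lambda$ and $\mu$ (through the involution $w$) becomes visible. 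Everything else is formal diagram-chasing.
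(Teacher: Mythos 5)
Your proof is correct and follows essentially the same route as the paper: apply $\varphi_N$, use the commutative diagrams from Theorem \ref{thm55} and (\ref{commdi1}) together with Theorem \ref{finiteL} and the square $\psi_N\circ\varphi_N=\phi_N\circ\psi$, reduce everything to the scalar identity on shifted power sums obtained by splitting the sum over the blocks of $\chi_N(\alpha)$, and conclude by the uniqueness argument of Lemma \ref{edin}. Your treatment of the case $N<l(\alpha)$ and of clearing the rational dependence on $p_0$ is slightly more explicit than the paper's, but the substance is identical.
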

 \begin{proof} Apply to both sides of (\ref{HCh}) the homomorphism $\varphi_{N}$ with $N\ge l(\lambda)+l(\mu)$. From (\ref{thm55}) and (\ref{commdi1}) we have  
 $$
 \varphi_{N} (DP^{(k,p_0)}_{\alpha})= \varphi_{N} (D) \varphi_N (P^{(k,p_0)}_{\alpha})=  \varphi_{N} (D) P^{(k)}_{\chi_N(\alpha)} 
$$
 $$
 =\psi_N(\varphi_N(D))(\chi_N(\alpha))P^{(k)}_{\chi_N(\alpha)}
=\phi_N(\psi(D))(\chi_N(\alpha))P^{(k)}_{\chi_N(\alpha)}.
 $$
 On the other hand we have
 $$
 \varphi_N(\psi(D)(\alpha)P^{(k,p_0)}_{\alpha})=\varphi_N(\psi(D)(\alpha))\varphi_N(P^{(k,p_0)}_{\alpha})=\varphi_N(\psi(D)(\alpha))P^{(k)}_{\chi_N(\alpha)},
 $$
 so we only need to prove that
 $
 \varphi_{N}(\psi(D)(\alpha))=\phi_N(\psi(D))(\chi_{N}(\alpha)).
 $
 It is enough to check this for $\psi(D)=p_r$.  We have
  $$
  \varphi_{N}( p_r(\alpha))=\varphi_{N}(p_r(\lambda))+\varphi_{N}(w(p_r)(\mu))
 = \varphi_{N}(p_r(\lambda))+w_N(\varphi_{N}(p_r(\mu))$$
 $$=p_{r,N}(\chi_N(\lambda))+p_{r,N}(w_N(\chi_N(\mu))).
  $$
 Since $N\ge l(\lambda)+l(\mu)$ we have
 $$\chi_N(\alpha)=\chi_N(\lambda)+w_N(\chi_N(\mu)).$$
It is easy to see that since for any $1\le i\le N$,  $\lambda_iw_N(\mu)_i=0$ we have
  $$
  p_{r,N}(\chi_N(\lambda))+ p_{r,N}(w_N(\chi_N(\mu)))=p_{r,N}(\chi_{N}(\alpha))=\phi_N(p_r)(\chi_{N}(\alpha)),
  $$
  which completes the proof.
 \end{proof}
 
 Now we can use this to prove the following important result. Let us consider again $p_0$ as a parameter.
 
 \begin{thm}
\label{simpl}
If $k \notin \mathbb Q$ and $kp_0\neq m+nk$ for all $m,n \in \mathbb Z_{>0}$ then the spectrum of the algebra $\mathcal {D}(k)[p_0]$ of quantum CMS integrals on $\Lambda^{\pm}$  is simple.
\end{thm}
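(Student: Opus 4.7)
By Corollary~5.6 each Jack--Laurent symmetric function $P^{(k,p_0)}_\alpha$ is a joint eigenfunction of $\mathcal D(k)[p_0]$ with eigenvalue of $D$ equal to $f_\alpha(\psi(D))$, and by Theorem~\ref{th41} the family $\{P^{(k,p_0)}_\alpha\}$ is a basis of $\Lambda^{\pm}$ under the present hypotheses. So simplicity of the joint spectrum is equivalent to injectivity of $\alpha \mapsto f_\alpha$ on bipartitions. Using the binomial expansion $p_{r,a} = \sum_{j=1}^{r} \binom{r}{j} a^{r-j} p_{j,0}$, it suffices to check this on the shifted power sums $p_r := p_{r,0}$, so the plan reduces to showing: if $f_\alpha(p_r) = f_\beta(p_r)$ for all $r \ge 1$, then $\alpha = \beta$.

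Setting $c := k - kp_0$ and using $w(p_r) = (-1)^r p_{r,c}$ from (\ref{w12}), a direct computation would give
$$
f_\alpha(p_r) \;=\; \sum_{x \in \mathcal A_\alpha} x^r \;-\; \sum_{x \in \mathcal B_\alpha} x^r,
$$
with the finite multisets
$$
\mathcal A_\alpha := \{\lambda_i + k(i-1)\}_{i=1}^{l(\lambda)} \sqcup \{-\mu_j - k(j-1) - c\}_{j=1}^{l(\mu)},
$$
$$
\mathcal B_\alpha := \{k(i-1)\}_{i=1}^{l(\lambda)} \sqcup \{-k(j-1) - c\}_{j=1}^{l(\mu)}.
$$
By Newton's identities for finite complex multisets, equality of all $r$-th power sums on both sides is equivalent to $\mathcal A_\alpha \sqcup \mathcal B_\beta = \mathcal A_\beta \sqcup \mathcal B_\alpha$ as multisets, modulo possible multiplicity of $0$.

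The core step, and where the main obstacle lies, will be a type analysis. Call $z \in \mathbb C$ of \emph{Type 1} if $z = n + k(i-1)$ for some $(n,i) \in \mathbb Z_{\ge 1}^2$, of \emph{Type 2} if $z = -n - k(j-1) - c$ for some $(n,j) \in \mathbb Z_{\ge 1}^2$, of \emph{Type 3} if $z = k(i-1)$ with $i \ge 1$, and of \emph{Type 4} if $z = -k(j-1) - c$ with $j \ge 1$. Then $\mathcal A_\alpha$ consists only of Type 1 and Type 2 elements and $\mathcal B_\alpha$ only of Type 3 and Type 4 elements. The technical lemma I need is: under $k \notin \mathbb Q$ and $kp_0 \neq m + nk$ for $m, n \in \mathbb Z_{>0}$, the representation within each type is unique, Types 1 and 2 are disjoint from each other and from Types 3 and 4, and every Type 1 or Type 2 element is nonzero. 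Each of the dangerous cross-type coincidences (Type 1 vs Type 2, Type 1 vs Type 4, Type 2 vs Type 3) reduces by elementary algebra to an equation $kp_0 = m + nk$ with $m \in \{\lambda_i + \mu_j,\, \lambda_i,\, \mu_j\} \subset \mathbb Z_{\ge 1}$ and $n$ a sum of positive indices, hence $m, n \ge 1$, precisely the excluded case; the remaining cross-type coincidences (Type 1 vs Type 3, Type 2 vs Type 4) are ruled out directly by $k \notin \mathbb Q$. Type 3 vs Type 4 coincidences can occur when $p_0$ is a positive integer but stay within the $\mathcal B$'s and do not interfere.

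Granted the disjointness, the conclusion follows immediately: extracting the Type 1 portion of the multiset equality (unambiguous because Type 1 elements are nonzero and live only in the $\mathcal A$'s) gives $\{\lambda_i + k(i-1)\}_{i=1}^{l(\lambda)} = \{\nu_i + k(i-1)\}_{i=1}^{l(\nu)}$ with $\beta = (\nu, \sigma)$. Uniqueness of the Type 1 decomposition under $k \notin \mathbb Q$ then forces $l(\lambda) = l(\nu)$ and $\lambda_i = \nu_i$ for all $i$, so $\lambda = \nu$. The symmetric argument on Type 2 gives $\mu = \sigma$, and therefore $\alpha = \beta$.
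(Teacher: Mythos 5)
Your argument is correct, and it follows the same overall strategy as the paper --- reduce simplicity to injectivity of $\alpha\mapsto f_\alpha$ on the generators $p_r=p_{r,0}$ of $\Lambda(k)[p_0]$ over $\mathbb C[p_0]$, rewrite the eigenvalues as signed power sums of a finite multiset of complex numbers, and then show the multisets separate under the hypotheses on $k$ and $p_0$ --- but the realization is genuinely different. The paper first passes from the generators $p_{r,a}$ to the Bernoulli-polynomial combinations $b_l(k,a)$ (its Lemma \ref{form}): the telescoping identity $B_l(x+1)-B_l(x)=lx^{l-1}$ converts the row sums into sums of powers of \emph{box contents} $c(\Box,0)$ and $c(\Box,1+k-kp_0)$, with no correction terms, so the paper's multiset identity involves only the two "$\mathcal A$-type" families and the case analysis has just two branches (content of $\lambda$ matches content of $\tilde\lambda$, or content of $\lambda$ matches negated shifted content of $\mu$, the latter forcing $kp_0=m+nk$). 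You instead work directly with the \emph{row} quantities $\lambda_i+k(i-1)$ and must carry the correction multisets $\mathcal B_\alpha$, which is why you need the fuller four-type disjointness analysis; all your cross-type exclusions reduce to the same two mechanisms ($k\notin\mathbb Q$ for same-sign coincidences, $kp_0=m+nk$ with $m,n\ge 1$ for cross-sign ones), and your observation that Type 3/Type 4 collisions and zero elements stay confined to the $\mathcal B$'s is exactly the point that makes the row-based version work. The trade-off: the paper's Bernoulli device buys a cleaner multiset identity at the cost of an extra lemma, while your version is more elementary and self-contained but requires the more careful bookkeeping you supplied. One small caveat: the basis property of $\{P^{(k,p_0)}_\alpha\}$ that you invoke is stated in the paper only in the remark following Theorem \ref{th41}, not in the theorem itself, so you should cite that remark rather than the theorem; this does not affect the substance of the argument.
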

\begin{proof} 
Consider the following  shifted symmetric functions
\begin{equation}
\label{bl}
b_{l}(k,a)(x)=\sum_{i\ge1}^{\infty}\left[ B_{l}(x_i+k(i-1)+a)-B_{l}(k(i-1)+a)\right],
\end{equation}
where $B_l(z)$ are the classical Bernoulli polynomials \cite{WW} and $a$ is a parameter. 
They generate the algebra of shifted symmetric functions $\Lambda(k)$.

\begin{lemma}\label{form} We have the following formula
$$
b_{l}(k,a)(\alpha)=l\sum_{\Box\in \lambda} c(\Box,0)^{l-1}
+(-1)^l l\sum_{\Box\in \mu} c(\Box,1+k-kp_0)^{l-1},
$$
where for $\Box=(ij)$ we define
$$
c(\Box,a):=(j-1)+k(i-1)+a.
$$
\end{lemma}
\begin{proof} We have $$B_l(z)=\sum_{s=0}^l b_{ls} z^s, \,\, b_{ls}={l\choose s} B_{l-s},$$ 
where $B_j$ are the Bernoulli numbers.
This implies  
$
b_{l}(k,a)=\sum_{s}b_{ls}p_{s,a}
$
and thus by (\ref{w12})
$$
w(b_{l}(k,a))=\sum_{s}b_{ls}(-1)^sp_{s,k-kp_0-a}.
$$
Using the standard property of Bernoulli polynomials 
$
 B_{l}(-x)=(-1)^lB_{l}(1+x)
$
(see \cite{WW}), we have
$$
\sum_{s}b_{ls}(-1)^sp_{s,k-kp_0-a}=\sum_{s}b_{ls}p_{s,1+k-kp_0-a}.
$$
Therefore 
$
w(b_{l}(k,a))=(-1)^lb_{k,1+k-kp_0}
$
and now lemma follows from the equality 
\begin{equation}
\label{Bernoul1}
b_{l}(\lambda,k,a)=l\sum_{(i,j)\in \lambda} \left[(j-1)+k(i-1)+a\right]^{l-1}.
\end{equation}
\end{proof}

Let us assume now that $b_{l+1}(\alpha)=b_{l+1}(\tilde\alpha).$ Then we have
$$
\sum_{x\in\lambda} c(x,0)^l+(-1)^{l+1}\sum_{y\in\tilde\mu} c(y,1+k-kp_0)^l=\sum_{x\in\tilde\lambda} c(x,0)^l+(-1)^{l+1}\sum_{y\in\mu} c(y,1+k-kp_0)^l.
$$
If this is true for all $l \in \mathbb Z_{\ge 0},$ then the sequences 
$$
(c(x,0), -c(y,1+k-kp_0))_{x\in\lambda,y\in\tilde\mu},\quad (c(x,0), -c(y,1+k-kp_0))_{x\in\tilde\lambda,y\in\mu}
$$
coincide  up to a permutation. Therefore we have for every $x\in \lambda$ two possibilities: $c(x,0)=c(\tilde x,0)$ for some $\tilde x\in\tilde\lambda$, or $c(x,0)=-c(\tilde y,1+k-kp_0)$ for some $\tilde y\in\mu$. In the first case we have for
 $x=(ij),\,\tilde x=(\tilde i\tilde j),$ so that $j-\tilde j+k(i-\tilde i)=0$,
so $j=\tilde j,\,i=\tilde i$ since $k$ is not rational. 

In the second case we have for $
 \tilde y=(\tilde i\tilde j)$ that $kp_0=j+\tilde j-1+k(i+\tilde i-1),$
which contradicts to our assumption, since both $j+\tilde j-1$ and $i+\tilde i-1$ are positive integers.
\end{proof}

\begin{corollary}\label{cor22}
Jack--Laurent symmetric functions obey the following $\theta$-duality property 
\begin{equation}
\label{thetadua}
\theta^{-1}(P^{(k,p_0)}_{\alpha})=d_{\alpha}P^{(k^{-1},kp_0)}_{\alpha'}
\end{equation}
with some constants $d_{\alpha}=d_{\alpha}(k, p_0)$ and $\alpha'=(\lambda',\mu')$.
\end{corollary}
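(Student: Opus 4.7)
The strategy is to combine the $\theta$-conjugation symmetry (\ref{sym12}) of the CMS integrals with the simplicity of their joint spectrum established in Theorem \ref{simpl}. Starting from the Harish-Chandra eigenvalue equation $\mathcal{L}^{(r)}_{k,p_0} P^{(k,p_0)}_{\alpha} = \lambda^{(r)}_\alpha(k,p_0)\, P^{(k,p_0)}_{\alpha}$ just established, I would rewrite (\ref{sym12}) as $\theta^{-1}\mathcal{L}^{(r)}_{k,p_0} = k^{r-1} \mathcal{L}^{(r)}_{k^{-1},kp_0}\,\theta^{-1}$ and apply $\theta^{-1}$ to both sides to obtain
\[
\mathcal{L}^{(r)}_{k^{-1},kp_0}\bigl(\theta^{-1}(P^{(k,p_0)}_{\alpha})\bigr) = k^{1-r}\lambda^{(r)}_\alpha(k,p_0)\, \theta^{-1}(P^{(k,p_0)}_{\alpha}),
\]
so $\theta^{-1}(P^{(k,p_0)}_{\alpha})$ is a joint eigenfunction of the dual family of CMS integrals on $\Lambda^{\pm}$.

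Next, I would observe that the genericity condition $k \notin \mathbb{Q}$, $kp_0 \neq m+nk$ of Theorem \ref{simpl} is invariant under $(k,p_0) \mapsto (k^{-1},kp_0)$, since then $k^{-1}(kp_0)=p_0$ and the constraint reads $p_0 \neq n+mk^{-1}$, the same condition with $m,n$ swapped. Hence the joint spectrum of $\mathcal{D}(k^{-1})[kp_0]$ on $\Lambda^{\pm}$ is also simple, and by the remark following Theorem \ref{th41} the $P^{(k^{-1},kp_0)}_\beta$ form a basis of $\Lambda^{\pm}$. Therefore there exist a unique bipartition $\beta$ and a unique scalar $d_\alpha = d_\alpha(k,p_0)$ with $\theta^{-1}(P^{(k,p_0)}_{\alpha}) = d_\alpha P^{(k^{-1},kp_0)}_\beta$.

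To identify $\beta$ with $\alpha' := (\lambda', \mu')$, I would match eigenvalues on the generators $b_l(k,0)$ from Lemma \ref{form}. The key computation is the content rescaling under transpose: for any box $\square=(i,j)$ with transpose $\square'=(j,i)$,
\[
c_k(\square,0) = k\, c_{k^{-1}}(\square',0), \qquad c_k(\square,\,1+k-kp_0) = k\, c_{k^{-1}}(\square',\,1+k^{-1}-p_0),
\]
using $1+k^{-1}-k^{-1}(kp_0)=1+k^{-1}-p_0$ in the dual theory. Summing through Lemma \ref{form} yields $b_l(k,0)(\alpha) = k^{l-1}\, b_l(k^{-1},0)(\alpha')$, which exactly absorbs the factor $k^{1-r}$ appearing in the eigenvalues of $\theta^{-1}(P^{(k,p_0)}_{\alpha})$. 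Since the $b_l(k^{-1},0)$ generate $\Lambda(k^{-1})[kp_0]$, the eigenvalues of $\theta^{-1}(P^{(k,p_0)}_{\alpha})$ on the dual CMS integrals match those of $P^{(k^{-1},kp_0)}_{\alpha'}$, and simplicity of the spectrum forces $\beta = \alpha'$.

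The main obstacle is this last identification: the two independent factors of $k$ --- one from $\theta^{-1}\mathcal{L}^{(r)}\theta = k^{r-1}(\cdots)$ and one from the content rescaling under partition conjugation --- must cancel precisely, and the $(-1)^l$ sign pattern in the $\mu$-contribution of Lemma \ref{form} must be compatible with the transposition $\mu \mapsto \mu'$ in the dual $k^{-1}$-content. Once these cancellations are verified, the existence of the constant $d_\alpha$ and the identity $\beta = \alpha'$ follow automatically from the simplicity of the spectrum; the constant $d_\alpha$ itself could then be extracted by a leading-term comparison but is not needed for the present statement.
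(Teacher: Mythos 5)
Your proposal is correct and follows essentially the same route as the paper: use the symmetry (\ref{sym12}) to show $\theta^{-1}(P^{(k,p_0)}_{\alpha})$ is a joint eigenfunction of the integrals at parameters $(k^{-1},kp_0)$, invoke the simplicity of the spectrum from Theorem \ref{simpl}, and identify the resulting bipartition as $\alpha'$ via Lemma \ref{form} together with the conjugation duality $b_{l}(\lambda,k,a)=k^{l-1}b_{l}(\lambda',k^{-1},k^{-1}a)$. You in fact supply more detail than the paper does, in particular the explicit content rescaling $c_k(\Box,0)=k\,c_{k^{-1}}(\Box',0)$ and the check that the genericity conditions are preserved under $(k,p_0)\mapsto(k^{-1},kp_0)$.
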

\begin{proof}
Indeed, because of the symmetry property (\ref{sym12}) of quantum integrals $\mathcal{L}_{k,p_0}^{(q)}$ the function $\theta^{-1}(P^{(k,p_0)}_{\alpha})$ is also an eigenfunction of the operator $\mathcal{L}_{k^{-1},kp_0}^{(q)}$ with the same eigenvalue up to a constant. 
 Now the claim follows from the lemma (\ref{form}), the duality property 
 $$
b_{l}(\lambda,k,a)=k^{l-1}b_{l}(\lambda',k^{-1},k^{-1}a)
 $$
  and the simplicity of the spectrum for generic $k$ and $p_0$. An explicit form of the constants $d_{\alpha}(k, p_0)$ is given below by (\ref{duad}).
\end{proof}

\section{Pieri formula for Jack--Laurent symmetric functions}

Let $\alpha$ be a bipartition represented by a pair of Young diagrams $\lambda$ and $\mu$. 
Define for any positive integers $i,j$ the following functions
$$
c_{\lambda}(ji,a)=\lambda_{i}-j-k(\lambda^{\prime}_{j}-i)+a,\;\;
$$
$$
c_{\alpha}(ji,a)=\lambda_{i}+j+k(\mu^{\prime}_{j}+i)+a,
$$
where $\lambda^{\prime}$ as before is the Young diagram conjugated (transposed) to $\lambda$.

Let $x=(ij)$ be a box such that the union $\lambda+x:= \lambda \cup x \in \mathcal P$ is also a Young diagram and similarly to the Pieri formula for Jack polynomials \cite{Ma} define
\begin{equation}\label{V1}
V(x, \alpha)=\prod_{r=1}^{i-1}\frac{c_{\lambda}(jr,1)c_{\lambda}(jr,-2k)}{c_{\lambda}(jr,-k)c_{\lambda}(jr,1-k)}.
\end{equation}
If $x$ can not be added to $\lambda$ we assume that the corresponding $V(x,\alpha)=0.$

Similarly, if the box $y=(ij)$ can be removed from the Young diagram $\mu$ in the sense that $\mu-y:=\mu \setminus y \in \mathcal P$ we define
$$
U(y, \alpha)=\prod_{r=i+1}^{l(\mu)}\frac{c_{\mu}(jr,1+k)c_{\mu}(jr,-k)}{c_{\mu}(jr,1)c_{\mu}(jr,0)}
$$
$$\times\prod_{r=1}^{l(\lambda)}\frac{c_{\alpha}(jr,-1-k(p_{0}+2))c_{\alpha}(jr,-kp_{0})}{c_{\alpha}(jr,-1-k(p_{0}+1))c_{\alpha}(jr,-k(p_{0}+1))}
$$
\begin{equation}\label{V2}
\times\frac{(j-1+k(l(\lambda)+\mu^{\prime}_{j}-p_{0}-1))(j+k(\mu^{\prime}_{j}-l(\mu)))}
{(j+k(l(\lambda)+\mu^{\prime}_{j}-p_{0}))(j-1+k(\mu^{\prime}_{j}-l(\mu)-1))},
\end{equation}
where $l(\lambda)$ is the length, which is the number of non-zero parts in partition $\lambda.$
If $y=(ij)$ can not be removed from $\mu$ we define $U(y, \alpha)=0.$

The following theorem follows from the Pieri formula for Jack--Laurent polynomials (\ref{pieri}).

\begin{thm}
The Jack--Laurent symmetric functions $P_{\lambda,\mu}=P^{(k,p_0)}_{\alpha}$ with $\alpha=(\lambda,\mu)$ satisfy the following Pieri formula:
\begin{equation}\label{pieri1}
p_{1}P_{\lambda,\mu}=\sum_{x}V(x, \alpha)P_{\lambda+x,\mu}+\sum_{y}U(y,\alpha)P_{\lambda,\mu - y}.
\end{equation}
\end{thm}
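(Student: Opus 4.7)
The plan is to reduce the Pieri formula (\ref{pieri1}) in $\Lambda^{\pm}$ to the finite-dimensional Pieri formula (\ref{pieri}) for Jack--Laurent polynomials via the homomorphisms $\varphi_N$, and then apply the uniqueness principle of Lemma \ref{edin} to conclude equality in $\Lambda^{\pm}$. Fix $N \ge l(\lambda)+l(\mu)+1$. By the defining property (\ref{jlsf}) we have $\varphi_N(P_{\lambda,\mu}) = P^{(k)}_{\chi_N(\alpha)}$ and $\varphi_N(p_1) = x_1+\dots+x_N$, so Theorem \ref{JLP} gives $p_1 P^{(k)}_{\chi_N(\alpha)} = \sum_i V_i(\chi_N(\alpha))\, P^{(k)}_{\chi_N(\alpha)+\varepsilon_i}$, the sum running over those indices $i$ for which $\chi_N(\alpha)+\varepsilon_i$ remains non-increasing.

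From the explicit shape (\ref{chiN}) of $\chi_N(\alpha)$, the admissible indices split into two groups. For $1\le i\le l(\lambda)+1$ with $\chi_i+1\le \chi_{i-1}$, the sequence $\chi_N(\alpha)+\varepsilon_i$ equals $\chi_N(\lambda+x,\mu)$ where $x=(i,\lambda_i+1)$ is an addable box of $\lambda$. For $i=N-s+1$ with $1\le s\le l(\mu)$ and $\mu_s>\mu_{s+1}$ (setting $\mu_{l(\mu)+1}=0$), the sequence equals $\chi_N(\lambda,\mu-y)$ where $y=(s,\mu_s)$ is a removable box of $\mu$. Hence it suffices to verify that $V_i(\chi_N(\alpha))$ equals $V(x,\alpha)|_{p_0=N}$ in the first case and $U(y,\alpha)|_{p_0=N}$ in the second case, for all sufficiently large $N$. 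After multiplying (\ref{pieri1}) through by a suitable polynomial in $p_0$ to clear denominators, Lemma \ref{edin} applied to the difference of the two sides then closes the argument.

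The first case is immediate: for $r<i\le l(\lambda)+1$ we have $r\le l(\lambda)$, $\chi_r=\lambda_r$, $\chi_i=\lambda_i$, and addability of $x=(i,j)$ with $j=\lambda_i+1$ forces $\lambda'_j=i-1$. A direct substitution gives $c_\chi(ri,b)=c_\lambda(jr,b)$, so $V_i(\chi_N(\alpha))=V(x,\alpha)$, independently of $N$. The main technical work lies in the second case. Writing $i=N-s+1$, $j=\mu_s$, so $\mu'_j=s$, the plan is to split the product $\prod_{r=1}^{i-1}$ in the definition (\ref{V}) into three ranges: range (i) $1\le r\le l(\lambda)$ where $\chi_r=\lambda_r$; range (ii) $l(\lambda)<r\le N-l(\mu)$ where $\chi_r=0$; range (iii) $N-l(\mu)<r<i$ where $\chi_r=-\mu_{N-r+1}$. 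In range (iii), substituting $r=N-s'+1$ with $s+1\le s'\le l(\mu)$ and using $\mu'_j=s$, one finds $c_\chi(ri,b)=-c_\mu(js',1-k-b)$, which converts the four shifts $\{1,-2k,1-k,-k\}$ in (\ref{V}) into the shifts $\{-k,1+k,0,1\}$ in the first product of (\ref{V2}). In range (i), one similarly checks $c_\chi(ri,b)=c_\alpha(jr,b-1-kp_0)$ with $p_0=N$, converting the shifts into $\{-kp_0,-1-k(p_0+2),-k(p_0+1),-1-k(p_0+1)\}$, matching the second product in (\ref{V2}).

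The main obstacle is range (ii). Here $\chi_r=0$ and each factor has the form $A(r,b)=\mu_s-1+k(r+s-N)+b$ with $b\in\{1,-2k,1-k,-k\}$. The key observation is the shift relations $A(r,1)=A(r+1,1-k)$ and $A(r,-k)=A(r+1,-2k)$, which cause the product over range (ii) to telescope, leaving only the boundary contribution
$$\prod_{r=l(\lambda)+1}^{N-l(\mu)}\frac{A(r,1)A(r,-2k)}{A(r,1-k)A(r,-k)} = \frac{A(N-l(\mu),1)\,A(l(\lambda)+1,-2k)}{A(l(\lambda)+1,1-k)\,A(N-l(\mu),-k)}.$$
Substituting $p_0=N$, $j=\mu_s$, and $\mu'_j=s$ into the four surviving boundary factors reproduces precisely the explicit prefactor on the last line of (\ref{V2}). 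Combining the three ranges then yields $V_i(\chi_N(\alpha))=U(y,\alpha)|_{p_0=N}$, and the proof concludes via Lemma \ref{edin} as indicated. The chief risk is simply a sign or index bookkeeping error in the telescoping; everything else is a mechanical translation between the two sets of shift parameters.
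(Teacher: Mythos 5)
Your proposal is correct and takes essentially the same route as the paper, which simply asserts that the theorem "follows from the Pieri formula for Jack--Laurent polynomials" and leaves the reduction implicit; you have filled in exactly that reduction (apply $\varphi_N$, match $V_i(\chi_N(\alpha))$ against $V(x,\alpha)$ and $U(y,\alpha)$ at $p_0=N$, including the correct telescoping over the zero block, and conclude by the rationality-in-$p_0$ argument of Lemma \ref{edin}). The coefficient identifications in all three ranges check out against (\ref{V1}) and (\ref{V2}).
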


One can rewrite the formula in terms of the following diagrammatic representation of a bipartition $\alpha=(\lambda,\mu)$
(cf. \cite{MVDJ}).
Consider the following geometric figure $Y=Y_{\lambda,\mu}=Y_{\lambda}\cup Y_{-\mu}\cup \Pi_{\lambda,\mu},$
where
$$
Y_{\lambda}=\{(ji)\mid\; j,i\in\Bbb Z,\;1\le i\le l(\lambda),\, 1\le j\le \lambda_{i}\},
$$
$$
Y_{-\mu}=\{(ji)\mid\; j,i\in\Bbb Z,\;-l(\mu)\le i\le -1,\, -\mu_{i}\le j\le -1\}
$$
and
$$
\Pi_{\lambda,\mu}=\{(ji)\mid\; j,i\in\Bbb Z,\;1\le i\le l(\lambda),\, - l(\mu^{\prime})\le j\le -1\}.
$$
On Fig. 1 we have the corresponding representation for $\lambda=(6,5,4,2,1)$ and $\mu=(7,3,2,1,1).$
Note that for $\lambda$ we follow the French way of drawing Young diagram, for $\mu$ it is rotated by 180 degrees.

\begin{figure}
\centerline{ \includegraphics[width=18cm]{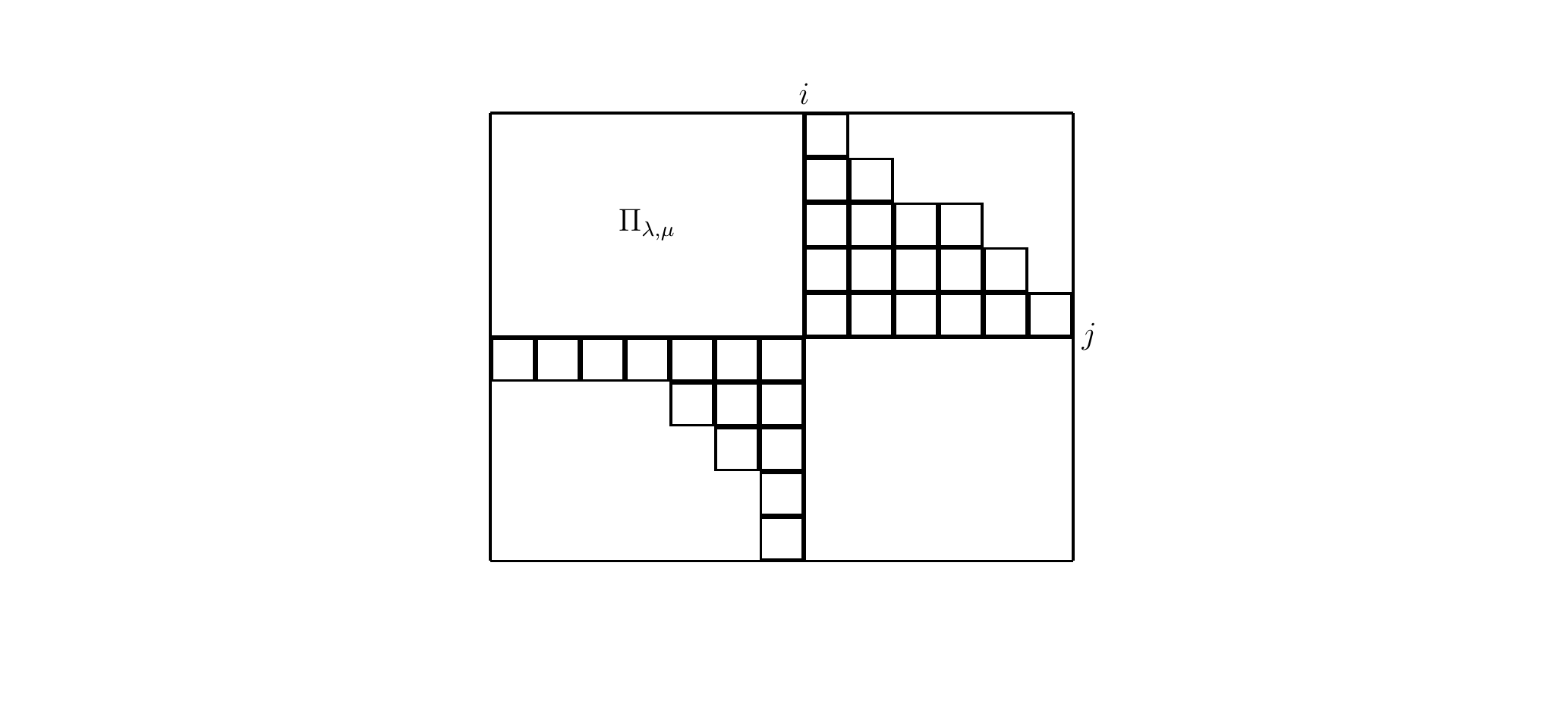}}
\caption{Diagrammatic representation of a pair of partitions} \label{pair}
\end{figure}

Define the following analogues of rows
$$
y_{i}=\begin{cases}\lambda_{i},& 1\le i\le l(\lambda)\cr -\mu_{-i},& -l(\mu)\le i \le -1 \end{cases}
$$
and columns
$$
y^{\prime}_{j}=\begin{cases}\lambda^{\prime}_{j},& 1\le j\le l(\lambda^{\prime})\cr -\mu^{\prime}_{-j},& -l(\mu^{\prime})\le j \le -1 \end{cases}
$$
with all other $y_{i}$, $y^{\prime}_{j}$ being zero.
For every box $\Box$ with integer coordinates $(j,i)$ define the function 
$$
c_{Y}(\Box,a)=y_{i}-j-k(y^{\prime}_{j}-i)+a.
$$
Define for the added box $\Box=(j,i)$ the following subset in $Y_{\lambda}$
$$
\pi_{1}=\{(j,r)\mid 1\le r< i\}
$$ 
and  for  deleted  box  $\Box=(ji)$  the subsets in $Y$
$$
\pi_{2}=\{(j,r)\mid -l(\mu)  \le r<-\mu^{\prime}_{-j} \},
$$
$$
\pi_{3}=\{(j,r)\mid 1 \le r\le l(\lambda)\}.
$$
The meaning of these subsets is clear from Fig. 2, where the deleted box is black and the added box is crosshatched.

\begin{figure}
\centerline{ \includegraphics[width=18cm]{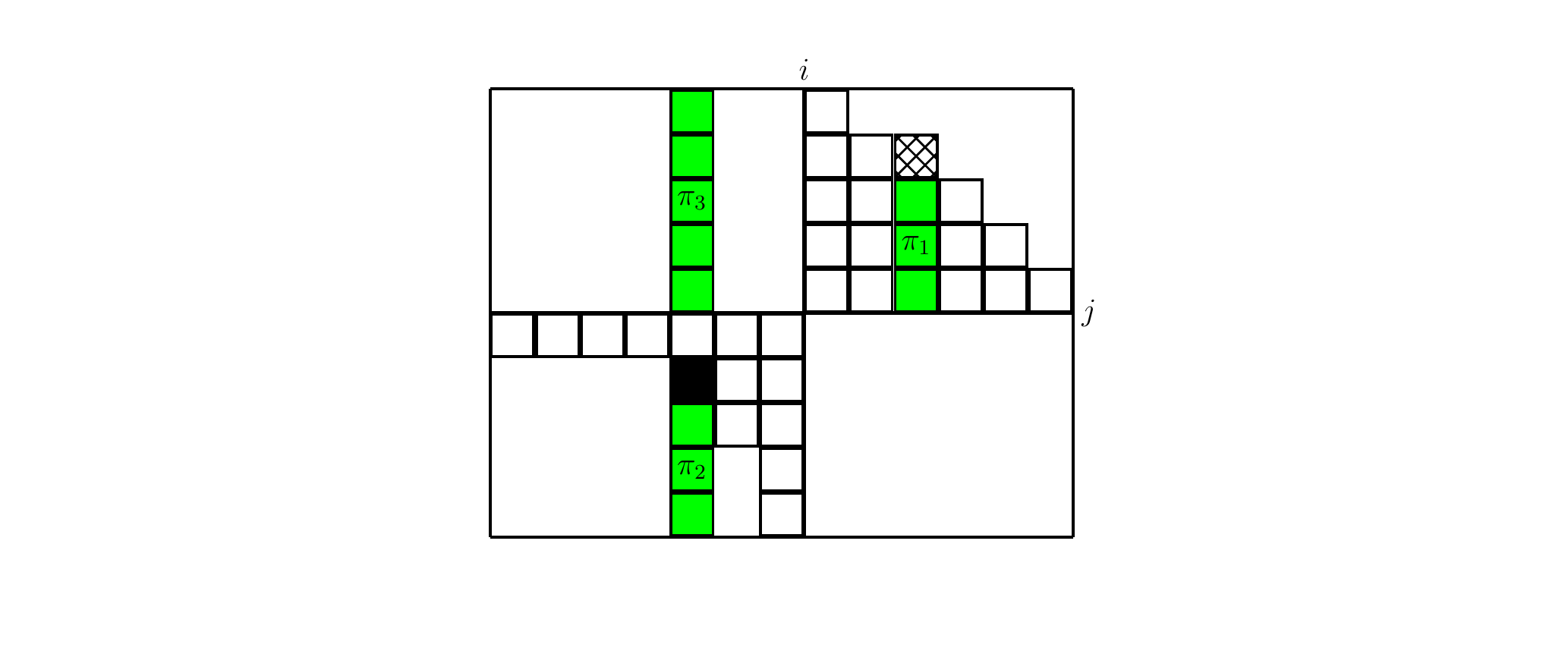}}
\caption{Summation sets for the Pieri formula} \label{Pieri}
\end{figure}

In these terms the Pieri formula (\ref{pieri1}) can be written as
\begin{equation}\label{pieri2}
p_{1}P_{\lambda,\mu}=\sum_{\Box}V(\Box, \alpha)P_{\lambda\cup\Box,\mu}+\sum_{\Box}U(\Box,\alpha)P_{\lambda,\mu \setminus \Box}
\end{equation}
with
\begin{equation}\label{V11}
V(\Box,\alpha)=  \prod_{\Box\in\pi_1}\frac{c_{Y}(\Box,-2k)c_{Y}(\Box,1)}{c_{Y}(\Box,-k)c_{Y}(\Box,1-k)},
\end{equation}
$$
U(\Box,\alpha)=  \prod_{\Box\in\pi_2}\frac{c_{Y}(\Box,-1-k)c_{Y}(\Box,k)}{c_{Y}(\Box,-1)c_{Y}(\Box,0)}
$$
$$
\times\prod_{\Box\in\pi_3}\frac{c_{Y}(\Box,-1-k(p_{0}+2))c_{Y}(\Box,-kp_{0})}{c_{Y}(\Box,-1-k(p_{0}+1))c_{Y}(\Box,-k(p_{0}+1))}
$$
\begin{equation}\label{V21}
\times\frac{(j+1+k(y^{\prime}_{j}-l(\lambda)+p_{0}+1))(j+k(y^{\prime}_{j}+l(\mu))}
{(j+k(y^{\prime}_{j}-l(\lambda)+p_{0}))(j+1+k(y^{\prime}_{j}+l(\mu)+1)}
\end{equation}
with the convention that the product over empty set is equal to 1.

A non-symmetry between $\lambda$ and $\mu$ is due to the choice of $p_1$ in the left hand side of the Pieri formula (\ref{pieri1}). By applying $*$-involution to formula (\ref{pieri1}) one has the corresponding formula for $p_{-1}$, where the roles of $\lambda$ and $\mu$ are interchanged. 

Another remark is that in the Pieri formula (\ref{pieri2}) one can replace the rectangle containing the figure $Y$ by any bigger rectangle with $-M\leq i \leq L$ by changing in formula (\ref{V21}) the lengths $l(\lambda)$ and $l(\mu)$ to $L$ and $M$ respectively.

\section{Evaluation  theorem}

Consider a pair of Young diagrams $\lambda$ and $\mu$ which can be joint together to form $a\times b$ rectangle (see Fig. 3):
$$\lambda_i+\mu_{b-i+1}=a,\,\, \lambda'_j+\mu'_{a-j+1}=b.$$
We will call such two diagrams {\it complementary}. 

\begin{figure}
\centerline{ \includegraphics[width=12cm]{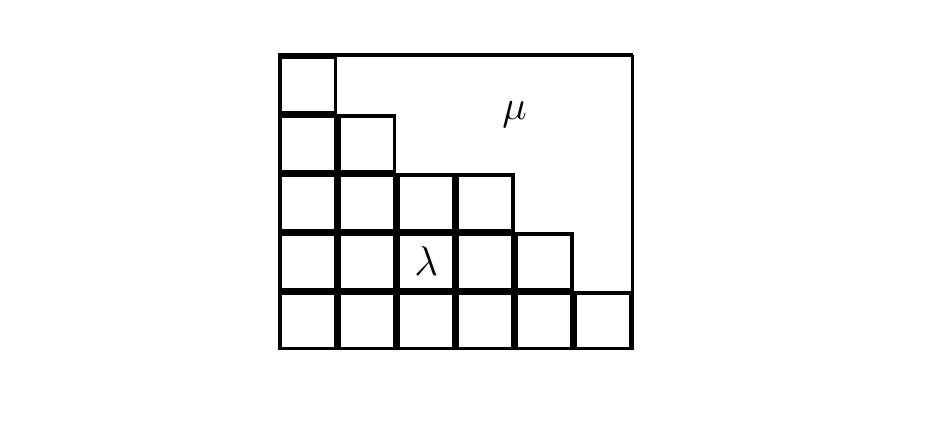}}
\caption{Complementary Young diagrams $\lambda$ and $\mu$} \label{complim}
\end{figure}

Define the following function on Young diagrams depending on two parameters $p$ and $x$:
\begin{equation}\label{var}
\varphi_p(\lambda,x)=\prod_{(i,j)\in\lambda}\frac{j-1+k(i-1-p)+x}{\lambda_i-j+k(i-1-\lambda'_j)+x}
\end{equation}
\begin{equation}\label{var2}
=\prod_{(i,j)\in\lambda}\frac{\lambda_i-j+k(i-1-p)+x}{\lambda_i-j+k(i-1-\lambda'_j)+x}
\end{equation}
with the assumption that for empty Young diagram $\varphi_p(\emptyset,x)= 1.$
Such a function was first introduced by Stanley \cite{Stanley} in the theory of Jack polynomials.

\begin{lemma} \label{lem1} For any pair of complementary Young diagrams $\lambda$ and $\mu$ forming $a\times b$ rectangle
$$\varphi_b(\lambda,x)=\varphi_b(\mu,x).$$
\end{lemma}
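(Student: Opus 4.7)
The plan is to prove the lemma by induction on $|\lambda|$ (equivalently on $|\mu| = ab - |\lambda|$).

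For the base case $\lambda = \emptyset$ the complement is the full rectangle $\mu = (a^b)$. Here $\varphi_b(\emptyset, x) = 1$ as an empty product, while for $\mu = (a^b)$ each row $i$ of the rectangle contributes
\[
\prod_{j=1}^{a}\frac{(j-1)+k(i-1-b)+x}{(a-j)+k(i-1-b)+x},
\]
which equals $1$ because the substitution $j\mapsto a-j+1$ in the denominator identifies its factor set with that of the numerator; hence $\varphi_b((a^b),x)=1$ as well.

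For the inductive step, pick any addable corner $\square_0=(i_0,j_0)$ of $\lambda$, so that $\lambda_{i_0}=j_0-1$ and $\lambda'_{j_0}=i_0-1$. Using $\mu_{b-i+1}=a-\lambda_i$ and $\mu'_{a-j+1}=b-\lambda'_j$ one checks that the reflected box $\square_0^{*}=(b-i_0+1,a-j_0+1)$ is a removable corner of $\mu$, and that $\lambda^{+}:=\lambda\cup\{\square_0\}$ and $\mu^{-}:=\mu\setminus\{\square_0^{*}\}$ form again a complementary pair. By the inductive hypothesis $\varphi_b(\lambda,x)=\varphi_b(\mu,x)$, so it suffices to prove
\[
\frac{\varphi_b(\lambda^{+},x)}{\varphi_b(\lambda,x)}=\frac{\varphi_b(\mu^{-},x)}{\varphi_b(\mu,x)}.
\]
From formula (\ref{var}) each side decomposes as a product of three explicit pieces: (i) a corner factor for the added (resp.\ inversely for the removed) box, (ii) a product of row corrections over boxes of row $i_0$ of $\lambda$ (resp.\ row $b-i_0+1$ of $\mu$) recording the shift of $\lambda_{i_0}$ (resp.\ of $\mu_{b-i_0+1}$) by one, and (iii) an analogous product of column corrections. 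After reflecting the $\mu$-indices to the rectangle via $(i,j)\mapsto(b-i+1,a-j+1)$ and invoking the complementarity relations, the two row corrections become products indexed by the columns $j<j_0$ (from the LHS) and $j>j_0$ (from the RHS) of row $i_0$ of the $a\times b$ rectangle, with the two corner factors supplying the missing $j=j_0$ slot; the column corrections pair up in exactly the same way in column $j_0$.

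The main obstacle is then the purely rational identity obtained in this way. Each factor has the shape $\tfrac{(s-1)+C}{s+C}$ or its reciprocal, where $C$ depends on $\lambda'_j$ (for row corrections) or on $\lambda_i$ (for column corrections). Telescoping occurs within runs of equal column height (respectively row length) of $\lambda$, and once the box-by-box pairing across row $i_0$ and column $j_0$ is set up, the telescoped fragments on the two sides match exactly against the corner factors $\tfrac{(j_0-1)+k(i_0-1-b)+x}{x-k}$ and $\tfrac{x-k}{(a-j_0)-ki_0+x}$, which supply precisely the missing boundary terms. This closes the induction and proves the lemma.
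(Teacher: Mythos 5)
Your overall strategy is sound and genuinely different from the paper's: you induct on $|\lambda|$ inside a fixed $a\times b$ rectangle, adding a box $\square_0=(i_0,j_0)$ to $\lambda$ and deleting the reflected box from $\mu$, whereas the paper inducts on $a+b$, stripping a \emph{full} first row (when $\lambda_1=a$ or $\mu_1=a$) or a full first column from the rectangle. Your base case $\lambda=\emptyset$, $\mu=(a^b)$ is correct and complete. The difficulty is that your inductive step reduces the lemma to the rational identity
$$
\frac{\varphi_b(\lambda^{+},x)}{\varphi_b(\lambda,x)}=\frac{\varphi_b(\mu^{-},x)}{\varphi_b(\mu,x)},
$$
which (given the base case) is essentially equivalent to the lemma itself, and this identity is asserted rather than proved. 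Worse, the cancellation mechanism you describe is not the one that actually occurs. Writing the left side as $C_\lambda\cdot A\cdot B$ with corner factor $C_\lambda=\frac{j_0-1+k(i_0-1-b)+x}{x-k}$, row corrections $A=\prod_{j<j_0}\frac{(j_0-j-1)+k(i_0-1-\lambda'_j)+x}{(j_0-j)+k(i_0-1-\lambda'_j)+x}$ and column corrections $B=\prod_{i<i_0}\frac{\lambda_i-j_0+k(i-i_0)+x}{\lambda_i-j_0+k(i-1-i_0)+x}$, and similarly for $\mu$, the factors indexed by columns ($A$ and its reflected counterpart $A'$, which depend on $\lambda'_j$) do \emph{not} close up among themselves along row $i_0$ against the corner factors, and likewise for the column corrections along column $j_0$. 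For instance, take $a=b=3$, $\lambda=(2,2)$, $\mu=(3,1,1)$ and add the box $(1,3)$: one finds $C_\lambda C_\mu=\frac{2-3k+x}{x-k}$, $A=\frac{x-2k}{2-2k+x}$ (after telescoping), $A'=B=1$, and $B'=\frac{x-k}{x-2k}\cdot\frac{2-2k+x}{2-3k+x}$; the product is $1$ only because the arm-type factors from $A$ cancel against the leg-type factors from $B'$. This cross-cancellation between quantities depending on the $\lambda'_j$ and quantities depending on the $\lambda_i$ is the real content of the identity; establishing it requires matching the runs of constant column height against the runs of constant row length via the corner structure of $\lambda$, and that argument is missing from your sketch.

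By contrast, the paper's induction avoids all of this bookkeeping: when a full first row is present, the ratio $\varphi_b(\lambda,x)/\varphi_{b-1}(\nu,x)$ (with $\nu=\lambda$ minus its first row) collapses to the one-dimensional product $\prod_{j=1}^a\frac{j-1-kb+x}{a-j-k\lambda'_j+x}$, which is matched directly against $\varphi_b(\mu,x)/\varphi_{b-1}(\mu,x)$ by the substitution $j\mapsto a-j+1$ and the relation $\lambda'_j+\mu'_{a-j+1}=b$; the full-column case is immediate from the second expression (\ref{var2}). If you want to keep your box-by-box route, you must either prove the displayed ratio identity directly (it is a known hook-product/Pieri-type identity, but it needs the corner analysis spelled out) or switch to an induction in which the removed piece is an entire row or column.
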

\begin{proof} By induction in $a+b$. If $a+b=2$ then we have $\lambda=(1),\mu=\emptyset$ or  the other way around.   Therefore 
$$
\varphi_1(\lambda,x)=\frac{-k+x}{-k+x}=1=\varphi_1(\mu,x)
$$
 Let now $a+b>2$. There are two cases: first  when $\lambda_1=a$  or $\mu_1=a$ and  the second when $\lambda_1'=b$ or $\mu'_1=b$. 
Let us consider the first case. By symmetry we can assume that   $\lambda_1=a$ and set $\nu=\lambda\setminus\lambda_1$.  Then we have 
$$
\frac{\varphi_b(\lambda)}{\varphi_{b-1}(\nu)}=\prod_{(i,j)\in\lambda}\frac{j-1+k(i-1-b)+x}{\lambda_i-j+k(i-1-\lambda'_j)+x}\prod_{(i,j)\in\nu}\frac{\nu_i-j+k(i-1-\nu'_j)+x}{j-1+k(i-b)+x}
$$
$$
=\varphi_{b}((\lambda_1),x)\varphi_{b-1}(\nu,x)\varphi_{b-1}(\nu,x)^{-1}=\varphi_{b}((\lambda_1),x)=\prod_{j=1}^a\frac{j-1-kb+x}{a-j-k\lambda'_j+x}.
$$
Now
$$
\frac{\varphi_b(\mu)}{\varphi_{b-1}(\mu)}=\prod_{(i,j)\in\mu}\frac{j-1+k(i-1-b)+x}{\mu_i-j+k(i-1-\mu'_j)+x}\prod_{(i,j)\in\mu}\frac{\mu_i-j+k(i-1-\mu'_j)+x}{j-1+k(i-b)+x}
$$
$$
=\prod_{(i,j)\in\mu}\frac{j-1+k(i-1-b)+x}{j-1+k(i-b)+x}=\prod_{j=1}^a\prod_{i=1}^{\mu'_j}\frac{j-1+k(i-1-b)+x}{j-1+k(i-b)+x}
$$
$$
=\prod_{j=1}^a\frac{j-1-kb+x}{j-1+k(\mu'_j-b)+x}=\prod_{j=1}^a\frac{j-1-kb+x}{a-j-k\lambda'_j+x}.
$$
where in the last row  we have made the change $j\rightarrow a-j+1$ and use the equality $\lambda'_j+\mu'_{a-j+1}=b$ in the denominator.
Thus we see that 
$$\frac{\varphi_b(\lambda)}{\varphi_{b-1}(\nu)}=\frac{\varphi_b(\mu)}{\varphi_{b-1}(\mu)}.$$ Since by induction $\varphi_{b-1}(\nu)={\varphi_{b-1}(\mu)}$ we have $\varphi_b(\lambda)=\varphi_b(\mu)$ in that case.

Consider now the second case. Set $\nu=\lambda\setminus\lambda'_1$.
As before we can assume that    $\lambda'_1=b$. By inductive hypothesis  $\varphi_b(\mu,x)=\varphi_b(\nu,x).$ 
The equality $\varphi_{b}(\nu,x)=\varphi_b(\lambda,x)$ is clear from the second expression (\ref{var2}) for $\varphi_b(\lambda,x)$.
\end{proof}

\begin{lemma} \label{lem2} Let $\lambda,\mu$ be two partitions, $a\ge\mu_1$ and $N\ge l(\lambda)+l(\mu)$ and 
$\nu=(\lambda_1+a, \dots, \lambda_{r}+a, a,\dots, a, a-\mu_s, \dots, a-\mu_1) \in \mathcal P$.
Then 
\begin{equation}\label{lemm2}
\varphi_{N}(\nu,x)=\varphi_{N}(\lambda,x)\varphi_{N}(\mu,x)\varphi_{N}(\lambda,\mu,x)
\end{equation}
where $\varphi_{p}(\lambda,\mu,x)$ is given by the formula 
\begin{equation}\label{phiphi}
\varphi_{p}=\prod_{i=1}^{l(\lambda)}\prod_{j=1}^{l(\mu')}\frac{\lambda_i+j-1+k(i-1-p)+x}{j-1+k(i-1-p)+x}\frac{j-1+k(i-1+\mu'_j-p)+x}{\lambda_i+j-1+k(i-1+\mu'_j-p)+x}.
\end{equation}
\end{lemma}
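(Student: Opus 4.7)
My plan is to prove identity (\ref{lemm2}) by induction on $|\mu|$, using the explicit description of $\nu$ as an $N\times a$ rectangle with a shifted copy of $\lambda$ placed on top-right and a copy of $\mu$ removed from bottom-right.  The key shape data is
$$
\nu_{i}=\begin{cases}\lambda_{i}+a, & 1\le i\le r,\\ a, & r<i\le N-s,\\ a-\mu_{N-i+1}, & N-s<i\le N,\end{cases}\qquad \nu'_{j}=\begin{cases}\lambda'_{j-a}, & j>a,\\ N-\mu'_{a-j+1}, & 1\le j\le a,\end{cases}
$$
with $r=l(\lambda)$, $s=l(\mu)$, from which one reads off both the box set of $\nu$ and the arm--coleg denominator values $\nu_{i}-j+k(i-1-\nu'_{j})+x$ appearing in $\varphi_{N}(\nu,x)$.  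Writing $y_{i}=k(i-1-N)+x$, the boxes of $\nu$ split into the ``$\lambda$-wing'' $\mathcal{A}=\{(i,j):1\le i\le r,\ a<j\le a+\lambda_{i}\}$ and the ``rectangular piece'' $\mathcal{R}=\{(i,j)\in\nu: j\le a\}$.

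In the base case $\mu=\emptyset$ both $\varphi_{N}(\mu,x)$ and $\varphi_{N}(\lambda,\mu,x)$ are empty products, and the claim reduces to $\varphi_{N}(\nu,x)=\varphi_{N}(\lambda,x)$.  The rectangular rows $i>r$ contribute $\prod_{j=1}^{a}(j-1+y_{i})/(a-j+y_{i})=1$ by the palindromic symmetry of $\{0,\ldots,a-1\}$, and the rows $i\le r$ contribute, after the substitution $j'=j-a$ in the $\lambda$-wing and a direct calculation on the top-rectangle part, a product that collapses via the telescoping
$$
\prod_{c=0}^{\lambda_{i}+a-1}(y_{i}+c)=\prod_{c=0}^{a-1}(y_{i}+c)\cdot\prod_{c=a}^{\lambda_{i}+a-1}(y_{i}+c)
$$
to exactly the row-$i$ factor of $\varphi_{N}(\lambda,x)$.

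For the inductive step I add a single box at $(i_{0},j_{0})$ with $j_{0}=\mu_{i_{0}}+1$ to $\mu$, producing $\tilde\mu$ and the corresponding $\tilde\nu$, which is $\nu$ with the single box $(N-i_{0}+1,\,a-\mu_{i_{0}})$ removed.  The ratio $\varphi_{N}(\tilde\nu,x)/\varphi_{N}(\nu,x)$ is then a telescoping product over the affected row $i=N-i_{0}+1$ and column $j=a-\mu_{i_{0}}$ of $\nu$, reflecting the changes $\nu_{N-i_{0}+1}\leadsto \nu_{N-i_{0}+1}-1$ and $\nu'_{a-\mu_{i_{0}}}\leadsto \nu'_{a-\mu_{i_{0}}}-1$, divided by the factor of the removed box.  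One then computes, from the explicit formulae for $\varphi_{N}(\mu,x)$ and (\ref{phiphi}), the corresponding ratio $[\varphi_{N}(\tilde\mu,x)/\varphi_{N}(\mu,x)]\cdot[\varphi_{N}(\lambda,\tilde\mu,x)/\varphi_{N}(\lambda,\mu,x)]$, and checks that the two ratios coincide as rational functions of $k$, $x$ and the parts of $\lambda,\mu$.

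The main obstacle will be precisely this final matching: removing one box from $\nu$ perturbs the denominators of every box in its row and column, and the induced change must be distributed correctly between $\varphi_{N}(\mu,x)$, the first factor of $\varphi_{N}(\lambda,\mu,x)$, and its second, ``mirror-like'' factor involving $j-1+k(i-1+\mu'_{j}-p)+x$.  The row index of the removed box on the $\nu$-side is $N-i_{0}+1$, while on the $\mu$-side it is $i_{0}$; this reflection produces a discrepancy in the $y$-values that is exactly what the second factor of (\ref{phiphi}) is designed to absorb.  Verifying this cancellation cleanly is the combinatorial heart of the lemma and is what forces the precise form of $\varphi_{N}(\lambda,\mu,x)$.
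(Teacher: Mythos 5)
Your approach is genuinely different from the paper's, and its skeleton is viable: the shape data $\nu'_j=N-\mu'_{a-j+1}$ for $j\le a$ and $\nu'_j=\lambda'_{j-a}$ for $j>a$ is correct, and your base case $\mu=\emptyset$ is verified correctly (the palindromic cancellation in the rows $i>l(\lambda)$ and the telescoping $\prod_{c=0}^{\lambda_i+a-1}(y_i+c)$ do reduce $\varphi_N(\nu,x)$ to $\varphi_N(\lambda,x)$). For comparison, the paper proves the lemma in one shot rather than by induction on $|\mu|$: it splits $\nu$ into the $\lambda$-wing $\nu_1$, the $l(\lambda)\times a$ rectangle $\nu_2$, and the remainder $\nu_3$, identifies the $\nu_1$-contribution with $\varphi_N(\lambda,x)$ via the second form (\ref{var2}), identifies $\varphi_N(\nu_2\cup\nu_3,x)$ with $\varphi_N(\mu,x)$ by the complementary-diagram Lemma \ref{lem1}, and obtains $\varphi_N(\lambda,\mu,x)$ as the explicitly computed ratio $\psi_N(\nu,\nu_2,x)/\psi_N(\nu_2\cup\nu_3,\nu_2,x)$.

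The genuine gap is that the whole content of the lemma sits in your inductive step, and there you only assert that ``one computes\ldots and checks that the two ratios coincide''; the matching is never performed, and you yourself flag it as the main obstacle. To close the argument you must actually do it: removing the corner $(N-i_0+1,\,a-\mu_{i_0})$ from $\nu$ perturbs the denominators of \emph{all} boxes in its column, i.e.\ of order $N$ boxes, and you need to check that the factors from the middle rows $l(\lambda)<i\le N-l(\mu)$ (where $\nu_i=a$) telescope away so that no spurious $N$-dependence survives, that the factors from rows $i\le l(\lambda)$ reproduce exactly the change of the second fraction in (\ref{phiphi}) under $\mu'_{j_0}\mapsto\mu'_{j_0}+1$, and that the affected row of $\nu$ together with the removed corner factor reproduces the change in $\varphi_N(\mu,x)$. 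You also need to handle the case where the added box opens a new column of $\mu$, so that $l(\tilde\mu')=l(\mu')+1$ and (\ref{phiphi}) acquires an entire new column of factors; this turns out to be harmless because the factors of (\ref{phiphi}) with $\mu'_j=0$ equal $1$, so the upper limit $l(\mu')$ may be replaced by any larger integer, but this must be stated and used. Until these verifications are written out, what you have is a plausible plan rather than a proof; if you want to avoid the bookkeeping, the paper's decomposition plus Lemma \ref{lem1} is the shorter route.
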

\begin{proof} Let $\tau \subset \lambda$ be some subset of $\lambda \in \mathcal P$. Define $\psi_{p}(\lambda,\tau,x)$ similarly to $\varphi_p(\lambda,x)$ as
\begin{equation}\label{psi}
\psi_p(\lambda,\tau,x)=\prod_{(i,j)\in\tau}\frac{\lambda_i-j+k(i-1-p)+x}{\lambda_i-j+k(i-1-\lambda'_j)+x}.
\end{equation}
Split the Young diagram $\nu$ into three parts as follows:
$$
\varphi_{N}(\nu,x)=\psi_{N}(\nu,\nu_1,x)\psi_{N}(\nu,\nu_2,x)\psi_{N}(\nu,\nu_3,x)
$$
where
$$
\nu_{1}=\{(ij)\mid1\le i\le\l(\lambda),\,a+1\le j\le a+\lambda_i\}
$$
$$
\nu_2=\{(ij)\mid1\le i\le l(\lambda),\,1\le j\le a\}
$$
$$
\nu_{3}=\{(ij)\mid 1\le j\le a,\,l(\lambda)+1\le i\le N-\mu'_{a-j+1}\}
$$
(see Fig. 4).

\begin{figure}
\centerline{ \includegraphics[width=10cm]{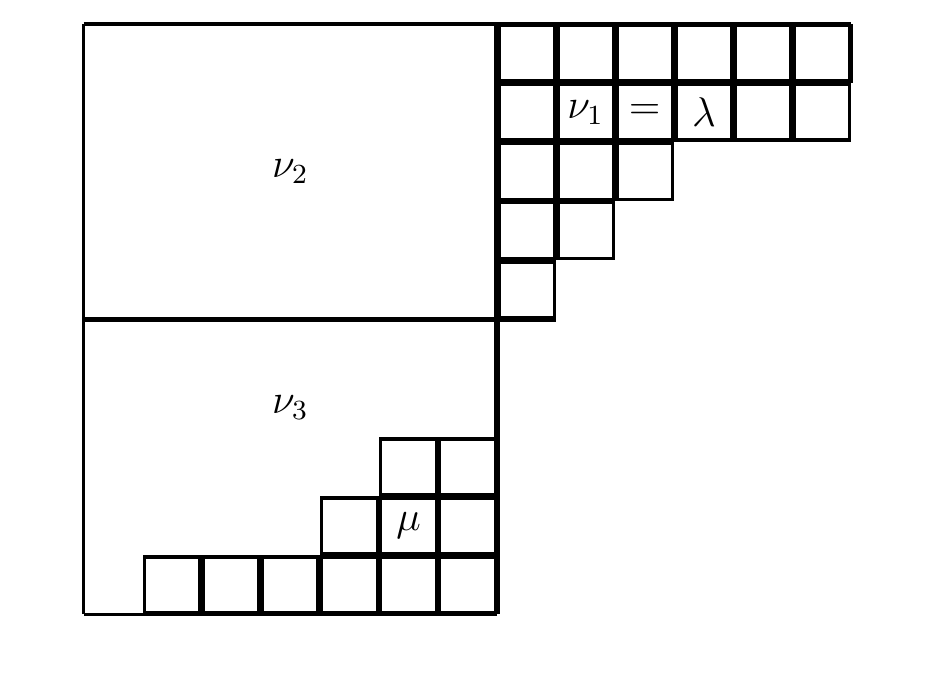}}
\caption{Three parts of the Young diagram $\nu$.} \label{3parts}
\end{figure}

We have (using the second formula (\ref{var2}) for $\varphi_{p}(\lambda,x)$) 
$$
\varphi_{N}(\nu,x)=\psi_{N}(\nu,\nu_1,x)\psi_{N}(\nu,\nu_2,x)\psi_{N}(\nu,\nu_3,x).
$$
It is easy to see that
$$
\psi_{N}(\nu,\nu_1,x)=\varphi_{N}(\lambda,x),
$$
$$
\psi_{N}(\nu,\nu_3,x)=\psi_{N}(\nu_2\cup\nu_3,\nu_3,x)=\frac{\varphi_{N}(\nu_2\cup\nu_3,x)}{\psi_{N}(\nu_2\cup\nu_3,\nu_2,x)}
$$
But according to lemma \ref{lem1} we have $\varphi_{N}(\nu_2\cup\nu_3,x)=\varphi_{N}(\mu,x)$ since $\nu_2\cup\nu_3$ is complementary to $\mu.$  
We have also
$$
\psi_N(\nu,\nu_2,x)=\prod_{(ij)\in\nu_2}\frac{\nu_i-j+k(i-1-N)+x}{\nu_i-j+k(i-1-\nu'_j)+x}=
$$
$$
\prod_{i=1}^{l(\lambda)}\prod_{j=1}^{a}\frac{\lambda_i+a-j+k(i-1-N)+x}{\lambda_i+a-j+k(i-1-N+\mu'_{a-j+1})+x}=
$$
$$
\prod_{i=1}^{l(\lambda)}\prod_{j=1}^{a}\frac{\lambda_i+j-1+k(i-1-N)+x}{\lambda_i+j-1+k(i-1-N+\mu'_j)+x}
$$
where we made the change $j\rightarrow a-j+1.$ 
Now we only need to compute  $ \psi_{N}(\nu_2\cup\nu_3,\nu_2,x)$. But this  product we can get from the previous one by setting $\lambda_i=0$ to have
$$
 \psi_{N}(\nu_2\cup\nu_3,\nu_2,x)=\prod_{i=1}^{l(\lambda)}\prod_{j=1}^{a}\frac{j-1+k(i-1-N)+x}{j-1+k(i-1-N+\mu_j)+x}.
$$
Taking $a=l(\mu')$  we have the claim.
\end{proof}

Now we are ready to prove the main result of this section. 

\begin{thm} (Evaluation Theorem) Let $\varepsilon_{p_0}$ be the homomorphism $ \Lambda^\pm \rightarrow \Bbb C$ defined by $\varepsilon(p_i)=p_0,\,i\in\Bbb Z$. Then the evaluation of the Jack--Laurent symmetric function $P^{(k,p_0)}_{\lambda,\mu}$ can be given by
\begin{equation}\label{eval}
\varepsilon_{p_0}(P^{(k,p_0)}_{\lambda,\mu})=\varphi_{p_0}(\lambda,0)\varphi_{p_0}(\mu,0)\varphi_{p_0}(\lambda,\mu,0),
\end{equation}
where the functions $\varphi_{p}(\lambda,x)$, $\varphi_{p}(\lambda,\mu,x)$ are defined by formulae (\ref{var}), (\ref{phiphi}).
\end{thm}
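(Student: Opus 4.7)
The approach is to reduce the identity to the finite-dimensional Jack polynomial setting via $\varphi_N$ for sufficiently large $N$, apply Stanley's classical evaluation formula, and use the combinatorial factorisation of Lemma~\ref{lem2}; then extend from integer values of $p_0$ to the generic case by a rational continuation argument.

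First I would fix $N\ge l(\lambda)+l(\mu)$ and an integer $a\ge\mu_1$, and observe that the composition of $\varphi_N$ with the specialisation $x_1=\dots=x_N=1$ sends each $p_i$ to $N$, so it coincides with $\varepsilon_{p_0}$ evaluated at $p_0=N$. Combining Theorem~\ref{th41} with the definition (\ref{jackchi}) yields
\[
\varepsilon_{p_0}\bigl(P^{(k,p_0)}_{\lambda,\mu}\bigr)\Big|_{p_0=N}
= P^{(k)}_{\chi_N(\alpha)}(1,\dots,1) = P^{(k)}_{\nu}(1,\dots,1),
\]
where $\nu=\chi_N(\alpha)+a$ is precisely the partition appearing in the statement of Lemma~\ref{lem2} (here $(x_1\cdots x_N)^{-a}$ evaluates to $1$ at $x_j=1$).

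Next I would invoke Stanley's classical evaluation formula \cite{Stanley} for the Jack polynomial $P^{(k)}_\nu$, which in the notation of (\ref{var}) reads $P^{(k)}_{\nu}(1,\dots,1) = \varphi_N(\nu,0)$. Lemma~\ref{lem2} then factorises this as
\[
\varphi_N(\nu,0) = \varphi_N(\lambda,0)\,\varphi_N(\mu,0)\,\varphi_N(\lambda,\mu,0),
\]
which is precisely the value of the right-hand side of (\ref{eval}) specialised at $p_0=N$. Note that the auxiliary parameter $a$ has disappeared from the right-hand side, matching the fact that $P^{(k)}_{\chi_N(\alpha)}$ is intrinsically defined.

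To pass to the generic case I would argue as follows: by Theorem~\ref{th41} the Jack--Laurent symmetric function $P^{(k,p_0)}_{\lambda,\mu}$ depends rationally on $p_0$, so the left-hand side of (\ref{eval}) is a rational function of $p_0$; the right-hand side is manifestly rational in $p_0$ from the explicit formulas (\ref{var}), (\ref{phiphi}). Since the two rational functions agree for all sufficiently large integer values $p_0=N$, they must coincide identically as rational functions, which gives (\ref{eval}) for all $p_0$ in the range of the theorem.

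The genuine combinatorial content is packaged into Lemma~\ref{lem2}, whose proof in turn rests on the complementary-Young-diagram identity of Lemma~\ref{lem1}. The main subtlety to double-check in carrying out the above plan is the precise matching of Stanley's evaluation product to $\varphi_N(\nu,0)$ under the paper's conventions ($P$-normalisation with leading monomial $m_\nu$, and the relation $k=-1/\alpha$ to Stanley's parameter): after multiplying numerator and denominator by $(-k)^{|\nu|}$ the factors $n+\alpha a'(s)-l'(s)$ in Macdonald's form (VI.10.20) become $(j-1)+k(i-1-n)$, exactly matching the numerator in (\ref{var}). Once these conventions are aligned, the rational continuation in $p_0$ is immediate.
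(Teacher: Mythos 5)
Your proposal is correct and follows essentially the same route as the paper: reduce to large integer $p_0=N$ via $\varphi_N$ and the specialisation $x_i=1$, apply Stanley's evaluation formula to $P^{(k)}_{\nu}$, factorise with Lemma~\ref{lem2}, and conclude by rational continuation in $p_0$. Your additional remarks on the disappearance of the auxiliary shift $a$ and the matching of normalisation conventions with Stanley/Macdonald are sensible checks but do not change the argument.
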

\begin{proof}  
Denote by $r(\lambda,\mu,k)(p_0)$ the right hand side of the formula (\ref{eval}). According to Stanley \cite{Stanley} for the usual  Jack polynomials $P^{(k)}_{\lambda}$ we have $$\varepsilon_{p_0}(P^{(k)}_{\lambda})=\varphi_{p_0}(\lambda,0).$$ For fixed  $\lambda,\mu$  and $k$ (assumed to be generic) the evaluation $\varepsilon_{p_0}(P^{(k,p_0)}_{\lambda,\mu})$ is a rational function of $p_0.$ We need to prove that $\varepsilon_{p_0}(P^{(k,p_0)}_{\lambda,\mu})=r(\lambda,\mu,k)(p_0).$ Since both sides are rational functions we only need to verify this for large enough integers $p_0=N\in\Bbb Z_{>0}.$ But in that case we have
$$
\varepsilon_{N}(P^{(k,N)}_{\lambda,\mu})=\varepsilon_{N}(P_{\nu}(k,N))=\varphi_{N}(\nu,0)=r(\lambda,\mu,k)(N)
$$
according to lemma \ref{lem2}. 
\end{proof}

\begin{corollary} Jack--Laurent symmetric functions $P^{(k,p_0)}_{\alpha}$ satisfy the $\theta$-duality
$$
\theta^{-1}(P^{(k,p_0)}_{\lambda,\mu})=d_{\lambda,\mu}(k,p_0) P^{(k^{-1},kp_0)}_{\lambda',\mu'},
$$
where  as before $\theta$ is defined by $\theta(p_a)= k  p_a$ and 
\begin{equation}\label{duad}
d_{\lambda,\mu}(k,p_0)=\frac{\varepsilon_{p_0}(P^{(k,p_0)}_{\lambda,\mu})}{\varepsilon_{kp_0}(P^{(k^{-1},kp_0)}_{\lambda',\mu'})}.
\end{equation}
\end{corollary}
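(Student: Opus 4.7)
The plan is to derive the explicit formula for $d_{\lambda,\mu}(k,p_0)$ by starting from the existence of the $\theta$-duality already established in Corollary \ref{cor22} and then evaluating both sides via a suitably chosen specialization of the generators $p_a$.

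First, I would take the identity
\[
\theta^{-1}(P^{(k,p_0)}_{\lambda,\mu})=d_{\lambda,\mu}(k,p_0)\, P^{(k^{-1},kp_0)}_{\lambda',\mu'}
\]
from Corollary \ref{cor22} (whose existence rests on the simplicity of the spectrum of $\mathcal{D}(k)[p_0]$ together with the symmetry (\ref{sym12}) of the CMS integrals and the $\theta$-invariance of the eigenvalues, via Lemma \ref{form}). This already gives us the duality up to a scalar; what remains is to pin down the scalar.

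Next, I would apply the evaluation homomorphism $\varepsilon_{kp_0}: \Lambda^{\pm}\to\mathbb C$ to both sides. On the right one gets directly
\[
\varepsilon_{kp_0}\!\left(d_{\lambda,\mu}(k,p_0)\, P^{(k^{-1},kp_0)}_{\lambda',\mu'}\right)=d_{\lambda,\mu}(k,p_0)\,\varepsilon_{kp_0}\!\left(P^{(k^{-1},kp_0)}_{\lambda',\mu'}\right).
\]
On the left one uses the key observation that on generators $p_a$, $a\in\mathbb Z\setminus\{0\}$,
\[
(\varepsilon_{kp_0}\circ\theta^{-1})(p_a)=\varepsilon_{kp_0}(k^{-1}p_a)=k^{-1}\cdot kp_0 = p_0 = \varepsilon_{p_0}(p_a),
\]
so $\varepsilon_{kp_0}\circ\theta^{-1}=\varepsilon_{p_0}$ as algebra homomorphisms on $\Lambda^\pm$. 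Therefore the left hand side equals $\varepsilon_{p_0}(P^{(k,p_0)}_{\lambda,\mu})$, and solving yields
\[
d_{\lambda,\mu}(k,p_0)=\frac{\varepsilon_{p_0}(P^{(k,p_0)}_{\lambda,\mu})}{\varepsilon_{kp_0}(P^{(k^{-1},kp_0)}_{\lambda',\mu'})},
\]
which is the desired formula.

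The only point requiring a brief justification is that the denominator is non-zero, so that this equation determines $d_{\lambda,\mu}$ unambiguously. But by the Evaluation Theorem, $\varepsilon_{kp_0}(P^{(k^{-1},kp_0)}_{\lambda',\mu'})$ factors as a product of the explicit expressions $\varphi_{kp_0}(\lambda',0)\varphi_{kp_0}(\mu',0)\varphi_{kp_0}(\lambda',\mu',0)$, each of which is a ratio of linear factors in $k^{-1}$ and $kp_0$ that is generically non-vanishing under the genericity assumptions $k\notin\mathbb Q$ and $kp_0\neq m+nk$ already in force. I do not anticipate any real obstacle: the entire argument is essentially the evaluation of an existing scalar identity, once one notices the crucial identity $\varepsilon_{kp_0}\circ\theta^{-1}=\varepsilon_{p_0}$ which correctly matches the parameter $p_0$ on the left with $kp_0$ on the right.
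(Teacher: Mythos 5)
Your proposal is correct and follows essentially the same route as the paper: start from Corollary \ref{cor22}, apply $\varepsilon_{kp_0}$ to both sides, and use the identity $\varepsilon_{kp_0}\circ\theta^{-1}=\varepsilon_{p_0}$ (coming from $\theta^{-1}(p_i)=k^{-1}p_i$) to identify the left-hand side with $\varepsilon_{p_0}(P^{(k,p_0)}_{\lambda,\mu})$ and solve for $d_{\lambda,\mu}$. Your added check that the denominator is generically non-zero via the Evaluation Theorem is a reasonable extra precaution that the paper leaves implicit.
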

\begin{proof} 
We know from Corollary \ref{cor22} that $\theta^{-1}(P^{(k,p_0)}_{\lambda,\mu})=d_{\lambda,\mu} P^{(k^{-1},kp_0)}_{\lambda',\mu'}$ for some constants 
$d_{\lambda,\mu}.$ Applying to both sides the evaluation homomorphism $\varepsilon_{kp_0}$ and using $\theta^{-1}(p_i)=k^{-1}p_i$ we have
$$
\varepsilon_{kp_0}\circ\theta^{-1}(P^{(k,p_0)}_{\lambda,\mu})=\varepsilon_{p_0}(P^{(k,p_0)}_{\lambda,\mu}),
$$
and thus
$$
\varepsilon_{p_0}(P^{(k,p_0)}_{\lambda,\mu})=d_{\lambda,\mu} \varepsilon_{kp_0}(P^{(k^{-1},kp_0)}_{\lambda',\mu'}),
$$
which implies (\ref{duad}).
\end{proof}

\section{Symmetric bilinear form}

Let us fix the parameter $k$, which we assume in this section to be negative real.

We start with the finite-dimensional case. The original CMS operator is clearly formally self-adjoint with respect to the standard scalar product
$$(\psi_1, \psi_2)=\int \psi_1(z) \bar{\psi_2}(z) dz$$
with the standard Lebesgue measure $dz=dz_1\dots dz_N$ on $\mathbb R^N.$ After the gauge $\psi=f \Psi_0$ and change $x_j=e^{2z_j}$ we naturally come to the following symmetric bilinear form for the Laurent polynomials $f,g \in \Lambda_N^{\pm}$
\begin{equation}\label{scalla}
(f,g):=c_N(k) \int_{T^N}f(x)g^*(x) \Delta_N (x, k) dx
\end{equation}
where  $T^N$ is the complex torus with $|x_j|=1, \ i=1, \dots, N$, $dx$ is the Haar measure on $T^N,$ $g^*(x_1,\dots, x_N)=g(x^{-1}_1,\dots, x^{-1}_N)$, and
\begin{equation}\label{delt}
\Delta_N(x,k)=\prod_{i,j:j\neq i}^N(1-x_i/x_j)^{-k}
\end{equation}
(cf. Macdonald \cite{Ma}, p.383, who is using parameter $\alpha=-1/k$). 
The normalisation constant $c_N(k)$ is chosen in such a way that $(1,1)_N=1$:
$$c_N^{-1}(k)=\int_{T^N}\Delta_N (x, k) dx.$$
Note that for negative real $k$ the integral (\ref{scalla}) is clearly convergent for all Laurent polynomials $f,g$
and that on the Laurent polynomials with real coefficients (in particular, for Jack--Laurent polynomials with real $k$) the product (\ref{scalla}) coincides with the Hermitian scalar product
$$(f,g):=c_N(k) \int_{T^N}f(x)\bar g(x) \Delta_N (x, k) dx.$$

Since the eigenfunctions of a self-adjoint operator are orthogonal the Jack polynomials $P^{(k)}_{\lambda}(x_1,\dots,x_N)$
 are orthogonal with respect to the product (\ref{scalla}). Using formulae (10.37), (10.22) from \cite{Ma} we have
$$
(P^{(k)}_{\lambda}, P^{(k)}_{\lambda})=\prod_{(i,j)\in \lambda}\frac{\lambda_i-j-k(\lambda_j'-i)+1}{\lambda_i-j-k(\lambda_j'-i)-k}
\frac{j-1+k(i-1)-kN}{j+ki-kN},
$$ 
 which can be rewritten in our notations as
\begin{equation}\label{nota}
(P^{(k)}_{\lambda}, P^{(k)}_{\lambda})=\frac{\varphi_{N}(\lambda,0)}{\varphi_{N}(\lambda,1+k)}.
\end{equation}

We can extend now this formula to the Jack--Laurent polynomials $P^{(k)}_{\chi}$ for any integer non-decreasing sequence $\chi=(\chi_1, \dots, \chi_N)$ by adding a large $a$ to all its parts to make them positive. Note that both $\varphi_{N}(\lambda,0)$ and $\varphi_{N}(\lambda,1+k)$ do not change under this operation and that the integral $\int_{T^N}f(x)f^*(x) \Delta (x, k) dx$ is invariant under $f(x) \rightarrow (x_1\dots x_N)^a f(x)$, so this procedure is well-defined.

Now let's look at the infinite-dimensional case.

\begin{thm} 
There exists a unique symmetric bilinear form $( \, , \,)_{p_0}$ on $\Lambda^{\pm}$ rationally depending on $p_0$ such that
Jack--Laurent symmetric functions $P^{(k,p_0)}_{\alpha}$ are orthogonal and
\begin{equation}\label{sca}
(\varphi_N(P^{(k,N)}_{\alpha}),\varphi_N(P^{(k,N)}_{\alpha}))=(P^{(k,N)}_{\alpha},P^{(k,N)}_{\alpha})_N
\end{equation}
for all sufficiently large $N,$ where the product in the left hand side is defined by (\ref{scalla}). The corresponding square norm of the Jack--Laurent symmetric function $P^{(k,p_0)}_{\alpha}$ with bipartition $\alpha=(\lambda,\mu)$ is equal to
\begin{equation}\label{sca2}
(P^{(k,p_0)}_{\alpha}, P^{(k,p_0)}_{\alpha})_{p_0}=\frac{\varphi_{p_0}(\lambda,0)\varphi_{p_0}(\mu,0)\varphi_{p_0}(\lambda,\mu,0)}{\varphi_{p_0}(\lambda,1+k)\varphi_{p_0}(\mu,1+k)\varphi_{p_0}(\lambda,\mu,1+k)}.
\end{equation}
\end{thm}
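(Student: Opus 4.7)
The plan is first to settle uniqueness, then to construct the form explicitly, and finally to verify the compatibility (\ref{sca}) using the factorization Lemma \ref{lem2}.

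For uniqueness, I would use the fact (stated in the remark after Theorem \ref{th41}) that $\{P^{(k,p_0)}_\alpha\}$ is a basis of $\Lambda^\pm$. Any symmetric bilinear form in which these functions are orthogonal is determined by the values $n_\alpha(p_0) := (P^{(k,p_0)}_\alpha, P^{(k,p_0)}_\alpha)_{p_0}$. Since the form is required to depend rationally on $p_0$, and the condition (\ref{sca}) prescribes $n_\alpha$ at $p_0 = N$ for every sufficiently large integer $N$, the rational functions $n_\alpha(p_0)$ are uniquely determined.

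For existence, I would define the form by
\[
(P^{(k,p_0)}_\alpha, P^{(k,p_0)}_\beta)_{p_0} := \delta_{\alpha\beta}\, R_\alpha(p_0),
\]
where $R_\alpha(p_0)$ denotes the right hand side of (\ref{sca2}), and then verify compatibility. Fix $\alpha = (\lambda,\mu)$, pick $N \geq l(\lambda)+l(\mu)$ and $a \geq \mu_1$, and set $\nu = (\lambda_1+a,\dots,\lambda_r+a,a,\dots,a,a-\mu_s,\dots,a-\mu_1) \in \mathcal{P}$, so that $\varphi_N(P^{(k,N)}_\alpha) = (x_1\cdots x_N)^{-a}P^{(k)}_\nu$ by (\ref{jackchi}). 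Since multiplication by $(x_1\cdots x_N)^{\pm a}$ cancels between $f$ and $g^*$ in the integrand of (\ref{scalla}), the finite-dimensional norm is unchanged, and (\ref{nota}) gives
\[
(\varphi_N(P^{(k,N)}_\alpha),\varphi_N(P^{(k,N)}_\alpha))_N \;=\; (P^{(k)}_\nu,P^{(k)}_\nu)_N \;=\; \frac{\varphi_N(\nu,0)}{\varphi_N(\nu,1+k)}.
\]
Applying Lemma \ref{lem2} at $x=0$ and at $x=1+k$ splits each factor as $\varphi_N(\nu,x) = \varphi_N(\lambda,x)\varphi_N(\mu,x)\varphi_N(\lambda,\mu,x)$, which precisely yields $R_\alpha(N)$.

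To conclude, I would observe that $R_\alpha(p_0)$ is manifestly a rational function of $p_0$ (inspect (\ref{var}) and (\ref{phiphi})), that it is independent of the auxiliary shift $a$ used in the argument, and that it agrees with $(\varphi_N(P^{(k,N)}_\alpha),\varphi_N(P^{(k,N)}_\alpha))_N$ at every sufficiently large integer $p_0 = N$. Hence the form defined by orthogonality of the $P^{(k,p_0)}_\alpha$ together with $n_\alpha(p_0) = R_\alpha(p_0)$ satisfies (\ref{sca}), proving existence.

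The main obstacle is the factorization step: the identity (\ref{nota}) is for one Young diagram $\nu$, while (\ref{sca2}) is a product over the two diagrams $\lambda,\mu$ plus a "cross term" $\varphi_{p_0}(\lambda,\mu,x)$. Bridging these requires exactly the content of Lemma \ref{lem2}, which was established for this purpose; the rest of the argument is a formal extension of an identity valid at $p_0 = N$ for all large $N$ to an identity of rational functions in $p_0$, relying on the orthogonality of Jack--Laurent functions that follows from the simplicity of the CMS spectrum (Theorem \ref{simpl}) and the self-adjointness of the stable integrals.
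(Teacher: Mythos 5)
Your proposal is correct and follows essentially the same route as the paper: uniqueness via rationality in $p_0$ from the values at large integers $N$, and existence by defining the form through orthogonality with norms given by (\ref{sca2}), then checking agreement at $p_0=N$ using the invariance of (\ref{scalla}) under multiplication by powers of $x_1\cdots x_N$, the finite-dimensional norm formula (\ref{nota}), and the factorization of Lemma \ref{lem2}. The extra remarks you add (the basis property justifying that orthogonality plus norms determine the form, and the cancellation of the determinant factor in the integrand) are correct elaborations of steps the paper leaves implicit.
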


\begin{proof}  
The uniqueness is obvious since the rational function is determined by its values at sufficiently large integers.

To prove the existence we simply check that the formula (\ref{sca2}) defines the symmetric bilinear form satisfying (\ref{sca}). 
We have according to (\ref{jlsf}) that 
$$\varphi_N(P^{(k,N)}_{\alpha})=(x_1\dots x_N)^{-a} P^{(k)}_{\nu}(x_1,\dots, x_N),
$$
so that
$$
 (\varphi_N(P^{(k,N)}_{\alpha}),\varphi_N(P^{(k,p_0)}_{\alpha}))=(P^{(k)}_{\nu},P^{(k)}_{\nu}).
 $$
By (\ref{nota}) we have 
$$
(P^{(k)}_{\nu}, P^{(k)}_{\nu})=\frac{\varphi_{N}(\nu,0)}{\varphi_{N}(\nu,1+k)},
$$ 
 which by formula (\ref{lemm2}) from lemma \ref{lem2} coincides with the right hand side of (\ref{sca2}) for $p_0=N.$
 \end{proof}
 
Note that in contrast to the usual Jack case \cite{Ma} the bilinear form $( \, , \,)_{p_0}$ is {\it not positive definite} on real Laurent symmetric functions, as it follows from (\ref{sca2}). In order to have positive definite form one should send $p_0$ to infinity, see the last section.
 
\section{Special case $k=-1:$ Schur--Laurent symmetric functions}

The case  $k=-1$ is very important for representation theory of Lie superalgebra $\mathfrak{gl}(n,m)$ (see \cite{MVDJ2,CW}).
In this case the corresponding Jack--Laurent symmetric functions (whose existence is not obvious) do not depend on $p_0$, as one can see already in the simplest case
$$P^{(k,p_0)}_{1,1}= p_1 p_{-1} - \frac{p_0}{1+k-kp_0}.$$

 \begin{proposition}\label{slam}
The limit $$S_{\lambda,\mu}:=\lim_{k \to -1} P^{(k,p_0)}_{\lambda,\mu}$$
does exist for generic $p_0$ and does not depend on $p_0.$ 
\end{proposition}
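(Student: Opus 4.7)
The plan is to prove existence and $p_0$-independence together, built on a remarkable cancellation in the Pieri coefficients at $k=-1$. In formulas (\ref{V1})--(\ref{V2}) for $V(x,\alpha)$ and $U(y,\alpha)$ the four shift parameters $\{1,-2k,-k,1-k\}$ appearing in the $c_\lambda$-ratio collapse at $k=-1$ to $\{1,2,1,2\}$, so the numerator pair is identical to the denominator pair and the ratio equals $1$. The same pairwise identification occurs in the $c_\mu$-factors (shifts $\{1+k,-k,1,0\}\to\{0,1,1,0\}$), in the $p_0$-dependent $c_\alpha$-factors (shifts $\{-1-k(p_0+2),-kp_0,-1-k(p_0+1),-k(p_0+1)\}\to\{p_0+1,p_0,p_0,p_0+1\}$), and in the final prefactor of (\ref{V2}). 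Hence $V(x,\alpha),U(y,\alpha)\to 1$ as $k\to-1$ independently of $p_0$, and in the limit the Pieri formula (\ref{pieri1}) becomes $p_1 S_{\lambda,\mu}=\sum_x S_{\lambda+x,\mu}+\sum_y S_{\lambda,\mu-y}$.

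For existence of $S_{\lambda,\mu}$ I would induct on $|\lambda|+|\mu|$ using the recursive construction from Theorem \ref{th41}:
$$P^{(k,p_0)}_{\beta}=V(\alpha,\beta)^{-1}R_{\alpha}(p_0,\mathcal L^{(2)}_{k,p_0},s)(p_1 P^{(k,p_0)}_{\alpha}),\qquad s=e_{p_0}(\beta).$$
The base case $\lambda=\emptyset$ is immediate since $P^{(k,p_0)}_{\emptyset,\mu}=P^{(k)*}_{\mu}$ tends to the $*$-image of the Schur function $s_\mu$, which is $p_0$-independent. In the inductive step $V(\alpha,\beta)^{-1}\to 1$ by the cancellation above, and the poles $s-e_{p_0}(\gamma)$ of $R_\alpha$ stay away from $k=-1$ for generic $p_0$: for $\gamma\in X(\alpha)$ one computes $s-e_{p_0}(\gamma)|_{k=-1}=2((\lambda_i-i)-(\lambda_j-j))\neq 0$ because $\lambda_r-r$ is strictly decreasing on partitions, and for $\gamma\in Y(\alpha)$ the difference contains a term $2p_0$ which is nonzero for generic $p_0$.

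To obtain $p_0$-independence, the individual $S$-terms in the limiting Pieri formula still a priori depend on $p_0$, so I would compare $S_{\lambda,\mu}(p_0)$ with an explicit $p_0$-free candidate $\widetilde S_{\lambda,\mu}\in\Lambda^{\pm}$ defined by a Jacobi--Trudi-type determinant in the complete symmetric functions $h_a\in\Lambda^+$ and their $*$-images $h_a^*\in\Lambda^-$. One checks that $\varphi_N(\widetilde S_{\lambda,\mu})=s_{\chi_N(\lambda,\mu)}(x_1,\dots,x_N)$ for $N\ge l(\alpha)$ (the classical Jacobi--Trudi identity for composite Schur polynomials referred to in the introduction) and $\varphi_N(\widetilde S_{\lambda,\mu})=0$ for $N<l(\alpha)$ (a column dependence in the determinant when the total length exceeds the number of variables). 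The same projections hold for $\varphi_N(S_{\lambda,\mu})$ because $P^{(k)}_{\chi_N}\to s_{\chi_N}$ as $k\to-1$ in each fixed dimension. Writing both $S_{\lambda,\mu}$ and $\widetilde S_{\lambda,\mu}$ in the free power-sum basis $\{p_\nu p_{-\sigma}\}$ of $\Lambda^{\pm}$, the coefficients are rational in $p_0$ and agree at every sufficiently large integer $p_0=N$ (since for such $N$ the images $\varphi_N(p_\nu p_{-\sigma})$ with $|\nu|+|\sigma|\le|\lambda|+|\mu|$ are linearly independent in $\Lambda^{\pm}_N$). A version of Lemma \ref{edin} obtained by clearing denominators then forces $S_{\lambda,\mu}=\widetilde S_{\lambda,\mu}$, which is manifestly $p_0$-free.

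The main obstacle I expect is the vanishing $\varphi_N(\widetilde S_{\lambda,\mu})=0$ for $N<l(\alpha)$ in the Jacobi--Trudi identification; this is a combinatorial fact about composite Schur polynomials that needs to be extracted cleanly. A tempting alternative would be to iterate the limit Pieri formula directly, but disentangling the individual $S_{\lambda+x,\mu}$ and $S_{\lambda,\mu-y}$ from $p_1 S_{\lambda,\mu}$ requires a spectral projector built from $\mathcal L^{(2)}_{-1,p_0}$ that is itself $p_0$-dependent (and the spectrum can degenerate at exceptional values of $p_0$); the Jacobi--Trudi route sidesteps this by supplying a manifestly $p_0$-free model with the correct finite-dimensional projections.
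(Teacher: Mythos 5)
Your proposal is correct and follows essentially the same route as the paper: existence by induction on $|\lambda|$ via the Pieri recursion, isolating $S_{\lambda+x,\mu}$ with the spectral projector built from $\mathcal L^{(2)}_{k,p_0}$ (checking that the eigenvalue differences $2((\lambda_i-i)-(\lambda_j-j))$ and the terms containing $2p_0$ stay nonzero at $k=-1$ for generic $p_0$, and that the Pieri coefficients tend to $1$), with $p_0$-independence deferred to the Laurent Jacobi--Trudy formula. Your extra observations (the explicit cancellation in $U(y,\alpha)$ and the remark that only agreement of $\varphi_N$-projections for large $N$ is needed, so the $N<l(\alpha)$ vanishing is not actually required here) are sound refinements of the same argument.
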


We call  $S_{\lambda,\mu}$ the {\it  Schur--Laurent symmetric functions}. The image of these functions under the homomorphism $\varphi_N$ coincide with the {\it symmetric Schur polynomials $s_{\bar \mu, \lambda}(x)$ indexed by a composite partition} $\bar \mu; \lambda$ (see \cite{Moens} for a brief history of these polynomials and their role in representation theory).  Here are the two simplest examples of Schur-Laurent symmetric functions
$$S_{1,1}=p_1p_{-1}-1, \,\,\, S_{1^2,1}=\frac{1}{2}(p_1^2-p_2)p_{-1}-p_1.$$

\begin{proof}  Use the induction on $|\lambda|$. When $\lambda =\emptyset$ then $P^{(k,p_0)}_{\emptyset,\mu}=P^{(k)*}_{\mu}$, where 
$P^{(k)*}_{\mu}$ is the usual Jack symmetric function. It is well known that $P^{(-1)}_{\mu}$ is well-defined (recall that $k=-1$ corresponds to $\alpha=1$ in Jack's notations) and coincide with Schur symmetric function $S_{\mu}$ (see e.g \cite{Ma}). 

To prove the induction step one can use the Pieri formula (\ref{pieri1}). The left hand side is well defined at $k=-1$ by induction assumption. Restrict the CMS operator $\mathcal L_{k, p_0}$ onto the invariant subspace generated by the linear combinations of the Jack--Laurent symmetric functions in the right hand side of Pieri formula for generic values of the parameters. One can check analysing proof of Theorem 3.1 that for  $k=-1$ and generic $p_0$ the corresponding eigenvalues $E_1, \dots, E_k$ are distinct. This means that the component $V(x, \alpha)P_{\lambda+x,\mu}=Q(\mathcal L_{k, p_0})(p_1P_{\lambda,\mu})$ with polynomial 
$$Q(E)=C \prod_{j=2}^k (E-E_j), \,\, C^{-1}=\prod_{j=2}^k (E_1-E_j),$$ 
where $E_1$ is the eigenvalue corresponding to $P_{\lambda+x,\mu}.$ Since $\mathcal L_{k, p_0}$ is polynomial in parameters and $E_1\neq E_j$
the product $V(x, \alpha)P_{\lambda+x,\mu}$ is well defined for $k=-1$ and generic $p_0.$ Since the coefficients $V(x, \alpha)$ tend to 1 when $k \to -1$ this means that $P_{\lambda+x,\mu}$ is well-defined as well. This proves the existence of Schur--Laurent symmetric functions for generic $p_0$; their independence on $p_0$ follows from the Laurent version of Jacobi--Trudy formula below.
 \end{proof}

Let $h_i \in \Lambda \subset \Lambda^{\pm}, \,i \in \mathbb Z$ be the complete symmetric functions \cite{Ma} for $i \geq 0$ and $h_i =0$ for $i<0.$ Define 
$h_i^*$ as the image of $h_i$ under the $*$-involution in $\Lambda^{\pm}.$

 \begin{thm} \label{main123}
 The Schur-Laurent symmetric functions $S_{\alpha}, \, \alpha=(\lambda,\mu)$ can be given by the following Jacobi--Trudy formula
 as $(r+s)\times(r+s)$ determinant, where $r=l(\lambda),\, s=l(\mu)$ are the number of parts in $\lambda$ and $\mu$:
\begin{equation}
\label{JT}
S_{\alpha}=\left|\begin{array}{cccc}
   h^*_{\mu_{s}}&h^{*}_{\mu_{s}-1}& \ldots &h^{*}_{\mu_{s}-s-r+1}\\
\vdots&\vdots&\ddots&\vdots\\
  h^*_{\mu_{1}+s-1}&h^{*}_{\mu_{1}+s-2}& \ldots &h^{*}_{\mu_{1}-r}\\
  h_{\lambda_{1}-s}&h_{\lambda_{1}-s+1}& \ldots &h_{\lambda_{1}+r-1}\\
\vdots&\vdots&\ddots&\vdots\\
  h_{\lambda_{r}-s-r+1}&h_{\lambda_{r}-s-r+2}& \ldots &h_{\lambda_{r}}\\
 \end{array}\right|
 \end{equation}
\end{thm}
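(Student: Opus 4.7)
The plan is to reduce the identity to a finite-dimensional statement via the homomorphisms $\varphi_N$ for $N \geq r+s$, apply the classical Jacobi--Trudy formula in that setting, and lift back to $\Lambda^\pm$ by the uniqueness argument of Lemma \ref{edin}.

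First, I would observe that the right-hand side $D_\alpha$ of \eqref{JT} is a well-defined element of $\Lambda^\pm$ that depends neither on $k$ nor on $p_0$, since $h_i \in \Lambda$ and $h_i^* \in \Lambda^-$ are intrinsic generators. Hence the asserted equality $D_\alpha = S_\alpha$ will automatically reconfirm the $p_0$-independence of $S_\alpha$ from Proposition \ref{slam}.

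Next, I would fix $N \geq r+s$ and choose an integer $a \geq \mu_1$ so that
$$\nu \;=\; \chi_N(\alpha) + (a,\dots,a) \;=\; \bigl(\lambda_1+a,\,\dots,\,\lambda_r+a,\;\underbrace{a,\dots,a}_{N-r-s},\;a-\mu_s,\,\dots,\,a-\mu_1\bigr)$$
is a genuine partition. By Proposition \ref{slam} and the definition of Jack--Laurent symmetric functions,
$$\varphi_N(S_\alpha) \;=\; P^{(-1)}_{\chi_N(\alpha)}(x) \;=\; (x_1 \cdots x_N)^{-a}\, s_\nu(x_1,\dots,x_N),$$
and the classical Jacobi--Trudy formula expresses $s_\nu$ as $\det(h_{\nu_i-i+j})_{1\leq i,j\leq N}$. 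I would then verify that $\varphi_N(D_\alpha)$ equals this same expression, either by citing the composite-partition Jacobi--Trudy formula for the symmetric Schur polynomial $s_{\bar\mu;\lambda}(x_1,\dots,x_N)$ from \cite{Moens, CW}, or by a direct derivation using the generating-function identity
$$\sum_{m\geq 0}\varphi_N(h_m^*)\,t^m \;=\; \prod_{i=1}^N \frac{1}{1-x_i^{-1}t}$$
together with row and column operations that absorb the factor $(x_1\cdots x_N)^{-a}$ into the top $s$ rows of the Jacobi--Trudy matrix for $\nu$, converting them to $h^*$ entries with the indices prescribed in \eqref{JT}.

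Finally, since $\varphi_N(D_\alpha)=\varphi_N(S_\alpha)$ for all $N \geq r+s$ and both sides are polynomials in finitely many $p_r$ (with $r \in \mathbb Z\setminus\{0\}$) independent of $p_0$, the same reasoning as in Lemma \ref{edin} (using algebraic independence of $\varphi_N(p_r)$ for $N$ larger than twice the maximum $|r|$ involved) forces $D_\alpha - S_\alpha = 0$. The hard part will be the middle step: carrying out cleanly the absorption of the monomial factor $(x_1\cdots x_N)^{-a}$ and the reindexing needed to recognize the resulting $(r+s)\times(r+s)$ determinant as \eqref{JT}, i.e.\ establishing the classical composite Jacobi--Trudy identity, either by citation or by a Lindstr\"om--Gessel--Viennot argument on nonintersecting lattice paths.
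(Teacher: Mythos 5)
Your proposal is correct and follows essentially the same route as the paper: apply $\varphi_N$ for $N\ge r+s$, identify $\varphi_N(S_\alpha)$ with $(x_1\cdots x_N)^{-a}s_\nu$ for a suitable shift $a\ge\mu_1$, invoke the composite-partition Jacobi--Trudy identity of Cummins--King (as reproduced in Moens), and lift back by the uniqueness argument of Lemma \ref{edin}. The only difference is that you spell out the final lifting step and sketch an alternative direct derivation of the composite identity, both of which the paper leaves implicit.
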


\begin{proof}  
Applying the homomorphisms $\varphi_N: \Lambda^{\pm} \rightarrow \Lambda_N^{\pm}$ we have in the left hand side
by definition $$\varphi_N(S_{\lambda,\mu})=(x_1\dots x_N)^a S_{\nu}(x_1, \dots, x_N),$$
where $\nu_i=\chi_N(\alpha)_i +a$ with any integer $a\ge \mu_1$ and $\chi_N(\alpha)$ defined by (\ref{chiN}).
Now the proof follows from the results of Cummins and King (see formulae (3.7), (3.8) in \cite{CK} or (1.21),(1.23) in \cite{Moens}), 
who used the language of composite Young diagrams. 
\end{proof}

\section{Some conjectures and open questions}

The usual Jack symmetric functions can be defined using the following scalar product in $\Lambda$ defined in the standard basis $p_{\lambda}=p_{\lambda_1}p_{\lambda_2}\dots$  by
$$<p_{\lambda}, p_{\mu}>=(-k)^{-l(\lambda)} \prod_{j\geq 1} j^{m_j}m_j! \delta_{\lambda,\mu},$$
where $m_j$ is the number of parts of $\lambda$ equal to $j$ (see \cite{Ma}, p. 305).
It is known (see e.g. \cite{Ma}, p. 383) that this scalar product is the limit of the scalar product (\ref{sca2}) restricted on $\Lambda$ when $p_0 \rightarrow \infty.$ An interesting question is what happens on $\Lambda^{\pm}.$ 

We believe that the limit $( \, , \,)_{\infty}$ of the indefinite bilinear form (\ref{sca2}) does exist and is positive definite on real Laurent symmetric functions for real negative $k.$ 
More precisely, we conjecture that the limits of the Jack--Laurent symmetric functions $$P^{(k,\infty)}_{\alpha}:=\lim_{p_0 \to \infty}P^{(k,p_0)}_{\alpha}$$  exist for all $k \notin \mathbb Q.$ Then by (\ref{sca2}) they would provide an orthogonal basis in $\Lambda^{\pm}$ with 
\begin{equation}\label{sca3}
(P^{(k,\infty)}_{\alpha}, P^{(k,\infty)}_{\alpha})_{\infty}=\Phi(\lambda,k)\Phi(\mu,k),
\end{equation}
where
\begin{equation}\label{Phi}
\Phi(\lambda,k)=\prod_{(i,j)\in\lambda}\frac{\lambda_i-j+1+k(i-\lambda'_j)}{\lambda_i-j+k(i-1-\lambda'_j)},
\end{equation}
which can be checked to be positive for $k<0$ and all $\lambda.$

Note that in contrast to the Jack polynomial case the Laurent polynomials  $p_{\lambda,\mu}=p_{\lambda}p_{\mu}$, as well as the products of Jack symmetric functions $P^{(k)}_{\lambda}P^{(k)*}_{\mu},$ are {\it not orthogonal} with respect to $( \, , \,)_{\infty}$. What are the transition matrices between these bases and the Jack--Laurent basis $P^{(k,\infty)}_{\lambda,\mu}$ ?

In the theory of Jack symmetric functions \cite{Ma} it is known that the product $A(\lambda)P^{(k)}_{\lambda}$ 
with $$A(\lambda)=\prod_{(ij)\in \lambda}(\lambda_i-j+k(i-1-\lambda'_j))$$ depends on $k$ polynomially.
We conjecture that a similar fact is true for Jack--Laurent symmetric functions, namely that
the product 
\begin{equation}\label{J}
J^{(k,p_0)}_{\lambda,\mu}:=A(\lambda,\mu)A(\lambda)A(\mu)P^{(k,p_0)}_{\lambda,\mu}
\end{equation}
 is polynomial in $k$ and $p_0$, where
$$A(\lambda,\mu)=\prod_{i=1}^{l(\lambda)}\prod_{j=1}^{l(\mu')}(j-1+k(i-1-p_0))(\lambda_i+j-1+k(i-1+\mu'_j-p_0)).$$
A weaker version of this conjecture is that the product $A(\lambda,\mu)P^{(k,p_0)}_{\lambda,\mu}$ is polynomial in $p_0.$

The case of special parameters $k$ and $p_0$ with $p_0=n+k^{-1}m$ is very important for the representation theory 
of Lie superalgebras and is discussed in our paper \cite{SV8}.

\section{Acknowledgements}

This work was partially supported by the EPSRC (grant EP/J00488X/1). ANS is grateful to Loughborough University for the hospitality during the autumn semesters 2012-14. He also acknowledges the financial support from the Government of the Russian Federation within
the framework of the implementation of the 5-100 Programme Roadmap of the National Research University Higher School of Economics.

\end{document}